\newcommand{\OMIT}[1]{}
\newcommand{\defn}[1]{\textit{#1}} 
\newcommand{\mcl}[1]{\ensuremath{\mathcal{#1}}}
\newcommand{\sem}[1]{[\![#1 ]\!]}
\newcommand{\True}{\ensuremath{\top}}
\newcommand{\struct}{\mathfrak{S}}
\newcommand{\Tree}{\ensuremath{\text{\scshape Tree}}}
\newcommand{\transys}{\ensuremath{\langle S; \to\rangle}}
\newcommand{\transysG}{\ensuremath{\langle S; \{\to\}_{a \in \ACT}\rangle}}
\newcommand{\Rec}{\ensuremath{\text{\scshape Rec}}}
\newcommand{\empseq}{\ensuremath{\epsilon}}
\newcommand{\pda}{\ensuremath{\mcl{P}}}
\newcommand{\ialphabet}{\ensuremath{\Sigma}}
\newcommand{\salphabet}{\ensuremath{\Gamma}}
\newcommand{\palphabet}{\ensuremath{\Gamma}}
\newcommand{\dalphabet}{\ensuremath{\Omega}}
\newcommand{\N}{\ensuremath{\mathbb{N}}}
\newcommand{\Z}{\ensuremath{\mathbb{Z}}}
\newcommand{\vecV}{\ensuremath{\text{\rm \bfseries v}}}
\newcommand{\problemx}[3]{
\par\noindent\underline{\sc#1}\par\nobreak\vskip.2\baselineskip
\begingroup\clubpenalty10000\widowpenalty10000
\setbox0\hbox{\bf Instance: }\setbox1\hbox{\bf Question: }
\dimen0=\wd0\ifnum\wd1>\dimen0\dimen0=\wd1\fi
\vskip-\parskip\noindent
\hbox to\dimen0{\box0\hfil}\hangindent\dimen0\hangafter1\ignorespaces#2\par
\vskip-\parskip\noindent
\hbox to\dimen0{\box1\hfil}\hangindent\dimen0\hangafter1\ignorespaces#3\par
\endgroup}
\newcommand{\ACT}{\ensuremath{\text{\sf ACT}}}
\newcommand{\Aut}{\ensuremath{\mathcal{A}}} 
\newcommand{\transrel}{\ensuremath{\Delta}} 
\newcommand{\tran}[1]{\ensuremath{\stackrel{#1}{\longrightarrow}}}
\newcommand{\controls}{\ensuremath{Q}} 
\newcommand{\finals}{\ensuremath{F}} 
\newcommand{\Lang}{\ensuremath{\mathcal{L}}} 
\newcommand{\Parikh}[1]{\ensuremath{\mathbb{P}(#1)}} 
\newcommand{\Label}[1]{\ensuremath{\text{\textsc{lab}}(#1)}} 
\newcommand{\AutRun}{\ensuremath{\rho}} 
\newcommand{\arity}{\ensuremath{\text{\texttt{ar}}}}
\newcommand{\rarw}{\rightarrow}
\newcommand{\ModelRun}{\ensuremath{\pi}}
\newcommand{\Size}[1]{\ensuremath{\|#1\|}}
\newcommand{\FINAL}{\ensuremath{\text{\scshape FINAL}}}
\newcommand{\Distinct}{\ensuremath{\text{\scshape distinct}}}
\newcommand{\Edge}{\ensuremath{\text{\scshape edge}}}
\newcommand{\DEC}{\ensuremath{\text{\texttt{DEC}}}}
\newcommand{\INC}{\ensuremath{\text{\texttt{INC}}}}
\newcommand{\x}{\ensuremath{\text{\texttt{x}}}}
\newcommand{\y}{\ensuremath{\text{\texttt{y}}}}
\newcommand{\permgroup}{\ensuremath{\mathcal{S}}}
\newcommand{\rotgroup}{\ensuremath{\mathcal{R}}}
\newcommand{\dihgroup}{\ensuremath{\mathcal{D}}}
\newcommand{\padding}{\ensuremath{\#}}
\newcommand{\biword}[2]{(#1,#2)}
\newif\ifdraft\draftfalse
\newcommand\anthony[1]{{\color{blue}
[#1 - \textbf{Anthony}]}}
\newcommand\philipp[1]{{\color{magenta}
[#1 - \textbf{Philipp}]}}
\newcommand\sunjun[1]{{\color{brown}
[#1 - \textbf{Sun Jun}]}}
\newcommand\khanh[1]{{\color{purple}
[#1 - \textbf{Khanh}]}}
\newcommand\anthonychanged[1]{{\color{blue}{#1}}}
\newcommand\todo[1]{}
\newcommand\anthony[1]{}
\newcommand\philipp[1]{}
\newcommand\khanh[1]{}
\newcommand\sunjun[1]{}
\newcommand\anthonychanged[1]{#1}
\newcommand\shortlong[2]{#2}
\title{Regular Symmetry Patterns (Technical Report)}
\author{
    Anthony W. Lin\inst{1} \and
    Truong Khanh Nguyen\inst{2} \and
    Philipp R\"ummer\inst{3} \and
    Jun Sun\inst{4}
}
\institute{
    Yale-NUS College, Singapore \and
    Autodesk, Singapore \and
    Uppsala University, Sweden \and
    Singapore University of Design and Technology
}
\date{}
\begin{document}

\maketitle

\begin{abstract}
    Symmetry reduction is a well-known approach for alleviating the state explosion 
problem in model checking. Automatically identifying 
symmetries in concurrent systems, however, is computationally
expensive. 
We propose a symbolic framework for capturing symmetry patterns in
parameterised systems (i.e. an infinite family of finite-state systems):
two regular word transducers to represent, respectively, parameterised systems 
and symmetry patterns. The framework subsumes various 
types of ``symmetry
relations'' ranging from weaker notions (e.g. simulation preorders) to the 
strongest notion (i.e. isomorphisms). Our framework enjoys two algorithmic
properties: (1) symmetry verification: given a transducer, 
we can automatically check whether it is a symmetry pattern of a given 
system, and (2) symmetry synthesis: we can automatically generate 
a symmetry pattern for a given system in the form of a transducer.
Furthermore, our symbolic language allows additional constraints that the 
symmetry patterns need to satisfy to be easily incorporated in the
verification/synthesis. We show how these properties can help identify 
symmetry patterns in examples like dining philosopher protocols, 
self-stabilising protocols, and prioritised resource-allocator protocol.
In some cases (e.g. Gries's coffee can problem), our technique automatically
synthesises a safety-preserving finite approximant, which can then be
verified for safety solely using a finite-state model checker.

\OMIT{
We provide two applications. First, since common process symmetry patterns 
(including 
rotations, transpositions, and reflections) can be captured within the 
framework, Property (1) lends itself to a fast identification of process 
symmetry patterns in a parameterised system. 
We demonstrate 
its effectiveness for examples like dining 
philosopher protocols, self-stabilising protocols, and producer-consumer 
protocols with buffers.
%
Second, our symmetry synthesis algorithm (based on automata methods and
SAT-solvers) can be used to efficiently generate subtle symmetry patterns,
e.g., those that yield safety-preserving finite approximants.
We demonstrate how this can be used to automatically verify Gries's coffee
can (infinite-state verification) problem.
}
%

\OMIT{
Such an infinite
class of symmetries can be captured as a single entity called parameterised
symmetry. In this paper, we propose a powerful framework based on transducers to
symbolically represent parameterised symmetries. The framework is amenable to
algorithmic analysis: given a conjecture of a parameterised symmetry for our
classconcurrent systems (e.g.
ring topology, or 
}
\OMIT{
In this paper, we propose a general approach
for automatically inferring symmetries of larger instances of the concurrent
systems (e.g., with more processes) with the help of symmetries for smaller
instances of the concurrent systems (e.g., up to five processes).
More precisely, our framework attempts to capture generalisations of symmetries for
smaller instances as a single parameterised symmetry, i.e., a symbolic 
representation, based on letter-to-letter finite-state transducers, 
which represent infinitely many symmetries of all instances of the concurrent 
system (one symmetry for each instance) as a single finite object. 
Finite instances of parameterised symmetries, which are used in explicit-state
model checking of concurrent systems, can be automatically extracted from
the transducers.
We demonstrate that parameterised symmetries are amenable to 
fully-automatic analysis from checking validity to generation of parameterised
symmetries. Our technique utilises both automata-based methods and SAT-solvers
to automatically generate valid parameterised symmetries. 
We have implemented our algorithm and demonstrated its usefulness in generating
parameterised symmetries for well-known examples like the dining philosopher model, 
and self-stabilising protocols. 
}


\OMIT{
More precisely, most concurrent systems arise by replication, which can create
larger instances of the systems from smaller instances (e.g. Dining Philosopher
protocols with $n$ philosophers, for any given number $n$). 
The resulting class of all instances of concurrent systems that are created by 
replication is often called a parameterised system. 
}
\OMIT{
The benefit of transducer 
representation is its algorithmic properties: 

to capture symmetries for all instances of a parameterised system in a uniform
way
capture symmetries

The resulting class of all instances of concurrent systems that are created by 
replication (a.k.a. parameterised systems) can be captured by a single 
synchronised finite-state input/output transducer.

A class of all 
instances of concurrent systems that are created by replication are often called 
parameterised systems. 

we start with a well-known
observation that most concurrent systems can be modeled as parameterised systems, 
each of whose instances has 
has a 

inferring symmetries for all the 
}

\end{abstract}

\section{Introduction}
\label{sec:intro}
Symmetry reduction~\cite{CJEF96,Ip1996,ES96} is a well-known approach for 
alleviating the state explosion problem in automatic verification of concurrent
systems. 
The essence of symmetry reduction is to identify symmetries in the system and 
avoid exploring states that are ``similar'' (under these symmetries) to 
previously explored states. 

One main challenge with symmetry reduction methods is the 
difficulty in identifying symmetries in a given system in general. 
One approach is to provide dedicated language instructions for specifying 
symmetries (e.g. see \cite{Ip1996,Sistla2000,Spermann2008}) or specific 
languages (e.g. see \cite{Jaghoori2005,Jaghoori2010,Murphi}) so that users can 
provide 
insight on 
what symmetries are there in the system. For instance, Mur$\varphi$ 
provides a 
special data type with a list of syntactic restrictions and all values that 
belong to this type are symmetric. Another approach is to detect 
symmetry automatically without requiring expert insights. Automatic detection 
of symmetries 
is an extremely difficult computational problem. A number of approaches have 
been proposed in this direction (e.g. \cite{Donaldson2005,Donaldson2008,ASE13}).
For example, Donaldson and Miller \cite{Donaldson2005,Donaldson2008} designed an
automatic approach to detecting process symmetries for channel-based 
communication systems, based on constructing a graph called \emph{static channel
diagram} from a Promela model whose automorphisms correspond to symmetries in 
the model. 
Nonetheless, it is
clear from their experiments that existing approaches work only for small 
numbers of processes.

In practice, concurrent systems are often obtained by replicating a generic 
behavioral description \cite{WD10}. For example, a prioritised 
resource-allocator protocol \cite[Section 4.4]{donaldson-thesis}
provides a description of an allocator program and a client program in
a network with a star topology (allocator in the center), from which
a concurrent system with $1$ allocator and $m$ clients (for any given $m
\in \Z_{> 0}$) can be generated. This is in fact the standard setting of
parameterised systems (e.g. see \cite{vojnar-habilitation,rmc-survey}),
which are symbolic descriptions of infinite families
$\{\struct_i\}_{i=1}^{\infty}$ of transition systems $\struct_i$ that can be
generated by instantiating some parameters
(e.g. the number of processes).


Adopting this setting of parameterised systems, we consider the
problem of formulating and generating symbolic \emph{symmetry
  patterns}, abstract descriptions of symmetries that can be
instantiated to obtain concrete symmetries for every instance of a
parameterised system.  A formal language to specify symmetry patterns
should be able to capture interesting symmetry patterns, e.g., that
each instance $\struct_i$ of the parameterised system $\struct =
\{\struct_i\}_{i=1}^{\infty}$ exhibits the full symmetry $S_n$
(i.e. invariant under permuting the locations of the processes).
Ideally, such a language $\mathcal{L}$ should also enjoy the following 
algorithmic properties: (1)
\emph{symmetry verification}, i.e., given a symmetry pattern $P \in 
\mathcal{L}$, 
we can automatically check whether $P$ is a symmetry pattern of a given 
parameterised system, and (2) \emph{symmetry synthesis}: given a
parameterised system, we can automatically generate
symmetry patterns~$P \in \mathcal{L}$ that the system exhibits.
In particular, if $\mathcal{L}$ is sufficiently expressive to specify commonly 
occuring symmetry patterns, Property (1) would allow us to 
automatically compute which common symmetry patterns hold for a given 
parameterised system. 
In the case when symmetry patterns might be less obvious, Property (2) would
allow us to identify further symmetries that are satisfied by the given 
parameterised systems.
To the best of our knowledge, to date no such languages have been proposed.

%


\OMIT{
For this reason, the problem of automatically generalising 
``symmetry patterns'' of \emph{every} instance of a given parameterised system from 
symmetries for \emph{small} instances of the system is highly relevant for
symmetry reduction. Despite this, we are not aware of existing general techniques
for checking (let alone, discovering) symmetry patterns for parameterised
systems, even in the simple case of rotation symmetry, i.e., supposing that we 
found that instances of a parameterised system with $2,\ldots,5$ processes 
exhibit the rotation symmetry, does \emph{every} instance of the system
have the rotation symmetry?
}

\vspace{-.6cm}
\subsubsection*{Contribution:}
We propose a general symbolic framework for capturing 
symmetry patterns for parameterised systems.
The framework uses \emph{finite-state letter-to-letter word transducers} to 
represent \emph{both} parameterised systems and symmetry patterns. 
In the sequel, symmetry patterns
that are recognised by transducers are called \emph{regular symmetry
patterns}.
Based on extensive studies in regular model checking 
(e.g. see \cite{Abdulla12,rmc-survey,vojnar-habilitation,Lin12b}),
finite-state word transducers are now well-known to be good symbolic 
representations of parameterised systems. 
Moreover, equivalent logic-based (instead of automata-based) formalisms
are also available, e.g., LTL(MSO) 
\cite{LTLMSO} which can be used to specify parameterised systems and
properties (e.g. safety and liveness) in a convenient way.
In this paper, we show that
transducers are not only also sufficiently expressive for representing 
many common symmetry patterns, but they enjoy the two aforementioned desirable 
algorithmic properties: automatic symmetry verification and synthesis. 

There is a broad spectrum of notions of ``symmetries'' for transition systems
that are of interest to model checking. These include 
simulation preorders (a weak variant) and isomorphisms (the strongest), e.g., 
see \cite{BK08}. We suggest that transducers are not only sufficiently 
powerful in expressing many such notions of symmetries, but they are also a 
flexible 
symbolic language in that constraints (e.g. the symmetry pattern is a bijection)
can be easily added to or relaxed from the specification.
In this paper, we shall illustrate this point by handling
simulation preorders and isomorphisms (i.e. bijective simulation preorders)
within the same framework.
Another notable point of our symbolic language is its ability to specify
that the simulation preorder gives rise to an \emph{abstracted system} that is 
finite-state and preserves non-safety (i.e. if the original system is not safe, 
then so is the abstracted system). In other words, \emph{we can specify that the
symmetry pattern reduces the infinite-state parameterised system to a
finite-state system}. Safety of finite-state systems can then be checked using 
standard finite-state model checkers.

%
%

We next show how to specialise our framework to
\emph{process symmetries} \cite{CJEF96,Ip1996,ES96}. 
Roughly speaking, a process
symmetry for a concurrent system $\struct$ with $n$ processes is a permutation 
$\pi: [n] \to [n]$ (where $[n] := \{1,\ldots,n\}$) such that the behavior of 
$\struct$
is invariant under permuting the process indices by $\pi$ (i.e. the resulting
system is isomorphic to the original one under the natural bijection 
induced by $\pi$). For example, if the process indices of clients in the 
aforementioned resource-allocator protocol with $1$ allocator and $m$ clients
are $1, \ldots, m+1$, then any permutation $\pi: [m+1] \to [m+1]$ that fixes 
$1$ is a process symmetry for the protocol. The set of such process symmetries 
is a permutation group on $[m+1]$ (under functional composition) generated by 
the following two permutations specified in standard cyclic notations: $(2,3)$ 
and $(2,3,\ldots,m+1)$. This is true
for \emph{every value of $m \geq 2$}. In addition, finite-state model checkers
represent symmetry permutation groups by their (often exponentially
more succinct) finite set of generators.
Thus, if $\struct = \{\struct_n\}_{n=1}^{\infty}$ is a parameterised system
where $\struct_n$ is the instance with $n$ processes, we represent the
\emph{parameterised symmetry groups} $\mathcal{G} = 
\{G_n\}_{n=1}^{\infty}$ 
(where $G_n$ is the process symmetry group for $\struct_n$) by a finite list of 
regular symmetry patterns that generate $\mathcal{G}$.
We postulate that commonly occuring parameterised process symmetry groups (e.g. 
full symmetry groups and rotations groups) can be captured in this framework, 
e.g.,
parameterised symmetry groups for the aforementioned resource-allocator 
protocol can be generated by the symmetry patterns
$\{(2,3)(4)\cdots (m+1)\}_{m \geq 3}$ and
$\{(2,3,\ldots,m+1)\}_{m \geq 3}$,
which can be easily expressed using transducers. 
Thus, using our symmetry verification
algorithm, commonly occuring process symmetries for a given parameterised 
system could be automatically identified.

The aforementioned approach of checking a given parameterised system
against a ``library'' of common regular symmetry patterns has two problems.
Firstly, some common symmetry patterns are not regular, e.g., reflections. To 
address this, we equip our transducers with an unbounded pushdown stack.
Since pushdown transducers in general cannot be synchronised \cite{ABL07}
(a crucial property to obtain our symmetry verification algorithm), we
propose a restriction of pushdown transducers for which we can recover
automatic symmetry verification. Secondly, there are many useful but subtle
symmetry patterns in practice. To address this, we propose the use of our 
symmetry synthesis algorithm. Since a naive enumeration of all transducers with
$k=1,\ldots,n$ states does not scale, we devise a CEGAR loop for our algorithm
in which 
a SAT-solver provides a candidate symmetry pattern (perhaps satisfying some
extra constraints) and an automata-based algorithm either verifies the
correctness of the guess, or returns a counterexample that can be further
incorporated into the guess of the SAT-solver.



\OMIT{
We show that transducers can express symmetry patterns
that are used in generating commonly occuring symmetry groups (including
transpositions and rotations).
If $n$ is the number of processes in a concurrent system, symmetry groups that 
often occur in parameterised systems include \emph{the full symmetry group} 
$S_n$ (generated by the two permutations\footnote{Specified in standard
cyclic permutation notations} $(1,2)$ and $(1,2,\ldots,n)$), 
\emph{the rotation group} $C_n$ (generated by the permutation $(1,2,\ldots,n)$),
and \emph{the dihedral group} (generated by the two permutations
$(1,2,\ldots,n)$ and $(1,n)(2,n-1)\cdots (\lfloor n/2 \rfloor,\lceil
n/2\rceil)$). If a parameterised system 
$\struct = \{\struct_n\}_{n=1}^{\infty}$ encodes a family of concurrent systems
$\struct_n$ with $n$ processes, then each $\struct_n$ would have \emph{one}
process symmetry group $G_n$. So, strictly speaking, we are interested in
\emph{symmetry group patterns} that can capture the infinite family
$\{G_n\}_{n=1}^{\infty}$. Since symmetry groups can be represented by
their generators (i.e. permutations), symmetry group patterns will be 
represented as a finite list of symmetry patterns. For example,
the full symmetry group pattern $\{S_n\}_{n=1}^{\infty}$ could simply be
represented as a finite list consisting of $P_1$ and $P_2$
$(1,2,\ldots,n)$.

which consists of all permutations. Our approach in generating the symmetry

In this paper, we show that commonly process symmetry patterns 
(e.g. rotations
and transpositions) 

.

could allow us 
to reduce the given (infinite-state) parameterised system to a finite-state system.

}

\OMIT{
Our framework allows two kinds of symmetries: (i) \emph{process 
symmetries}, i.e., symmetries which permute indices of the processes, and (ii) 
\emph{data symmetries}, i.e., symmetries which modify the values of global 
variables in the system. Examples of process symmetries include the rotation
symmetry for a dining philosopher protocol. In fact, it is an \emph{automorphism}
of the system. In some cases (e.g. data symmetries), the condition of bijectivity
in the automorphism is too strong to be useful, e.g., Gries's coffee can problem
\cite{Gries}. In this case, our framework allows us to drop this bijectivity 
condition (i.e. yielding an \emph{endomorphism} of the system). 
}

We have implemented our symmetry verification/synthesis algorithms and 
demonstrated its usefulness in identifying regular symmetry patterns for 
examples like dining philosopher protocols, self-stabilising protocols, 
resource-allocator protocol, and Gries's coffee can problem. 
In the case of the coffee can problem, we managed to obtain a reduction from 
the infinite system to a finite-state system.

\OMIT{
In this paper, we propose a general approach for automatically inferring symmetries 
of systems with many or even unbounded number of processes, which is particularly relevant to the verification of parameterized systems. Parameterized systems are characterized by the presence of a large or even unbounded number of behaviorally similar processes, and they often appear in distributed/concurrent systems. With the help of symmetries for smaller instances of the concurrent systems (e.g., up to five processes which could be generated using existing approaches as in~\cite{ASE13}), our framework attempts to capture generalisations of symmetries for smaller instances as a single parameterised symmetry, i.e., a symbolic representation, based on letter-to-letter finite-state transducers, that represents infinitely many symmetries of all instances of the concurrent system (one symmetry for each instance) as a single finite object. 

\paragraph{A motivating example:} Suppose you have a parameterised system
$\mcl{S}$ with a ring topology. Furthermore, you discover by automatic
techniques for instances with $n=1,2,3,4$ processes that the permutations
$(1)$, $(12)$, $(123)$, and $(1234)$ in cycle notations are, respectively,
their symmeteries (among others). You now make a conjecture that
$(12\ldots n)$ is a symmetry for the instance with $n$ processes. Is this the
case? Well, a framework for computing parameterised symmetry should at least
help you to verify this, or even better compute it for you. \\

\noindent Our technique utilises both automata-based methods and SAT-solvers to automatically generate valid parameterised symmetries. Finite instances of parameterised symmetries, which is used in explicit-state model checking of concurrent systems, can be automatically extracted from the transducers. Potentially, the parameterised symmetry would allow us to verify parameterised systems once for all of its instances. Furthermore, our approach can be used to detect not only process symmetry but also data symmetry. We have implemented our algorithm and demonstrated its usefulness in generating
parameterised symmetries for well-known examples like dining philosopher protocols,
and self-stabilising protocols.
}





\OMIT{\paragraph{What you will find in this note: } one possible formal framework for
achieving the above aim. In particular, you can adapt the framework to other
``formalisms''. Also, you will see some cool theoretical/practical problems
that still need to be addressed.}

\OMIT{
The purpose of this project is to answer the following question: how to
detect symmetry in parameterised systems. In particular, is it possible
to come up with a ``parameterised symmetry'' relation that holds for all
instances of the parameterised system under consideration (i.e. with any
number of processes).
}

\vspace{-.5cm}
\subsubsection*{Related Work:}
Our work is inspired by regular model checking (e.g. 
\cite{rmc-survey,vojnar-habilitation,Abdulla12,LTLMSO}), which focuses on 
symbolically 
computing the sets of reachable configurations of parameterised systems as 
regular languages.
Such methods are generic, but are not
guaranteed to terminate in general. As in regular model checking, our framework
uses transducers to represent parameterised systems. However, instead of 
computing their sets of reachable configurations, our work finds symmetry 
patterns of the parameterised systems, which can be
exploited by an explicit-state finite-state model checker to verify the desired 
property over finite instances of the system (see \cite{WD10} for more details).
Although our verification algorithm is guaranteed to terminate in general
(in fact, in polynomial-time assuming the parameterised system is given as a
DFA), our synthesis algorithm only terminates when
we fix the number of states for the transducers. 
Finding process symmetry patterns is often easier since there are 
available
tools for finding symmetries for finite (albeit small) instances of the systems
(e.g. \cite{Donaldson2008,Donaldson2005,ASE13}).

Another related line of works is ``cutoff techniques'' (e.g. see
\cite{EN95,EK00} and the survey \cite{vojnar-habilitation}), which allows one to
reduce verification of parameterised systems into verification of finitely
many instances (in some cases, $\leq 10$ processes). These works usually assume 
verification of ${\sf LTL}\backslash{\sf X}$ properties. Although such 
techniques are extremely powerful, the systems that can be handled
using the techniques are often quite specific (e.g. see 
\cite{vojnar-habilitation}).

\OMIT{
problem is 
Mention the orbit problem and some solutions.
}

\vspace{-.5cm}
\subsubsection*{Organisation:}
Section~\ref{sec:prelim} contains preliminaries.
In Section~\ref{sec:framework}, we present our framework of regular symmetry
patterns.
In Section~\ref{sec:verification} (resp. Section~\ref{sec:onesym}), we present 
our symmetry verification algorithm (resp. synthesis) algorithms.
Section~\ref{sec:expr} discusses our implementation and experiment results. 
Section~\ref{sec:conc} concludes with future work. 
Due to space constraints, some details are relegated into the 
\shortlong{full version}{appendix}.

\section{Preliminaries}
\label{sec:prelim}

\noindent
\textbf{General notations.} 
For two given natural numbers $i \leq j$, we define $[i,j] = \{i,i+1,
\ldots,j\}$. Define $[k] = [1,k]$.
Given a set $S$, we use $S^*$ to denote the set of all finite
sequences of elements from $S$. The set $S^*$ always includes the empty
sequence which we denote by $\empseq$. 
Given two sets of words
$S_1, S_2$, we use $S_1\cdot S_2$ to denote the set $\{ v\cdot w\mid v\in S_1,
w\in S_2\}$ of words formed by concatenating words from $S_1$ with words from
$S_2$. Given two relations $R_1,R_2 \subseteq S \times S$, we define
their composition as $R_1 \circ R_2 = \{ (s_1,s_3) \mid \exists s_2.\; (s_1,s_2)
\in R_1 \wedge (s_2,s_3) \in R_2\}$. Given a subset $X \subseteq S$, we
define the image $R(X)$ (resp. preimage $R^{-1}(X)$) of $X$ under $R$ as 
the set $\{ s \in S \mid \exists s'.\; (s',s) \in R \}$ (resp.
$\{ s' \in S \mid \exists s.\; (s',s) \in R \}$).
Given a finite set~$S =
\{s_1, \ldots, s_n\}$, the \emph{Parikh vector}~$\Parikh{v}$ of a word
$v \in S^*$ is the vector~$(|v|_{s_1}, \ldots, |v|_{s_n})$ of the number
of occurrences of the elements~$s_1, \ldots, s_n$, respectively, in $v$.
\smallskip

\noindent
\textbf{Transition systems} Let $\ACT$ be a finite set of
\defn{action symbols}. A \defn{transition system} over $\ACT$ is
a tuple $\struct = \transysG$,
where $S$ is a set of \defn{configurations}, and $\to_a\ \subseteq S \times S$ 
is a binary relation over $S$. 
We use $\to$ to denote the relation $\left(\bigcup_{a \in \ACT} \to_a\right)$. 
In the sequel, we will often only consider the case when $|\ACT| = 1$ for 
simplicity.
The notation $\to^+$ (resp. $\to^*$) is used to denote the transitive (resp.
transitive-reflexive) closure of $\to$. 
We say that a sequence $s_1 \to \cdots \to s_n$ is a \defn{path} (or
\defn{run}) in $\struct$ (or in $\to$). Given two paths $\ModelRun_1: s_1
\to^* s_2$ and $\ModelRun_2: s_2 \to^* s_3$ in $\to$, we may concatenate them
to obtain $\ModelRun_1 \odot \ModelRun_2$ (by gluing together $s_2$). In
the sequel, for each $S' \subseteq S$ we use the notation $post^*_{\to}(S')$ to 
denote the set of configurations $s \in S$ reachable in $\struct$ from some 
$s \in S$.
\OMIT{
Given a relation $\to \subseteq S \times S$ and subsets 
$S_1,\ldots,S_n \subseteq S$, denote by 
$\Rec_{\to}(\{S_i\}_{i=1}^n)$ to be the set of elements $s_0 \in S$ for which
there exists an infinite path $s_0 \to s_1 \to \cdots$ visiting
each $S_i$ infinitely often, i.e., such that, for each
$i\in[1,n]$, there are infinitely many $j \in \N$ with $s_j \in S_i$.
}
\smallskip

\noindent
\textbf{Words, automata, and transducers.}
We assume basic familiarity with word automata.
Fix a finite alphabet $\Sigma$. For each finite word $w = w_1\ldots w_n \in 
\Sigma^*$, we
write $w[i,j]$, where $1 \leq i \leq j \leq n$, to denote the segment
$w_i\ldots w_j$. Given a (nondeterministic finite) automaton $\mcl{A} = 
(\Sigma,Q,\delta,q_0,F)$,
a run of $\mcl{A}$ on $w$ is a function $\rho: \{0,\ldots,n\}
\rarw Q$ with $\rho(0) = q_0$ that obeys the transition relation $\delta$. 
We may also denote the run $\rho$ by the word $\rho(0)\cdots \rho(n)$ over
the alphabet $Q$. 
The run $\rho$ is said to be \defn{accepting} if $\rho(n) \in F$, in which
case we say that the word $w$ is \defn{accepted} by $\mcl{A}$. The language
$L(\mcl{A})$ of $\mcl{A}$ is the set of words in $\Sigma^*$ accepted by
$\mcl{A}$. In the sequel, we will use the standard abbreviations DFA/NFA
(Deterministic/Nondeterministic Finite Automaton).

Transducers are automata that accept 
binary relations over words \cite{Blum99,BG04} (a.k.a. ``letter-to-letter'' 
automata, or synchronised transducers). 
Given two words
$w = w_1\ldots w_n$ and $w' = w_1'\ldots w_m'$ over the alphabet $\ialphabet$, 
let $k = \max\{n,m\}$ and $\ialphabet_\padding := \ialphabet \cup \{\padding\}$,
where $\padding$ is a special padding symbol not in $\ialphabet$.
We define a word $w \otimes w'$ of length $k$ over alphabet
$\ialphabet_{\padding} 
\times \ialphabet_{\padding}$ as follows:
$$ w \otimes w' \ = \ \biword{a_1}{b_1}\ldots\biword{a_k}{b_k}, \
  \text{where}\
a_i \ = \ \begin{cases} w_i & i \leq n \\
                         \padding &  i > n,
            \end{cases}
\ \text{and} \ 
b_i \ = \ \begin{cases} w_i' & i \leq m \\
                         \padding & i > m.
            \end{cases}
$$
In other words, the shorter word is padded with $\padding$'s, and the
$i$th letter of $w \otimes w'$ is then the pair of the $i$th letters of
padded $w$ and $w'$. 
A \defn{transducer} (a.k.a. letter-to-letter automaton) is simply a finite-state
automaton over 
$\ialphabet_{\padding}\times\ialphabet_{\padding}$, and a binary relation $R
\subseteq \ialphabet^* \times \ialphabet^*$ is \defn{regular} if
the set $\{ w \otimes w' : (w,w') \in R \}$ is accepted by
a letter-to-letter automaton. 
The relation $R$ is said to be \defn{length-preserving} if $R$ only relates 
words of the same length \cite{rmc-survey}, i.e., that any automaton recognising
$R$ consumes no padded letters
of the form $(a,\padding)$ or $(\padding,a)$.
In the sequel, for notation simplicity,
we will confuse a transducer and the binary relation
that it recognises (i.e. $R$ is used to mean both).

Finally, notice that the notion of regular relations can be easily extended
to $r$-ary relations $R$ for each positive integer $r$ (e.g. see
\cite{Blum99,BG04}). 
To this end, the input alphabet of the transducer will be $\ialphabet_{\padding}^r$.
Similarly, for $R$ to be regular, the set $\{ w_1 \otimes \cdots \otimes w_r :
(w_1,\ldots,w_r) \in R\}$ of words over the alphabet $\ialphabet^r$ must be
regular. 

\OMIT{
\smallskip
\noindent
\textbf{Length-preserving automatic transition systems.} A transition
system $\struct =\transys$ is said to be \defn{length-preserving automatic
(LP-automatic)} if $S = \Sigma^*$ for some non-empty finite alphabet
$\Sigma$ and each relation $\to_a$ is given by a transducer $\mcl{A}_a$ over 
$\Sigma^*$. The set $\{\mcl{A}_a\}_{a \in \ACT}$ of transducers is said
to be a \defn{presentation} of $\struct$.
}

\OMIT{
Given a first-order (relational) formula $\varphi(\bar x)$ over signatures
$\{\to_a\}_{\ACT}$,
we may define $\sem{\varphi}_{\struct}$ 
as the set of tuples of words $\bar w$ over $\Sigma^*$ such that
$\struct \models \varphi(\bar w)$ 
A useful fact about LP-automatic transition systems (in fact, extension to
automatic structures) is that $\sem{\varphi}_{\struct}$ is effectively
regular (see \cite{anthony-thesis} for a detailed proof and complexity 
analysis).
}
\OMIT{
\begin{proposition}
Given a first-order relation formula $\varphi(\bar x)$ over signatures with
only binary/unary relations (interpreted as transducers/automata over some
alphabet $\Sigma$), the relation $\sem{\varphi}$ is effectively regular.
\end{proposition}
}

\OMIT{
\noindent
\textbf{Trees, automata, and languages} A \defn{ranked alphabet} is
a nonempty finite set of symbols $\ialphabet$ equipped with an arity
function $\arity:\ialphabet \to \N$. 
A \defn{tree domain} $D$ is a nonempty finite subset of $\N^*$ satisfying
(1) \defn{prefix closure}, i.e., if $vi \in D$ with $v \in \N^*$ and $i \in
\N$, then $v \in D$, (2) \defn{younger-sibling closure}, i.e., if $vi \in
D$ with $v \in \N^*$ and $i \in \N$, then $vj \in D$ for each natural
number $j < i$. The elements of $D$ are called \defn{nodes}. Standard 
terminologies (e.g. parents, children, ancestors,
descendants) will be used when referring to elements of a tree domain. For 
example,
the children of a node $v \in D$ are all nodes in $D$ of the form $vi$ for
some $i \in \N$. A \defn{tree} over a ranked alphabet $\ialphabet$ is a 
pair $T = (D,\lambda)$, where $D$ is a tree domain and 
the \defn{node-labeling} $\lambda$ is a function mapping $D$ to $\ialphabet$
such that, for each node $v \in D$, the number of children of $v$ in $D$
equals the arity $\arity(\lambda(v))$ of the node label of $v$. We use
the notation $|T|$ to denote $|D|$. Write
$\Tree(\ialphabet)$ for the set of all trees over 
$\ialphabet$. We also use the standard term representations of 
trees (cf. \cite{TATA}).

A nondeterministic tree-automaton (NTA) over a ranked alphabet
$\ialphabet$ is a tuple $\Aut = \langle \controls,\transrel,\finals\rangle$,
where (i) $\controls$ is a finite nonempty set of states, (ii) $\transrel$ is a 
finite set of rules of the form $(q_1,\ldots,q_r) \tran{a} q$, where 
$a \in \ialphabet$, $r = \arity(a)$, and $q,q_1,\ldots,q_r \in Q$, and
(iii) $F \subseteq \controls$ is a set of final states. A rule
of the form $() \tran{a} q$ is also written as $\tran{a} q$.
A
\defn{run} of $\Aut$ on a tree $T = (D,\lambda)$ is a mapping $\AutRun$ from 
$D$ to $\controls$
such that, for each node $v \in D$ (with label $a = \lambda(v)$) with its all 
children $v_1,\ldots,v_r$, it is the case that 
$(\AutRun(v_1),\ldots,\AutRun(v_r)) \tran{a} \AutRun(v)$ is a transition in
$\transrel$. For a subset $\controls' \subseteq \controls$, the run is said to 
be \defn{accepting at $\controls'$} if $\AutRun(\epsilon)
\in \controls'$. It is said to be \defn{accepting} if it is accepting at 
$\finals$. The NTA is said to \defn{accept} $T$ at $\controls'$ if it has an 
run on $T$ that is accepting at $\controls'$. Again, we will omit mention of
$\controls'$ if $\controls' = \finals$. The language $\Lang(\Aut)$ of $\Aut$ is 
precisely the set of
trees which are accepted by $\Aut$. A language $L$ is said to be \defn{regular}
if there exists an NTA accepting $L$. In the sequel, we use $\Size{\Aut}$
to denote the size of $\Aut$.

A \defn{context} with \defn{(context) variables} $x_1,\ldots,x_n$ is a tree $T =
(D,\lambda)$ over the alphabet $\ialphabet \cup \{x_1,\ldots,x_n\}$, where
$\ialphabet \cap \{x_1,\ldots,x_n\} = \emptyset$ and
for each $i=1,\ldots,n$, it is the case that $\arity(x_i) = 0$ and 
there exists a unique \defn{context node} $u_i$ with $\lambda(u_i) = x_i$.
In the sequel, we will sometimes denote such a context as $T[x_1,\ldots,x_n]$.
Intuitively, a context $T[x_1,\ldots,x_n]$ is a tree with $n$ ``holes'' that can
be filled in by trees in $\Tree(\ialphabet)$. More precisely, given trees
$T_1 = (D_1,\lambda_1),\ldots,T_n = (D_n,\lambda_n)$ over $\ialphabet$, we 
use the notation $T[T_1,\ldots,T_n]$ to denote the tree $(D',\lambda')$ obtained
by filling each hole $x_i$ by $T_i$, i.e., $D' = D \cup \bigcup_{i=1}^n 
u_i\cdot D_i$ and $\lambda'(u_iv) = \lambda_i(v)$ for each $i = 1,\ldots,n$
and $v \in D_i$. Given a tree $T$, if $T = C[t]$ for some context tree $C[x]$ 
and a tree $t$, then $t$ is called a \defn{subtree} of $T$. If $u$ is 
the context node of $C$, then we use the notation $T(u)$ to obtain
this subtree $t$.  Given an NTA $\Aut = \langle 
\controls,\transrel,\finals\rangle$ over $\ialphabet$ and states 
$\bar q = q_1,\ldots,q_n \in \controls$, we say that $T[x_1,\ldots,x_n]$ 
is accepted by $\Aut$ from $\bar q$ (written $T[q_1,\ldots,q_n] \in 
\Lang(\Aut)$) if it is \defn{accepted} by the NTA 
$\Aut' = \langle \controls,\transrel',\finals\rangle$ over $\ialphabet 
\cup \{x_1,\ldots,x_n\}$, where $\transrel'$ is the union of $\transrel$ and
the set containing each rule of the form $\tran{x_i} q_i$. 
}

\smallskip
\noindent
\textbf{Permutation groups.}
We assume familiarity with basic group theory (e.g. see \cite{group-book}).
A \defn{permutation} on $[n]$ is any bijection $\pi: [n] \to [n]$. The set of
all permutations on $[n]$ forms the \defn{($n$th) full symmetry group}
$\permgroup_n$ under functional composition. A \defn{permutation group} on
$[n]$ is any set of permutations on $[n]$ that is a subgroup of $\permgroup_n$
(i.e. closed under composition). A \defn{generating set} for a permutation
group $G$ on $[n]$ is a finite set $X$ of permutations (called
\emph{generators}) such that each permutation in $G$ can be expressed by taking 
compositions of elements in $X$. In this case, we say that $G$ can be generated
by $X$. 
A word $w = a_0\ldots a_{k-1} \in [n]^*$ containing 
distinct elements of $[n]$ (i.e. $a_i \neq a_j$ if $i \neq j$) can be used
to denote the permutation that maps $a_i \mapsto a_{i+1 \mod k}$ for
each $i \in [0,k)$ and fixes other elements of $[n]$. In this case, $w$ is
    called a \defn{cycle} (more precisely, $k$-cycle or \emph{transposition} in
    the case when $k=2$), which we will often write in the 
standard notation $(a_0,\ldots,a_{k-1})$ so as to avoid confusion. 
Any permutation can be written as a composition of disjoint cycles
\cite{group-book}. In addition, it is known that $\permgroup_n$ can be 
generated by the set $\{(1,2),(1,2,\ldots,n)\}$. Each subgroup $G$ 
of $\permgroup_n$ acts on the 
set $\Sigma^n$ (over any finite alphabet $\Sigma$) under the group action of
permuting indices, i.e., for each $\pi \in G$ and $\vecV = (a_1,\ldots,a_n) \in 
\Sigma^n$, we define $\pi\vecV := (a_{\pi^{-1}(1)},\ldots,a_{\pi^{-1}(n)})$. 
That way, each $\pi$ induces the bijection $f_\pi: \Sigma^n \to \Sigma^n$
such that $f_\pi(\vecV) = \pi\vecV$.

Given a permutation group $G$ on $[n]$ and a transition system
$\struct = \transys$ with state space $S = \Sigma^n$, we say that $\struct$ is
\emph{$G$-invariant} if the bijection $f_\pi: \Sigma^n \to \Sigma^n$ induced 
by each $\pi \in G_n$ is an automorphism on $\struct$, i.e., 
$\forall v,w \in S$: $v \to w$ implies $f_\pi(v) \to f_\pi(w)$.

\section{The formal framework} \label{sec:framework}

This section describes our symbolic framework regular symmetry patterns. 
\subsection{Representing parameterised systems}
As is standard in regular model checking 
\cite{vojnar-habilitation,rmc-survey,Abdulla12},
we use length-preserving transducers to represent parameterised systems. 
As we shall see below, we will use non-length-preserving transducers to 
represent symmetry patterns.
\begin{definition}[Automatic transition systems\footnote{Length-preserving 
            automatic transition systems are instances of automatic structures 
        \cite{Blum99,BG04}}]
    A transition system $\struct =\linebreak \transysG$
    is said to be \defn{(length-preserving) automatic} if $S$ 
    is a regular set over a finite alphabet $\ialphabet$ and each 
    relation $\to_a$ is given by a transducer 
    over $\ialphabet$.
\end{definition}
More precisely, the parameterised 
system defined by $\struct$ is the family $\{\struct_n\}_{n \geq 0}$ with 
\mbox{$\struct_n = \langle S_n; \to_{a,n} \rangle$}, 
where $S_n := S \cap \Sigma^n$ is the set of all words in $S$ of length $n$ and 
$\to_{a,n}$ is the transition relation $\to_a$ restricted to $S_n$.
In the sequel, for simplicity we will mostly consider examples when $|\ACT| = 
1$. When the meaning is understood, we shall confuse the notation $\to_a$ for 
the transition relation of $\struct$ and the transducer that recognises it.
To illustrate our framework and methods, we shall give three examples of 
automatic transition systems 
(see \cite{LTLMSO,vojnar-habilitation} for numerous other examples).
\OMIT{
For our purposes, we partition the finite alphabet $\ialphabet$ of
our transducers
into a \defn{process alphabet} $\palphabet$ and a \defn{data alphabet} $\dalphabet$.
Here, the data alphabet is used to represent the values of global variables in
the system.
For simplicity, we will always assume that $S = \palphabet^*\dalphabet^*$, i.e., 
values of global variables are placed at the end of the word representations of
configurations. 
\begin{remark}
    There are more expressive ways of integrating data and process values, e.g., 
    by using the product alphabet 
$\palphabet \times \dalphabet$. They, however, often require the use of a larger 
alphabet size and more states, which is not desirable from an implementation 
standpoint. In 
addition, our convention is sufficient to represent virtually all natural examples 
of parameterised systems.
\end{remark}
}

\begin{example}
    \label{ex:resource}
    We describe a prioritised resource-allocator protocol
    \cite[Section 4.4]{donaldson-thesis}, which is a simple
    mutual exclusion protocol in network with a star topology. The protocol
    has one allocator and $m$ clients. Initially, each process is in an
    \emph{idle} state. However, clients might from time to time \emph{request}
    for an access to a resource (\emph{critical section}), which can only be 
    used by one process at
    a time. For simplicity, we will assume that there is only one resource
    shared by all the clients. The allocator manages the use of the resource.
    When a request is lodged by a client, the allocator can allow the
    client to use the resource. When the client has finished using
    the resource, it will send a message to the allocator, which can then
    allow other clients to use the resource.

    To model the protocol as a transducer, we let $\ialphabet = \{i,r,c\}$,
    where $i$ stands for ``idle'', $r$ for ``request'', and $c$ for
    ``critical''. Allocator can be in either the state $i$ or the state
    $c$, while a client can be in one of the three states in $\ialphabet$.
    A valid configuration is a word $aw$, where $a \in \{i,c\}$ represents
    the state of the allocator and $w \in \ialphabet^*$ represents the
    states of the $|w|$ clients (i.e. each position in $w$ represents
    a state of a client). Letting $I = \{(a,a) : a \in \ialphabet\}$
    (representing idle local transitions), the transducer can be described by a 
    union of the following regular expressions:
\begin{itemize}
    \item $I^+(i,r)I^*$ --- a client requesting for a resource.
    \item $(i,c)I^*(r,c)I^*$ ---  a client request granted by the allocator.
    \item $(c,i)I^*(c,i)I^*$ --- the client has finished using the resource. 
        \qed
\end{itemize}
\end{example}

\begin{example}
\label{ex:token}
We describe Israeli-Jalfon self-stabilising protocol \cite{IJ90}.
The original protocol is probabilistic, but since we are only interested in 
reachability, we may use nondeterminism to model randomness. The protocol has a 
ring topology, and each process either holds a token (denoted by $\top$) or
does not hold a token (denoted by $\bot$). Dynamics is given by
the following rules:
\begin{itemize}
\item A process $P$ holding a token can pass the token to either the left or the
    right neighbouring process $P'$, provided that $P'$ does not hold a token.
\item If two neighbouring processes $P_1$ and $P_2$ hold tokens, the tokens
    can be merged and kept at process $P_1$.
\end{itemize}
We now provide a transducer that formalises this parameterised
system. Our relation is on words over the alphabet $\Sigma =
\{\bot,\top\}$, and thus a transducer is an automaton that runs over
$\Sigma \times \Sigma$. In the following, we use $I :=
\{\biword{\top}{\top},\biword{\bot}{\bot}\}$.  The automaton is given by
a union of the following regular expressions:
\begin{multicols}{3}
\begin{itemize}
\item $I^* \biword{\top}{\bot}\biword{\bot}{\top} I^*$
\item $I^* \biword{\bot}{\top}\biword{\top}{\bot} I^*$
\item $I^* \biword{\top}{\top}\biword{\top}{\bot} I^*$
\item $\biword{\bot}{\top}I^*\biword{\top}{\bot}$
\item $\biword{\top}{\bot}I^*\biword{\bot}{\top}$
\item $\biword{\top}{\bot}I^*\biword{\top}{\top}$
\qed
\end{itemize}
\end{multicols}
\end{example}

\begin{example}
    \label{ex:gries}
    Our next example is the classical David Gries's coffee can problem, which uses
    two (nonnegative) integer variables $x$ and $y$ to store the number of black
    and white coffee beans, respectively. At any given step, if $x+y \geq 2$ 
    (i.e. there are at least two coffee beans), then two coffee beans are 
    nondeterministically chosen. First, if both are of the same colour,
    then they are both discarded and a new black bean is put in the can. 
    Second, if they are of a different colour, the white bean is kept and the black 
    one is discarded. We are usually interested in the colour of the last bean
    in the can. We formally model Gries's coffee can problem as a transition
    system with domain $\N \times \N$ and transitions:
    \begin{enumerate}
        \item[(a)] if $x \geq 2$, then $x := x - 1$ and $y := y$.
        \item[(b)] if $y \geq 2$, then $x := x + 1$ and $y := y-2$.
        \item[(c)] if $x \geq 1$ and $y \geq 1$, then $x := x - 1$ and $y := y$.
    \end{enumerate}
    To distinguish the colour of the last bean, we shall add self-loops to
    all configurations in $\N \times \N$, except for the configuration $(1,0)$.
    We can model the system as a length-preserving transducer as follows. The 
    alphabet is 
    $\ialphabet := \dalphabet_\x \cup \dalphabet_\y$, where $\dalphabet_\x :=
    \{ 1_\x, \bot_\x \}$ and $\dalphabet_\y := \{ 1_\y, \bot_\y\}$. 
    A configuration is a word in the regular language
    $1_\x^*\bot_\x^*1_\y^*\bot_\y^*$.
    For example, the configuration with $x = 5$ and $y = 3$, where the maximum
    size of the integer buffers $\x$ and $\y$ is 10, is represented as the word 
    $(1_\x)^5(\bot_\x)^5(1_\y)^3(\bot_\y)^7$. The transducer for the 
    coffee can problem can be easily constructed.
  \qed

  \OMIT{
    \begin{figure}
\begin{center}
$\psmatrix[colsep=1.5cm,rowsep=1.5cm,mnode=circle]
& p_1 & p_2 & 
\endpsmatrix
\psset{nodesep=1pt}
\psset{arrowscale=1.5}
\ncline{->}{1,1}{1,2}
$
\caption{Some basic automata components for the coffee can 
problem.\label{fig:coffee}}
\end{center}
\end{figure}
}
\end{example}

\subsection{Representing symmetry patterns}
\begin{definition}
  Let $\struct = \transys$ be a transition system with $S \subseteq
  \ialphabet^*$. A \defn{symmetry pattern} for
  $\struct= \transys$ is a simulation preorder $R \subseteq 
  S \times S$ for $\struct$, i.e., satisfying:
\begin{description}
    \item[(S1)] $R$ respects each $\to_a$, i.e., for all $v_1,v_2,w_1 \in S$,
        if $v_1 \to_a w_1$, and $(v_1,v_2) \in R$, then there exists
        $w_2 \in S$ such that $(w_1,w_2) \in R$ and $v_2 \to_a w_2$;
    \item[(S2)] $R$ is \emph{length-decreasing}, i.e., for all $v_1,v_2 \in S$,
        if $(v_1,v_2) \in R$, then $|v_1| \geq |v_2|$.
\end{description}
The symmetry pattern is said to be \defn{complete} if additionally the relation 
is length-preserving and a bijective function.
\end{definition}
Complete symmetry patterns will also be denoted by functional notation~$f$.
In the case of complete symmetry pattern $f$, it can be observed 
that Condition (S1) also entails that $f(v) \to_a f(w)$ implies $v \to_a w$. 
This condition does not hold in general for simulation preorders. We shall
also remark that, owing to the well-known property of simulation preorders,
symmetry patterns preserve non-safety. To make this notion more precise,
we define the image of a transition system $\struct = \transys$ (with 
$S \subseteq \ialphabet^*$) under the symmetry pattern $R$ as the transition 
system $\struct_1 = \langle S_1; \to_1 \rangle$ such that
$S_1 = R(S)$ and that $\to_1$ is the restriction of $\to$ to $S_1$.
\begin{proposition}
    Given two sets $I, F \subseteq \ialphabet^*$, if $post_{\to_1}^*(R(I)) 
    \cap R(F) = \emptyset$, then $post_{\to}^*(I) \cap
    F = \emptyset$.
\end{proposition}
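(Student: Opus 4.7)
The plan is to argue by contraposition: assuming there exists a reachable bad state in the original system, I would construct a simulating path in the image system that reaches a bad state, contradicting the hypothesis. This is a textbook use of simulation preserving non-safety, tailored to the present definition of $R$.

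Concretely, suppose for contradiction that $post_{\to}^*(I) \cap F \neq \emptyset$, so there is a witnessing run $v_0 \to v_1 \to \cdots \to v_n$ with $v_0 \in I$ and $v_n \in F$. Using that the simulation preorder $R$ is defined on all of $S$ (implicit in calling it a preorder, and trivially true for the complete case where $R$ is a bijection), fix some $w_0$ with $(v_0, w_0) \in R$; then $w_0 \in R(I)$ by definition of the image. Now iterate (S1): given $w_i \in S$ with $(v_i, w_i) \in R$ and the original step $v_i \to v_{i+1}$, condition (S1) produces some $w_{i+1} \in S$ such that $w_i \to w_{i+1}$ and $(v_{i+1}, w_{i+1}) \in R$. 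Since $v_{i+1} \in S$ gives $w_{i+1} \in R(S) = S_1$ (and similarly $w_i \in S_1$), this step actually lies in $\to_1$, the restriction of $\to$ to $S_1$.

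After $n$ such applications, I obtain a path $w_0 \to_1 w_1 \to_1 \cdots \to_1 w_n$ in $\struct_1$ starting in $R(I)$. Since $(v_n, w_n) \in R$ and $v_n \in F$, the endpoint $w_n$ lies in $R(F)$. Hence $w_n \in post^*_{\to_1}(R(I)) \cap R(F)$, contradicting the hypothesis that this intersection is empty.

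The only conceptual subtlety — the ``main obstacle,'' if one can call it that — is the totality of $R$ on $S$, i.e.\ that every state $v_0 \in I$ has at least one $R$-related $w_0$. This is not explicitly listed as (S1) or (S2), but is either part of the meaning of ``preorder,'' or automatic in the complete case (bijective functions). Under this standing assumption the argument is a straightforward induction on path length; no further automata-theoretic machinery is needed.
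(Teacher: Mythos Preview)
Your argument is correct and is exactly the standard simulation-preserves-non-safety argument; the paper itself does not spell out a proof, merely citing ``the well-known property of simulation preorders,'' so there is nothing to compare against. Your flag about totality is apt: it is indeed not listed among \textbf{(S1)}--\textbf{(S2)}, and without it the statement fails (take $R$ with empty domain on $I$); reading ``preorder'' as entailing reflexivity on $S$ supplies the missing hypothesis.
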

In other words, if $\struct_1$ is safe, then so is $\struct$. In the case
when $\struct_1$ is finite-state, this check can be performed using a
standard finite-state model checker. We shall define now a class of symmetry
patterns under which the image $\struct_1$ of the input transition system
can be automatically computed.

\OMIT{
In the sequel, we will restrict ourselves to two kinds of symmetries:
\begin{definition}[Process Symmetry]
  \label{def:process}
    A \defn{symmetry} $f: S \to S$ for a transition 
    system~$\struct = \transys$ is said to be a \defn{process symmetry} 
    if, 
  whenever $v\in \palphabet^*$ and $w \in \dalphabet^*$, we have that
  $f(vw) = v'w$ for some $v' \in \palphabet^*$. 
  Furthermore, for every length~$n \in \mathbbm{N}$,
  there is a permutation~$\sigma_n: [n] \to [n]$ such that, for each $v = 
  a_1\cdots a_n \in \palphabet^*$, it is the case that
    \begin{equation*}
      v' = a_{\sigma_n(1)} \ldots a_{\sigma_n(n)}.
    \end{equation*}
\end{definition}

\begin{definition}[Data Symmetry]
  \label{def:data}
    A \defn{(partial) symmetry} $f: S \to S$ for a transition 
    system~$\struct = \transys$ is said to be a \defn{(partial) data symmetry} if, 
  whenever $v\in \palphabet^*$ and $w \in \dalphabet^*$, we have that
  $f(vw) = vw'$ for some $w' \in \dalphabet^*$. 
  \OMIT{
  \begin{description}
  \item[(P1)] for every length~$n \in \mathbbm{N}$ there is a
    permutation~$\sigma_n$ such that
    \begin{equation*}
      f(w_1 \ldots w_n) = w_{\sigma_m(1)} \ldots w_{\sigma_m(n)}
    \end{equation*}
    for every word~$w_1 \ldots w_n \in \Sigma^*$ of length~$n$.
  \end{description}

  A \defn{data symmetry} is \ldots
  \begin{description}
  \item[(D1)] bla
  \end{description}
  }
\end{definition}
}
\OMIT{
Symmetries that combine both data and process (partial) symmetries can
be handled in our framework, but for the sake of simplicity we shall only deal with 
(partial) symmetries that keep either data or process values fixed.
The set of symmetries in general, the set of process symmetries, and
the set of data symmetries of a transition system has the structure of a group, 
with functional composition~$\circ$ as operation and identity as neutral element. 
The same holds for symmetries restricted to words of length~$n$. In the case of
partial symmetries, the corresponding sets only have the structure of a monoid
(i.e. since the inverse of a partial symmetry is not a function in general). In the 
sequel, we shall restrict ourselves to regular (partial) process 
symmetries and regular (partial) data symmetries:
}
\begin{definition}[Regular symmetry pattern]
    A symmetry pattern $R \subseteq S \times S$ for an automatic transition 
    system~$\struct = \transys$ is said to be \defn{regular} if the relation
    $R$ is regular.
    %
\end{definition}
\begin{proposition}
    Given an automatic transition system $\struct = \transys$ (with
    $S \subseteq \ialphabet^*$) and a regular symmetry pattern $R \subseteq
    S \times S$, the image of $\struct$ under $R$ is an automatic
    transition system and can be constructed in polynomial-time.
    \label{prop:proj}
\end{proposition}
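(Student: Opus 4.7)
The plan is to give explicit polynomial-time constructions of an NFA for $S_1$ and a transducer for $\to_1$, relying on the standard closure properties of regular languages and regular relations under product, intersection, and projection.

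First, to obtain an NFA for $S_1 = R(S)$, I would use that $R$ is given by a letter-to-letter transducer $T_R$ over $\ialphabet_\padding \times \ialphabet_\padding$ and $S$ is given by some NFA $\mathcal{A}_S$ over $\ialphabet$. I would form a synchronised product with $\mathcal{A}_S$ constraining the first coordinate: more concretely, build an NFA that, on input $v \otimes w$, simultaneously simulates $T_R$ and simulates $\mathcal{A}_S$ on the first component (ignoring the first component whenever it is $\padding$). The product accepts exactly those encodings $v \otimes w$ with $v \in S$ and $(v,w) \in R$. Projecting onto the second coordinate—which is a linear operation—yields an NFA over $\ialphabet_\padding$; converting $\padding$ symbols into $\epsilon$-transitions and performing $\epsilon$-elimination produces an NFA $\mathcal{A}_{S_1}$ over $\ialphabet$ for $S_1$.

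Second, to obtain a transducer for $\to_{a,1}$, observe that by definition $\to_{a,1}$ is the restriction of $\to_a$ to $S_1 \times S_1$. Given the transducer $T_a$ for $\to_a$ and the NFA $\mathcal{A}_{S_1}$ just constructed, I would form the letter-to-letter product of $T_a$ with two copies of $\mathcal{A}_{S_1}$, one running on each coordinate of the encoding $v \otimes w$ (treating $\padding$ on either side as an $\epsilon$-move of the corresponding $\mathcal{A}_{S_1}$, so that padded letters do not advance the $S_1$-check). The resulting transducer accepts $v \otimes w$ iff $v \to_a w$ and $v, w \in S_1$, i.e., it recognises exactly $\to_{a,1}$.

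The main subtle point, and the one I would take most care over, is that $R$ is only length-\emph{decreasing} rather than length-preserving, so the second coordinate of $v \otimes w$ may end with a nonempty block of $\padding$ symbols; naive projection would therefore yield words over $\ialphabet_\padding$ rather than $\ialphabet$, and similarly the $S_1$-check on the two coordinates of $T_a$ must tolerate padding. Both issues are resolved by treating $\padding$ as $\epsilon$ in the appropriate components and applying standard $\epsilon$-elimination. Finally, all three operations used—product, projection, and $\epsilon$-removal—are polynomial in the sizes of their inputs, so the overall construction runs in polynomial time in $|T_R| + |\mathcal{A}_S| + \sum_{a \in \ACT} |T_a|$, establishing the proposition.
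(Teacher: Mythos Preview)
Your proposal is correct and follows essentially the same approach as the paper: project $R$ onto its second component (treating $\padding$ as $\epsilon$) to obtain an NFA for $R(S)$, then restrict $\to$ to $R(S)$ via a product construction. The paper's proof simplifies by assuming $S = \ialphabet^*$ and remarks that general regular $S$ is handled by an additional product, which is exactly what you spell out; your extra care about padding in the $\to_{a,1}$ product is harmless but in fact unnecessary, since the transducers for $\to_a$ are length-preserving by definition of automatic transition systems.
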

In particular, whether the image of $\struct$ under $R$ is a finite system
can be automatically checked since checking whether the language
of an NFA is finite can be done in polynomial-time.
The proof of this proposition (in the \shortlong{full version}{appendix})
is a simple automata 
construction that relies on the fact that regular relations are closed under 
projections. We shall next illustrate the concept of regular symmetry patterns 
in action,
    especially for Israeli-Jalfon self-stabilising protocol and Gries's coffee
    can problem.

    We start with Gries's coffee can problem (cf. Example \ref{ex:gries}).
    Consider the function $f: (\N \times \N) \to (\N \times \N)$ where
    $f(x,y)$ is defined to be (i) $(0,1)$ if $y$ is odd, (ii) $(2,0)$ if $y$ is 
    even and $(x,y) \neq (1,0)$, and (iii) $(1,0)$ if $(x,y) = (1,0)$.
    \OMIT{
    \[
        f(x,y) := \left\{ \begin{array}{ll}
                        (0,1)  & \text{\quad if $y$ is odd} \\
                (2,0)  & \text{\quad if $y$ is even and $(x,y) \neq (1,0)$} \\
                        (1,0)  & \text{\quad if $(x,y) = (1,0)$}
                          \end{array}
                  \right.
    \]
    }
    This is a symmetry pattern since the last bean for the coffee can problem 
    is white iff $y$ is odd. Also, that a configuration 
    $(x,y)$ with $y \equiv 0 \pmod{2}$ and $x > 1$ is mapped to $(2,0)$ is 
    because $(2,0)$ has a self-loop, while $(1,0)$ is a dead end. 
    It is easy to show that $f$ is a regular symmetry pattern. To this end,
    we
    construct a transducer for each of the cases (i)--(iii).
    For example, the transducer handling the case $(x,y)$ when $y \equiv 1 
    \pmod{2}$ works as follows: simultaneously read the pair
    $(v,w)$ of words and ensure that $w = \bot_\x\bot_\x1_y$ and
    $v \in 1_\x^*\bot_\x^*1_\y(1_\y1_\y)^*\bot_\y^*$.
    \OMIT{
    \begin{center}
    \begin{tikzpicture}[%
    >=stealth,
    shorten >=1pt,
    node distance=2cm,
    on grid,
    auto,
    state/.append style={minimum size=2em},
    thick
  ]
    \node[state] (A)              {};
    \node[state] (D)      [right of=A] {};
    \node[state] (E)      [right of=E] {};
    \node[state,accepting] (B) [right of=D] {};
    \node[state] (C) [right of=B] {};

    \path[->] (A) +(-1,0) edge (A)
              (A)         edge              node {$\{1_\x,\bot_\x\}/\bot_\x$}(D)
              (D)         edge              node {$\{1_\x,\bot_\x\}/\bot_\x$}(E)
              (E)         edge [loop above] node {$\{1_\x,\bot_\x\}/\bot_\x$} ()
              (A)         edge              node {$1_\y/1_\y$} (B)
              (B)         edge [bend left]  node {$\{1_\y,\bot_\y\}/\bot_\y$} (C)
              (C)         edge [bend left]  node {$\{1_\y,\bot_\y\}/\bot_\y$} (B);
  \end{tikzpicture}
  \end{center}
  }
  As an NFA, the final transducer has $\sim 10$ states. 

\paragraph{Process symmetry patterns.}
We now apply the idea of regular symmetry patterns to capture process
symmetries in parameterised systems. We shall show how this applies to
Israeli-Jalfon self-stabilising protocol.
A \defn{parameterised permutation} is a family $\bar\pi = \{\pi_n\}_{n \geq 1}$ 
of permutations $\pi_n$ on $[n]$. We say that $\bar\pi$ is \defn{regular} if, 
for each alphabet $\ialphabet$, the bijection $f_{\bar\pi}: \ialphabet^* \to 
\ialphabet^*$ defined by $f_{\bar\pi}(\vecV) := \pi_n\vecV$, where $\vecV \in 
\ialphabet^n$, is a regular relation. We say that $\bar\pi$ is \defn{effectively
regular} if $\bar\pi$ is regular and if there is an algorithm which,
on input $\ialphabet$, constructs a transducer for the bijection $f_{\bar\pi}$.
As we shall only deal with effectively regular permutations, when understood
we will omit mention of the word ``effectively''.
As we shall see below, examples of effectively regular parameterised 
permutations include transpositions (e.g. $\{(1,2)(3)\cdots (n)\}_{n \geq 2}$) 
and rotations $\{(1,2,\ldots,n)\}_{n \geq 1}$. 

We now extend the notion of parameterised permutations to 
\defn{parameterised symmetry groups} $\mathcal{G} := \{G_n\}_{n \geq 1}$
for parameterised systems, i.e., each $G_n$ is a permutation group on $[n]$.
A finite set $F = \{\bar\pi^1,\ldots,\bar\pi^r\}$ of parameterised permutations 
(with $\bar\pi^j = \{\pi_n^j\}_{n \geq 1}$)
\defn{generates} the parameterised symmetry groups $\mathcal{G}$ if each group 
$G_n \in \mathcal{G}$ can be generated by the set $\{ \pi_n^j : j \in [r]\}$,
i.e., the $n$th instances of parameterised permutations in $F$. We say that
$\mathcal{G}$ is \defn{regular} if each $\bar\pi^j$ in $F$ is regular. 

We will single out three 
commonly occuring process symmetry groups for concurrent systems with $n$ 
processes: full symmetry group $\permgroup_n$ (i.e. generated by $(1,2)$ and 
$(1,2,\ldots,n)$), rotation group 
$\rotgroup_n$ (i.e. generated by $(1,2,\ldots,n)$), and the dihedral group 
$\dihgroup_n$ (i.e. generated by $(1,2,\ldots,n)$ and the ``reflection'' 
permutation $(1,n)(2,n-1)\cdots (\lfloor n/2 \rfloor,\lceil n/2\rceil)$). 
The parameterised versions of them are: (1) $\permgroup := \{\permgroup_n\}_{n
\geq 1}$, (2) $\rotgroup := \{\rotgroup_n\}_{n \geq 1}$, and (3)
$\dihgroup := \{\dihgroup_n\}_{n \geq 1}$. 
\begin{theorem}
    Parameterised full symmetry groups $\permgroup$ and parameterised rotation 
    symmetry groups $\rotgroup$ are effectively regular.
    \label{th:paramreg}
\end{theorem}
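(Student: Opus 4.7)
The plan is to reduce the statement to exhibiting, for each of the two groups, a finite family of effectively regular parameterised permutations that generate it. Recall that $\rotgroup_n$ is generated by the single cycle $(1,2,\ldots,n)$ and $\permgroup_n$ is generated by the transposition $(1,2)$ together with $(1,2,\ldots,n)$. So, setting $\bar\tau = \{(1,2)(3)\cdots(n)\}_{n\ge 2}$ and $\bar\rho = \{(1,2,\ldots,n)\}_{n\ge 1}$, it suffices to prove that both $\bar\tau$ and $\bar\rho$ are effectively regular; closure under finite unions of generating sets then gives the theorem.

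For $\bar\tau$: given an input alphabet $\ialphabet$, I would describe a letter-to-letter transducer whose behaviour is to read $(a_1,b_1)(a_2,b_2)w$ with $w \in (\{(c,c) : c \in \ialphabet\})^*$ while enforcing $b_1 = a_2$ and $b_2 = a_1$. Concretely, on position~$1$ the transducer reads $a_1$, nondeterministically guesses $g \in \ialphabet$, outputs $g$, and stores $(a_1,g)$ in the control state; at position~$2$ it reads $a_2$, verifies $a_2 = g$, outputs $a_1$; from position~$3$ onwards it loops in a single state copying letters. This gives an NFA of size $O(|\ialphabet|^2)$, constructible in polynomial time, that recognises $f_{\bar\tau}$ restricted to words of length $\ge 2$ over $\ialphabet$. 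For length~$1$ one simply takes the identity transducer.

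The main step is $\bar\rho$. Unfolding the action $\pi\vecV := (a_{\pi^{-1}(1)},\ldots,a_{\pi^{-1}(n)})$ with $\pi = (1,2,\ldots,n)$ and $\pi^{-1}(1)=n$, $\pi^{-1}(i)=i-1$ for $i>1$, the desired output on input $a_1\cdots a_n$ is $a_n a_1 \cdots a_{n-1}$. The obstacle is that a synchronised transducer must emit $a_n$ in its first step, before it has read it; this is exactly the situation letter-to-letter transducers handle via nondeterministic guessing. I would therefore build an NFA whose state records a pair $(g,p) \in \ialphabet \times \ialphabet$, where $g$ is a guess for the last input letter and $p$ is the most recently read input letter. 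At step~$1$, it reads $a_1$, guesses $g$, outputs $g$, and moves to state $(g,a_1)$. At step $i > 1$ it reads $a_i$, outputs the stored $p$, and updates the state to $(g,a_i)$. Accepting states are those $(g,p)$ with $p = g$, so that acceptance verifies the guess $g = a_n$. This NFA has $O(|\ialphabet|^2)$ states, is produced in polynomial time from $\ialphabet$, and recognises $f_{\bar\rho}$ (for $n \ge 1$, with the one-letter case handled by a single accepting self-loop on $\{(a,a):a\in\ialphabet\}$).

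With effectively regular transducers in hand for both $\bar\tau$ and $\bar\rho$, the definition of ``generates'' for parameterised symmetry groups gives that $\{\bar\rho\}$ witnesses the effective regularity of $\rotgroup$ and $\{\bar\tau,\bar\rho\}$ witnesses that of $\permgroup$, completing the proof. The only delicate point is the guess-and-verify construction for the rotation; everything else is a routine bookkeeping argument on finite states.
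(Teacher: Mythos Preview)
Your proposal is correct and follows essentially the same approach as the paper: reduce to showing that the parameterised transposition $\bar\tau$ and the parameterised rotation $\bar\rho$ are effectively regular, and construct for each an $O(|\ialphabet|^2)$-state transducer, using a guess-and-verify of the last input letter for $\bar\rho$. The paper's constructions differ only cosmetically (it routes to a dedicated final state rather than making the diagonal states $(g,g)$ accepting), and it likewise brackets off the length-$1$ case as a trivial special case.
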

As we will see in Proposition~\ref{prop:dihedralreg} below, parameterised dihedral groups
are not regular. We will say how to deal with this in the next section.
As we will see in Theorem \ref{th:invariantreg}, Theorem \ref{th:paramreg} can 
be used to construct a fully-automatic method for checking whether 
\emph{each} instance $\struct_n$ of a parameterised system $\struct =
\{\struct_n\}_{n \geq 0}$ represented by a given transducer $\Aut$ has a
full/rotation process symmetry group. 
\begin{proof}[sketch of Theorem \ref{th:paramreg}]
    To show this, it suffices to show that 
    $\mathcal{F} = \{(1,2)(3)\cdots (n)\}_{n\geq 2}$
    and
    $\mathcal{F}' = \{(1,2,\ldots,n)\}_{n\geq 2}$ are effectively
    regular. [The degenerate case when $n = 1$ can be handled easily if
    necessary.] For, if this is the case, then the parameterised full symmetry 
    $\permgroup$ and the parameterised rotation symmetry groups can be 
    generated by (respectively) $\{\mathcal{F},\mathcal{F}'\}$ and 
    $\mathcal{F}'$. Given an input $\ialphabet$, the transducers for both 
    $\mathcal{F}$ and $\mathcal{F}'$ are easy. For example, the transducer
    for $\mathcal{F}$ simply swaps the first two letters in the input, i.e.,
    accepts pairs of words of the form $(abw,baw)$ where $a,b \in \ialphabet$ 
    and $w \in \ialphabet^*$. These transducers can be constructed in 
    polynomial time (details in the \shortlong{full version}{appendix}). \qed
\end{proof}
The above proof shows that $\{(1,2)(3)\cdots (n)\}_{n \geq 0}$
and $\{(1,2,\ldots,n)\}_{n \geq 0}$ are regular parameterised permutations.
Using the same proof techniques, we can also show that the
following simple variants of these parameterised permutations are also regular
for each $i \in \Z_{> 0}$: (a) $\{(i,i+1)(i+2)\cdots (n)\}_{n \geq 1}$, and (b) 
$\{(i,i+1,\ldots,n)\}_{n \geq 1}$.
As we saw from Introduction, the prioritised resource-allocator protocol
has a star topology and so both $\{(2,3)(4)\cdots \}_{n \geq 1}$ and
$\{(2,3,\ldots,n)\}_{n \geq 1}$ generate complete symmetry patterns for
the protocol (i.e. invariant under permuting the clients). 
Therefore, our library $\mathcal{L}$ of regular symmetry patterns 
could store all of these 
regular parameterised permutations (up to some fixed $i$). 

    Parameterised dihedral groups $\dihgroup$ are generated by 
    rotations $\bar\pi = \{(1,2,\ldots,n)\}_{n \geq 2}$ and
    reflections $\bar\sigma = 
    \{(1,n)(2,n-1)\cdots (\lfloor n/2\rfloor,\lceil n/2\rceil)\}_{n
    \geq 2}$. Reflections $\bar\sigma$ are, however, not regular for the same
    reason that the language of palindromes (i.e. words that are the same
    read backward as forward). In fact, it is not possible to find a different 
    list of generating parameterised permutations that are regular (proof in 
    the \shortlong{full version}{appendix}):

\begin{proposition}
    Parameterised dihedral groups $\dihgroup$ are not regular.
    \label{prop:dihedralreg}
\end{proposition}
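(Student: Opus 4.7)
The plan is to split the proof into two stages: a group-theoretic reduction to a single regular generator, followed by an automata-theoretic Myhill-Nerode argument. For the first stage, I would use that the rotation subgroup $\rotgroup_n$ has index~$2$ in $\dihgroup_n$, so any generating set of $\dihgroup_n$ must contain at least one element outside $\rotgroup_n$, namely a reflection. Consequently, for any finite family $\{\bar\pi^1, \ldots, \bar\pi^r\}$ of regular parameterised permutations generating $\dihgroup$ and every $n \geq 3$, some index $j(n) \in [r]$ has $\pi_n^{j(n)}$ a reflection of $\dihgroup_n$. Pigeonholing the finitely many values of $j(n)$ would yield a single index $j^*$ and an infinite set $N \subseteq \N$ on which $\pi_n^{j^*}$ is a reflection; write $\pi_n^{j^*} = \sigma_{n, c_n}$ with $\sigma_{n, c}(i) := (c - i) \bmod n$, taking representatives in $\{1, \ldots, n\}$.

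For the second stage, I would argue that no regular parameterised permutation can coincide with a reflection of $\dihgroup_n$ for infinitely many $n$. Suppose for contradiction that $\bar\pi^{j^*}$ is recognised by a length-preserving NFA transducer $T$ with $m$ states over $\Sigma = \{a, b\}$. Fix $n \in N$ with $n \geq 2m + 4$ and set prefix length $K := m + 1$. A length-$K$ prefix of the pair-encoded word $u \otimes \sigma_{n, c_n}(u)$ encodes the values of $u$ on
\[
I_n \ := \ [1, K] \cup \{c_n - 1, c_n - 2, \ldots, c_n - K\} \bmod n,
\]
because the first coordinates directly give $u[i]$ for $i \in [1, K]$ while the second coordinates give $u[c_n - i \bmod n]$ for $i \in [1, K]$. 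Since $[1, K] \subseteq I_n$, we have $|I_n| \geq m + 1$, and varying the restriction of $u$ to $I_n$ produces at least $2^{m+1}$ pairwise distinct prefixes.

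The key claim will be that all of these prefixes lie in distinct Myhill-Nerode classes of $L(T)$. For any two of them $p_u \neq p_{u'}$, I would choose the unique length-$(n - K)$ continuation $z$ satisfying $p_u z = u \otimes \sigma_{n, c_n}(u)$. Then $p_{u'} z$ decodes to a pair $(u'', v'')$ whose first coordinate is $u'[1..K]$ followed by $u[K+1..n]$, and whose second coordinate is $\sigma_{n, c_n}(u')[1..K]$ followed by $\sigma_{n, c_n}(u)[K+1..n]$. Membership $p_{u'} z \in L(T)$ amounts to $v'' = \sigma_{n, c_n}(u'')$, which by a position-by-position inspection forces $u$ and $u'$ to coincide on every index of $I_n$, contradicting our assumption. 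Hence $L(T)$ has at least $2^{m+1}$ Myhill-Nerode classes, contradicting the subset-construction upper bound of $2^m$ for any $m$-state NFA. The hard part will be the position-by-position verification: each $i_0 \in I_n$ at which $u$ and $u'$ differ must surface as a mismatch in the reflection equation for $p_{u'} z$, which reduces to a short case analysis according to whether $i_0 \in [1, K]$ or $i_0 \in \{c_n - 1, \ldots, c_n - K\} \bmod n$; the two cases together cover $I_n$ uniformly in $c_n$.
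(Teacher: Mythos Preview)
Your two-stage plan is sound, and the first stage (any generating set for $\dihgroup_n$ must contain some reflection, then pigeonhole) is correct and essentially matches the paper's opening move. The second stage, however, has a genuine gap that is not merely cosmetic.

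The claim that ``a position-by-position inspection forces $u$ and $u'$ to coincide on every index of $I_n$'' fails precisely when the reflection $\sigma_{n,c_n}$ setwise fixes $[1,K]$, i.e.\ when $\{c_n-1,\ldots,c_n-K\}\bmod n = [1,K]$. In that situation your continuation~$z$ does \emph{not} distinguish $p_u$ from $p_{u'}$. Concretely, take $K=2$, $c_n=3$ (so $\sigma$ swaps $1\leftrightarrow 2$ and reverses $[3,n]$), $u=ab\,a^{n-2}$, $u'=ba\,a^{n-2}$: then $p_{u'}z$ decodes to $(u',\sigma(u'))$ and is accepted. More generally, whenever $\sigma$ preserves $[1,K]$ it also preserves $(K,n]$, so splicing the length-$K$ prefix of $u'\otimes\sigma(u')$ with the length-$(n-K)$ suffix of $u\otimes\sigma(u)$ always yields a word in the graph of $\sigma$. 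Since you have no control over $c_n$ (it is fixed by the hypothetical transducer), this bad case cannot simply be excluded by choosing $n$ large, and the two-case analysis you sketch does not cover it: both ``$i_0\in[1,K]$'' and ``$i_0\in\{c_n-1,\ldots,c_n-K\}\bmod n$'' hold simultaneously there, yet neither branch produces a mismatch.

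The gap is fixable---for instance by taking a longer prefix (length $2K$, say), observing that if one reflection centre lies in $[1,K]$ the other lies near $n/2$, and varying $u$ only on positions $i\in[K+1,2K]$ whose images $\sigma(i)$ then land in $[n-K+1,n]$, well outside the prefix---but this case distinction needs to be made explicit. For comparison, the paper avoids the issue altogether by a different route: it fixes one large length $M=4(m+1)^3$, writes the reflection as $sr^l$, and runs a direct pumping argument on a single carefully chosen accepted word of the shape $w_1(b,a)w_2(a,b)w_3$ with $w_i\in(a,a)^*$, shuffling repeated-state cycles between $w_2$ and $w_3$ to produce a second accepted word that violates bijectivity. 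The paper's pumping argument and your Myhill--Nerode argument are genuinely different techniques; yours is arguably more transparent once patched, but as written it needs the additional case.
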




\OMIT{
\begin{example}
\label{ex:token2}
The process symmetry group of the $n$-process instance of the system
in Example~\ref{ex:token} is generated by two process symmetries:
\begin{align*}
  f_1(w_1 \ldots w_n) &~=~ w_2w_3 \ldots w_n w_1
  \\
  f_2(w_1 \ldots w_n) &~=~ w_nw_{n-1} \ldots w_1~.
\end{align*}
In other words, the function $f_1$ corresponds to a rotation action, while the
function $f_2$ corresponds to a reflection action.
%
The process symmetry $f_1$ is regular, i.e., it can be represented by 
the transducer $(\ialphabet^2,\controls,\transrel,q_0,\finals)$:
\begin{itemize}
    \item $\controls = \{q_0\} \cup \left(\Sigma \times \Sigma\right)$,
    \item $(q_0,\binom{a}{b},(a,b)) \in \transrel$ for each $a,b \in \Sigma$; and
        $((a,b),\binom{a'}{a},(a',b)) \in \transrel$ for each $a' \in \Sigma$.
    \item $\finals = \{q_0,(\top,\top),(\bot,\bot)\}$.
\end{itemize}
On the other hand, the process symmetry $f_2$ is \emph{not} regular, which shows 
that some natural process symmetries are not regular.
\qed
\end{example}
}

\OMIT{
In order to effectively check that a function is a symmetry, or even
to automatically derive symmetries of a system, we need a suitable
finite representation of such functions. For LP-automatic transition
systems, we can again use transducers for this purpose, which leads to
elegant automata-theoretic methods for deciding properties
\textbf{(S1)} and \textbf{(S2)} above. Transducers are also amenable
to constraint-based synthesis, as discussed in the next
sections. However, not all symmetries of a system can be represented
in the form of transducers:
%
%
\begin{example}
It is possible to represent the rotation symmetry as a transducer, but not
the reflection symmetry. 
The transducer for the rotation symmetry is
$(\Sigma^2,\controls,\transrel,q_0,\finals)$:
\begin{itemize}
    \item $\controls = \{q_0\} \cup \left(\Sigma \times \Sigma\right)$,
    \item $(q_0,\binom{a}{b},(a,b)) \in \transrel$ for each $a,b \in \Sigma$; and
        $((a,b),\binom{a'}{a},(a',b)) \in \transrel$ for each $a' \in \Sigma$.
    \item $\finals = \{q_0,(\top,\top),(\bot,\bot)\}$.
      \qed
\end{itemize}
\label{ex:rotation}
\end{example}
}

\OMIT{
There are several questions to address:
\begin{enumerate}
\item How do we generate symmetries for LP-automatic transition systems?
    [We just want to generate as many as possible really ...]
\item Given a finite set $G = \{F_1,\ldots,F_m\}$ of symmetries for
    an LP transition system $\struct = \transys$, how do we use this in
    action? More precisely, given two words $v,w \in \Sigma^*$ with
    $|v| = |w|$, how do we check that $(v,w) \in G^*$? [Note that $G^*$ here
    means $\left(\bigcup_{i=1}^m F_i\right)^*$.]
\end{enumerate}
You can address Question 1 by applying the standard $L^*$ learning algorithm
for DFA. In this case, how do you implement the oracle?

For Question 2, I highly doubt that acceleration for transducers can help that
much here ... [since you'll end up with irregularity for simple example, even
for rotation symmetry]
}

\OMIT{
\subsection{Basic algorithmic properties}
The main benefit of using letter-to-letter transducers as symbolic representations 
of parameterised (partial) symmetries is that they enjoy nice algorithmic 
properties.
\begin{theorem}
    Given a transducer $R$, each of the following properties can be automatically 
    checked:
    \begin{itemize}
        \item $R$ is a partial symmetry.
        \item $R$ is a symmetry.
        \item $R$ is a (partial) data symmetry.
        \item $R$ is a (partial) process symmetry.
    \end{itemize}
    \label{th:check}
\end{theorem}
In fact, assuming that $R$ and $\to$ are given as deterministic automata, 
all these checks take polynomial-time, and can be carried out by means of
automata constructions. Otherwise, they take exponential time
in the worst case. We will provide more details on checking these properties
in Section~\ref{sec:verify}; in Section~\ref{sec:onesym}, it is also explained
how to automatically \emph{synthesise} $R$, given only the transition system
$\struct = \transys$.
}

\OMIT{
We now proceed with the proof of Theorem \ref{th:check}. To this end, we will 
first show how to check Conditions \textbf{(S1)} and \textbf{(S2)} in Definition
\ref{def:sym}.

Condition \textbf{(S1)} can be checked by saying that $R$ is (i) an injective
partial function, and (ii) total and surjective. We can express 
property (i) as
\[
    \neg \exists x,y_1,y_2: R(x,y_1) \wedge R(x,y_2) \wedge y_1 \neq y_2
\]
and
\[
    \neg \exists x_1,x_2,y: R(x_1,y) \wedge R(x_2,y) \wedge x_1 \neq x_2.
\]
There is a two-state transducer recognising the relation $x \neq y$. So,
the negation of each of these conjuncts is an existential positive first-order 
formula whose quantifier-free component is of the shape in Proposition 
\ref{prop:folklore}. Therefore, each of the above formulas can be checked in
polynomial-time. 

These properties are easily
expressible
}

\OMIT{
To prove the above theorem, we will first provide algorithms for checking
conditions \textbf{(S1)} and \textbf{(S2)} in Definition \ref{def:sym} (Condition
\textbf{(S3)} is by default satisfied since we use \emph{length-preserving}
transducers).
\begin{lemma}
    There is an algorithm for checking condition \textbf{(S1)}.
\end{lemma}
\begin{proof}

\end{proof}

In order to prove the Theorem \ref{th:check}, although it is possible to provide a 
direct automata construction (which our implementation uses), the problem is that 
they hide the forest from the trees. For this reason, we will use techniques from 
automatic structures \cite{Blum99,BG04}, i.e., expressing them as first-order 
formulas over the vocabulary $\sigma$ consisting of
all binary relations recognised by transducers and all regular languages 
(represented as NFA). To make this more
precise, we will use the following folklore result (see \cite{anthony-thesis} for
a proof).
\begin{proposition}
    Given quantifier-free first-order logic formula $\varphi$ over the vocabulary
    $\sigma$ of the form
    \[
        C_1 \wedge \cdots \wedge C_n,
    \]
    where each $C_i$ is a clause over $\sigma$ (i.e. a disjunction of literals
    over propositions of the form $R(x,y)$ and $L(x)$, where $R$ is a transducer
    and $L$ is a regular language), we can construct an automaton for $\varphi$
    in time exponential in $|\varphi|$. If $\varphi$ has no negative literal,
    then the construction is polynomial in $|\varphi|$ and exponential in
    $n$.
\end{proposition}
Notice that when $n$ is fixed (e.g. $n \leq 4$), then the construction is
polynomial when $\varphi$ has no negative literal. The proof of the above
proposition is standard (e.g. see \cite{anthony-thesis}): conjunctions are handled
by a product automata construction, while negations are handled by complementing
automata (hence, exponential for NFA).

We now prove Theorem \ref{th:check}. 


Condition \textbf{(S1)}, 
}

\section{Symmetry verification}
\label{sec:verification}
In this section, we will present our symmetry verification algorithm for
regular symmetry patterns. We then show how to extend the algorithm to a more 
general framework of symmetry patterns that subsumes parameterised dihedral
groups.
\OMIT{
given an automatic transition system $\struct = \transys$ and a
regular relation $R \subseteq S \times S$, check if $R$ is a symmetry pattern
for $\struct$. In the case when $R$ is a function (which often suffices in
practice), we show that this can be done in polynomial time assuming that
the automaton for $\struct$ is given as a DFA ($R$ can be given as an NFA). 
In addition, each of the following checks can be done automatically and 
independently from $\struct$: given 
regular relation $R \subseteq S \times S$ (1) check whether $R$ is a function, 
and (2) $R$ is a bijection.
}

\subsection{The algorithm}

\begin{theorem}
Given an automatic transition system $\struct = \transys$ and a
regular relation $R \subseteq S \times S$, we can automatically check if $R$ is 
a symmetry pattern of $\struct$. 
\label{th:verification-exp}
\end{theorem}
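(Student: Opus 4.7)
My plan is to reduce the verification question to a Boolean combination of operations on regular (transducer-recognisable) relations, exploiting closure of such relations under union, intersection, complement, composition, and projection.

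First I would dispatch condition \textbf{(S2)} separately: length-decreasingness of $R$ can be decided by a syntactic check on the transducer for $R$, verifying that no accepting run ever consumes a letter of the form $(\padding, a)$ with $a \in \ialphabet$ (equivalently, intersecting $R$ with the regular constraint $|v_1| < |v_2|$ and testing emptiness). This is polynomial.

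The interesting part is condition \textbf{(S1)}, which I would rewrite in relational-algebra form. Letting $R^{-1}$ denote the converse of $R$, (S1) is equivalent to the inclusion
\begin{equation*}
T_1 \ := \ R^{-1} \circ \to \ \subseteq \ \to \circ\, R^{-1} \ =: \ T_2,
\end{equation*}
i.e.\ for every pair $(v_2,w_1) \in T_1$ there must be some $w_2$ with $v_2 \to w_2$ and $R(w_1,w_2)$. Since $\to$ and $R$ are both transducer-recognisable (hence, viewed over the padded alphabet, regular), and since regular relations are effectively closed under converse, composition (by product-of-automata plus projection of the middle coordinate), intersection, and complementation, I can explicitly build letter-to-letter transducers for $T_1$ and $T_2$ and then reduce (S1) to the emptiness of the regular relation $T_1 \cap \overline{T_2}$. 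Emptiness of a transducer language is decidable, so this yields the desired algorithm. The construction of $T_1$ and $T_2$ is polynomial (one product and one projection each), and the final conjunction with the complement of $T_2$ is the only expensive step.

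The main obstacle, and the reason the theorem is stated without a polynomial-time bound, is precisely the complementation step for $T_2$: in general $T_2$ is obtained via projection and is therefore only presented as an NFA, so complementing it may incur the usual exponential blow-up. I expect this to be unavoidable in full generality, though the bound becomes polynomial when $\to$ is given as a DFA and $R$ is functional (so that $T_2$ itself is deterministic); this is the form in which the result is invoked for the applications discussed earlier in the paper. Everything else---forming the products, projecting to obtain $T_1$ and $T_2$, checking emptiness, and dispatching (S2)---is straightforward and runs in polynomial time.
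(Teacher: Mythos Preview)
Your argument is correct and coincides with the paper's: your $T_2 = {\to}\circ R^{-1}$ is exactly the set~$Y$ the paper constructs, and both proofs locate the single exponential step at complementing the NFA for this projected relation; the only cosmetic difference is that the paper keeps the witness~$v_1$ and works with a ternary language~$D$ over $v_1\otimes v_2\otimes w_1$, whereas you project $v_1$ out first and phrase the check as the inclusion $T_1\subseteq T_2$.

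One correction to your closing aside: functionality of $R$ does \emph{not} make $T_2$ deterministic---the projection in ${\to}\circ R^{-1}$ still yields an NFA---so that is not why the functional case is polynomial. The paper's polynomial bound for functional~$R$ (the next theorem) works differently: since each $w_1$ determines a unique $w_2$ with $R(w_1,w_2)$, the inner existential can be replaced by a direct $4$-ary product with $\neg(v_2\to w_2)$, so the only complementation required is of $\to$ itself, which is cheap when $\to$ is presented as a DFA.
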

\begin{proof}
    Let $D$ be the set of words over the alphabet $\ialphabet^3$ of the form
    $v_1 \otimes v_2 \otimes w_1$, for some words $v_1, v_2, w_2 \in 
    \ialphabet^*$ satisfying: (1) $v_1 \to w_1$, (2) $(v_1,v_2) \in R$, and  
    (3) there does \emph{not} exist $w_2 \in \ialphabet^*$ such that $v_2 \to 
    w_2$ and $(w_1,w_2) \in R$.
    Observe that $R$ is a symmetry pattern for $\struct$ iff $D$ is
    empty. An automaton $\Aut = (\ialphabet^3,\controls,\transrel,q_0,F)$
    for $D$ can be constructed via a classical automata construction.

    As before, for simplicity of presentation, we will assume that $S = 
    \ialphabet^*$; for, otherwise, we can perform a simple product automata
    construction with the automaton for $S$.
    Let $\Aut_1 = (\ialphabet^2,\controls_1,\transrel_1,q_0^1,F_1)$ be
    an automaton for $\to$, and 
    $\Aut_2 = (\ialphabet_\padding^2,\controls_2,\transrel_2,q_0^2,F_2)$ an 
    automaton for $R$. 
    
    We first construct an NFA
    $\Aut_3 = (\ialphabet_\padding^2,\controls_3,\transrel_3,q_0^3,F_3)$ for
    the set $Y \subseteq S \times S$ consisting of pairs $(v_2,w_1)$ such that
    the condition (3) above is \emph{false}. This can be done by a simple
    product/projection automata construction that takes into account the fact
    that $R$ might not be length-preserving: That is, define $\controls_3 := 
    \controls_1 \times \controls_2$, $q_0^3 := (q_0^1,q_0^2)$, and 
    $F_3 := F_1 \times F_2$.
    The transition relation $\transrel$ consists of transitions
            $((q_1,q_2),(a,b),(q_1',q_2'))$ such that, for some
            $c \in \ialphabet_\padding$, it is the case that 
            $(q_2,(b,c),q_2') \in \transrel_2$ and one of the following
            is true: (i) $(q_1,(a,c),q_1') \in \transrel_1$, (ii) 
            $q_1 = q_1'$, $b \neq \padding$, and $a = c = \padding$.
    Observe that the construction for $\Aut_3$ runs in polynomial-time.

    In order to construct $\Aut$, we will have to perform a complementation
    operation on $\Aut_3$ (to compute the complement of $Y$) and apply 
    a similar product automata construction. The former takes exponential
    time (since $\Aut_3$ is nondeterministic), while the latter costs 
    an additional polynomial-time overhead. \qed
\end{proof}

The above algorithm runs in exponential-time even if $R$ and $\struct$
are presented as DFA, since an automata projection operation in general yields
an NFA. The situation improves dramatically when $R$ is 
\emph{functional} (i.e. for all $x \in S$, there exists a unique $y \in S$ such 
that $R(x,y)$).\philipp{same should hold for partial functions}
\begin{theorem}
There exists a polynomial-time algorithm which, given an automatic transition
system $\struct = \transys$ presented as a DFA and a functional regular relation
$R \subseteq S \times S$ presented as an NFA, decides whether 
$R$ is a symmetry pattern for $\struct$.
\label{th:verification-poly}
\end{theorem}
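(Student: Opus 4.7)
The plan is to refine the construction from Theorem~\ref{th:verification-exp}, exploiting the functionality of $R$ to eliminate the exponential NFA-complementation step. The blow-up in the earlier proof came from complementing $\Aut_3$, an NFA obtained by projecting away the $w_2$-coordinate from a product automaton. When $R$ is a partial function, the witness $w_2$ is uniquely determined as $R(w_1)$ whenever defined, so we can lift $w_2$ back to an explicit coordinate and avoid that projection entirely.

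Concretely, I would rewrite the bad set as $D = D_1 \cup D_2$ with
\[
  D_2 \;=\; \{v_1 \otimes v_2 \otimes w_1 : \exists w_2.\ v_1 \to w_1,\ (v_1,v_2) \in R,\ (w_1,w_2) \in R,\ v_2 \not\to w_2\}
\]
and $D_1 = \{v_1 \otimes v_2 \otimes w_1 : v_1 \to w_1,\ (v_1,v_2)\in R,\ w_1 \notin \Dom(R)\}$; this rewriting is sound exactly because $R$ is functional, so the inner $\exists w_2$ in the original condition~(3) either has a unique witness (captured by $D_2$) or no witness at all (captured by $D_1$). For $D_2$, I would build a 4-track product over $\Sigma_\#^4$ on coordinates $(v_1,v_2,w_1,w_2)$ as the intersection of cylindrical lifts of: the given DFA for $\to$ on $(v_1,w_1)$; the NFA for $R$ on $(v_1,v_2)$; a second copy of the NFA for $R$ on $(w_1,w_2)$; and the complement DFA for $\to$ on $(v_2,w_2)$. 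The complement of $\to$ is polynomial since $\to$ is a DFA, and every other factor is polynomial, so the 4-track product is polynomial; projecting out $w_2$ then yields a polynomial-size NFA whose emptiness is testable in polynomial time.

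The main obstacle is the $D_1$ case, which requires a polynomial-size NFA for ``$w_1 \notin \Dom(R)$''; a naive complementation of $\mathrm{proj}_1(R) = \Dom(R)$ is exponential for a general NFA. I would handle this by invoking the polynomial-time conversion of a functional NFA transducer into an unambiguous one (Weber--Klemm style), after which the resulting domain automaton admits a polynomial-size presentation sufficient for detecting failed runs and slots into the same 4-track product template used for $D_2$. In the natural reading where the given functional $R$ is required to be a total function on $S$, $D_1$ is vacuous and only the $D_2$-construction is needed, which by itself already yields a polynomial-time decision procedure.
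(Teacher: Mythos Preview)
Your proposal is correct and essentially identical to the paper's proof: in the paper ``functional'' explicitly means \emph{total} (for every $x \in S$ there exists a unique $y$ with $R(x,y)$), so your $D_1$ case is vacuous---as you yourself note at the end---and your $D_2$ construction is precisely the paper's 4-track set $D$ over $(v_1,v_2,w_1,w_2)$, built by complementing the DFA for $\to$ and taking the product with the two copies of $R$ and the one copy of $\to$. The final projection onto three coordinates is unnecessary, since emptiness of the 4-track product already decides the question; the paper simply tests emptiness of that product directly.
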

\begin{proof}
    Let $D$ be the set of words over the alphabet $\ialphabet^4$ of the form 
    $v_1 \otimes v_2 \otimes w_1 \otimes w_2$, for some words
    $v_1,v_2,w_1,w_2 \in \ialphabet^*$ satisfying: (1) $v_1 \to w_1$,
        (2) $(v_1,v_2) \in R$, (2') $(w_1,w_2) \in R$, and (3) $v_2 \not\to w_2$
    Observe that $R$ is a symmetry pattern for $\struct$ iff $D$ is
    empty. The reasoning is similar to the proof of Theorem
    \ref{th:verification-exp}, but the difference now is that given any
    $w_1 \in \ialphabet^*$, there is a \emph{unique} $w_2$ such that
    $(w_1,w_2) \in R$ since $R$ is functional. For this reason, we need only
    to make sure that $v_2 \not\to w_2$. An automaton $\Aut$ 
    for $D$ can be constructed by first complementing the automaton for $\to$ 
    and then a standard product automata construction as before. The latter
    takes polynomial-time if $\to$ is presented as a DFA, while the latter
    costs an additional polynomial-time computation overhead (even if
    $R$ is presented as an NFA). \qed
\end{proof}

\begin{proposition}
    The following two problems are solvable in polynomial-space: 
    given a regular relation $R \subseteq S \times S$, check whether
    (1) $R$ is functional, and (2) $R$ is a bijective function. Furthermore,
    the problems are polynomial-time reducible to language inclusion for NFA.
    \label{prop:bijective}
\end{proposition}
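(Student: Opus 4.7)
The plan is to reduce each property to emptiness of an NFA obtained from $R$ by a product construction, which in each case is equivalent to a language inclusion question for NFAs.

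For functionality, I would build a three-tape NFA $\mathcal{B}_f$ over $\ialphabet_\padding^3$ recognising the triples $x \otimes y_1 \otimes y_2$ such that $(x,y_1) \in R$ and $(x,y_2) \in R$. The construction runs two synchronised copies of the transducer for $R$, both reading $x$ on the first tape but producing $y_1$ on the second and $y_2$ on the third; it has size $O(|R|^2)$ and can be produced in polynomial time. Then $R$ is functional iff $L(\mathcal{B}_f) \subseteq L(\mathcal{E}_{23})$, where $\mathcal{E}_{23}$ is a fixed small NFA for $\{x \otimes y_1 \otimes y_2 : y_1 = y_2\}$. This is a polynomial-time reduction to NFA language inclusion, which is in PSPACE.

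For bijectivity I additionally need injectivity, totality of the domain, and surjectivity. Injectivity is handled symmetrically to functionality by building a three-tape NFA $\mathcal{B}_i$ for $\{x_1 \otimes x_2 \otimes y : (x_1,y) \in R \wedge (x_2,y) \in R\}$ and checking inclusion in the equality relation on the first two tapes. For totality and surjectivity, I would form the projections $\pi_1(R), \pi_2(R) \subseteq \ialphabet^*$ by dropping one coordinate of $R$ (and erasing any residual $\padding$ symbols), which gives NFAs of size $O(|R|)$; totality and surjectivity are then the inclusions $S \subseteq \pi_1(R)$ and $S \subseteq \pi_2(R)$, respectively, where $S$ is regular by assumption. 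Conjoining the four inclusion instances keeps the reduction polynomial-time, and PSPACE-membership of the whole problem follows from PSPACE-membership of NFA language inclusion.

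The technical point that requires care is the treatment of the padding symbol $\padding$, since $R$ is not assumed length-preserving. When the two synchronised copies of $R$ in $\mathcal{B}_f$ read a shared first tape, the two output tapes may have different unpadded lengths, so the construction must align the unpadded prefixes and allow each output side to independently emit trailing $\padding$s; similarly, in $\mathcal{B}_i$ the two input tapes must be allowed to pad independently against the shared output. The same bookkeeping is needed when projecting $R$ to $\pi_1(R)$ and $\pi_2(R)$, where padded letters of the discarded coordinate must be contracted away to yield an NFA over plain words in $\ialphabet^*$. Once these padding conventions are carefully implemented, each construction is polynomial, and the main obstacle reduces to this routine but error-prone synchronisation of padded tapes.
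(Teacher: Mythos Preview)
Your proposal is correct and matches the paper's approach. The paper describes the argument only as a ``standard automata construction'' (deferred to the appendix), and the underlying idea visible in the surrounding text is precisely yours: express functionality and injectivity as universally quantified implications $(R(v,w_1)\wedge R(v,w_2))\to w_1=w_2$ and $(R(v_1,w)\wedge R(v_2,w))\to v_1=v_2$, express totality and surjectivity as $S\subseteq\pi_1(R)$ and $S\subseteq\pi_2(R)$, and realise each as an NFA inclusion via product/projection constructions; your explicit discussion of how to synchronise the padded tapes when $R$ is not length-preserving is exactly the bookkeeping the paper leaves implicit.
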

Observe that there are fast heuristics for checking language inclusion for NFA 
using antichain and simulation techniques (e.g. see 
\cite{abdulla-antichain,BP13}). The proof of the above proposition uses
standard automata construction, which is relegated to the 
\shortlong{full version}{appendix}.

\subsection{Process symmetries for concurrent systems}
We say that an automatic transition system $\struct = \transys$ (with $S
\subseteq \ialphabet^*$) is $\mathcal{G}$-\defn{invariant} if each instance
$\struct_n = \langle S \cap \Gamma^n; \to \rangle$ of $\struct$ is
$G_n$-invariant. If $\mathcal{G}$ is generated by regular parameterised
permutations $\bar \pi^1,\ldots,\bar \pi^r$, then $\mathcal{G}$-invariance is 
equivalent to the condition that, for each $j \in [r]$, the bijection 
$f_{\pi^j}: \ialphabet^* \to \ialphabet^*$ is a regular symmetry pattern 
for $\struct$. The following theorem is an immediate corollary of Theorem
\ref{th:verification-poly}.
\begin{theorem}
    Given an automatic transition system $\struct = \transys$ (with $S \subseteq
    \ialphabet^*$) and a regular parameterised symmetry group $\mathcal{G}$ 
    presented
    by regular parameterised permutations $\bar\pi^1,\ldots,\bar\pi^k$,
    we can check that $\struct$ is $\mathcal{G}$-invariant in polynomial-time
    assuming that $\struct$ is presented as DFA.
    \label{th:invariantreg}
\end{theorem}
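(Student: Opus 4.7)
The plan is to verify that the theorem really is ``immediate'' from Theorem~\ref{th:verification-poly} by unpacking the three layers of quantification (over $n$, over $\pi \in G_n$, and over transitions) and reducing them to $k$ independent applications of the previous theorem. First, I would recall the reformulation the paper already isolates: since each $G_n$ is generated by $\{\pi_n^1,\ldots,\pi_n^k\}$ and since ``$f_\pi$ is an automorphism of $\struct_n$'' is closed under composition, $\mathcal{G}$-invariance of $\struct$ is equivalent to the statement that, for every $j \in [k]$, the bijection $f_{\bar\pi^j}: \ialphabet^* \to \ialphabet^*$ restricts to an automorphism on every layer $\struct_n$ simultaneously. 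This is exactly the same as saying that $f_{\bar\pi^j}$, viewed as a single length-preserving relation on $\ialphabet^*$, is a (complete) symmetry pattern for $\struct$ in the sense of Section~\ref{sec:framework}.

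Next I would verify that each $f_{\bar\pi^j}$ satisfies the hypotheses of Theorem~\ref{th:verification-poly}. Because $\bar\pi^j$ is a family of permutations, $f_{\bar\pi^j}$ is length-preserving (hence length-decreasing, so (S2) holds trivially) and is a total bijective function (hence in particular a functional regular relation). By effective regularity of $\bar\pi^j$, we can in polynomial time construct a transducer $T_j$ over $\ialphabet$ recognising $f_{\bar\pi^j}$; the size of $T_j$ is polynomial in $|\ialphabet|$ and the description of $\bar\pi^j$.

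Now I would invoke Theorem~\ref{th:verification-poly} for each $j \in [k]$, feeding it the DFA for $\struct$ and the NFA~$T_j$. Each such call decides in polynomial time whether $f_{\bar\pi^j}$ respects $\to$, i.e.\ whether condition (S1) holds. Accept iff all $k$ calls accept. Correctness follows from the equivalence above, and the total runtime is polynomial since we perform $k$ polynomial-time subroutines on inputs of polynomial size.

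The main thing to be careful about is that ``$f_{\bar\pi^j}$ is a symmetry pattern of $\struct$'' as a single object really does capture $\mathcal{G}$-invariance of \emph{every} instance $\struct_n$ at once, and not only for those $n$ that appear as lengths of words in $S$. This is not an obstacle, because $f_{\bar\pi^j}$ is length-preserving and Condition (S1) is stated uniformly over all $v_1,v_2,w_1 \in S$; restricting to the sub-relation $\to_{a,n}$ on $S \cap \ialphabet^n$ and to the permutation $\pi_n^j$ recovers exactly the $n$-th layer automorphism requirement, so universal and layer-wise quantification coincide. With this observation in place, the reduction to $k$ instances of Theorem~\ref{th:verification-poly} is immediate and gives the claimed polynomial-time bound.
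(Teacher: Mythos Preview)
Your proposal is correct and follows essentially the same approach as the paper: reduce $\mathcal{G}$-invariance to checking that each generator $f_{\bar\pi^j}$ is a (functional, regular) symmetry pattern for $\struct$, and then invoke Theorem~\ref{th:verification-poly} once per generator. The paper states this in two sentences, while you additionally spell out why the generator reduction is valid and why the uniform and layer-wise formulations coincide, but the underlying argument is identical.
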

In fact, to check whether $\struct$ is $\mathcal{G}$-invariant, it suffices
to sequentially go through \emph{each} $\bar\pi^j$ and ensure that it is 
a symmetry pattern for $\struct$, which by Theorem \ref{th:verification-poly}
can be done in polynomial-time.

\subsection{Beyond regular symmetry patterns}
Proposition \ref{prop:dihedralreg} tells us that regular symmetry patterns
do not suffice to capture parameterised reflection permutation. This leads us
to our inability to check whether a parameterised system is invariant 
under parameterised dihedral symmetry groups, e.g., Israeli-Jalfon's
self-stabilising protocol and other randomised protocols including
Lehmann-Rabin's protocol (e.g. \cite{LR81}).
To deal with this problem, we extend the
notion of regular length-preserving symmetry patterns to a subclass of 
``context-free'' symmetry patterns that preserves some nice algorithmic 
properties. \emph{Proviso: All relations considered in this subsection are 
length-preserving.}

Recall that a \defn{pushdown automaton (PDA)} is a tuple 
$\pda = (\ialphabet,\salphabet,\controls,\transrel,q_0,\finals)$, where
$\ialphabet$ is the input alphabet, $\salphabet$ is the stack alphabet
(containing a special bottom-stack symbol, denoted by $\bot$, that cannot 
be popped),
$\controls$ is the finite set of control states, 
$q_0 \in \controls$ is an initial state, $\finals \subseteq \controls$ is a 
set of final states, and $\transrel
\subseteq (\controls \times \salphabet) \times \ialphabet \times 
(\controls \times \salphabet^{\leq 2})$ is a set of transitions, where
$\salphabet^{\leq 2}$ denotes the set of all words of length at most 2.
A configuration of $\pda$ is a pair $(q,w) \in \controls \times \salphabet^*$
with \defn{stack-height} $|w|$. For each $a \in \ialphabet$, we define 
the binary relation $\to_a$ on configurations of $\pda$ as follows:
$(q_1,w_1) \to_a (q_2,w_2)$ if there exists a transition
$((q_1,o),a,(q_2,v)) \in \transrel$ such that $w_1 = wo$ and $w_2 = wv$
for some $w \in \salphabet^*$. A \defn{computation path} $\pi$ of $\pda$ on 
input $a_1\ldots a_n$ is any sequence
\[
    (q_0,\bot) \to_{a_1} (q_1,w_1) \to_{a_2} \cdots \to_{a_n} 
    (q_n,w_n)
\]
of configurations from the initial state $q_0$. In the following, the 
\defn{stack-height sequence} of $\pi$ is the sequence 
    $|\bot|, |w_1|, \ldots, |w_n|$ of stack-heights.
We say that a computation path $\pi$ is \defn{accepting} if $q_n \in \finals$.

We now extend Theorem \ref{th:invariantreg} to a class of transducers that 
allows us to capture the reflection symmetry. This class consists of
``height-unambiguous'' pushdown transducers, which is a subclass of pushdown 
transducers that is amenable to synchronisation. We say that a pushdown
automaton is \defn{height-unambiguous (h.u.)} if it satisfies the restriction 
that the stack-height sequence in an \emph{accepting} computation path on 
an input word $w$ is uniquely determined by the length $|w|$ of $w$.
That is, given an accepting computation path $\pi$ on $w$ and an accepting
computation path $\pi'$ of $w'$ with $|w| = |w'|$, the stack-height 
sequences of $\pi$ and $\pi'$ coincide. Observe that the definition allows
the stack-height sequence of a non-accepting path to differ. A language
$L \subseteq \ialphabet^*$ is said to be \defn{height-unambiguous context-free
(huCF)} if it is recognised by a height-unambiguous PDA. A simple 
example of a huCF language 
is the language of palindromes (i.e. the input word is the same backward
as forward). A simple non-example of a huCF language 
is the language of well-formed nested parentheses.
This can be proved by a standard pumping argument.

We extend the definitions of regularity of length-preserving relations, 
symmetry patterns, etc.
from Section \ref{sec:prelim} and Section \ref{sec:framework} to 
height-unambiguous pushdown automata in the obvious way, e.g., a
length-preserving relation $R \subseteq S \times S$ is \defn{huCF} if 
$\{ v \otimes w : (v,w) \in R \}$ is a huCF language. We saw in Proposition
\ref{prop:dihedralreg} that parameterised dihedral symmetry groups $\dihgroup$
are not regular. We shall show now that they are huCF.
\begin{theorem}
    Parameterised dihedral symmetry groups $\dihgroup$ are effectively 
    height-unambiguous context-free.
    \label{th:dihedralhuCF}
\end{theorem}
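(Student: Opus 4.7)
My plan is to show that each of the two generating families of $\dihgroup$, namely the rotations $\bar\pi = \{(1,2,\ldots,n)\}_{n\geq 2}$ and the reflections $\bar\sigma = \{(1,n)(2,n-1)\cdots\}_{n\geq 2}$, can be recognised by a height-unambiguous pushdown transducer. The rotations case is essentially free: by Theorem~\ref{th:paramreg} they are effectively regular, and any DFA can be viewed as a PDA that never pushes or pops (beyond leaving $\bot$ on the stack), so the stack-height sequence on any input of length $n$ is constantly $1$, hence trivially determined by $n$. The entire burden therefore falls on reflections.

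For reflections, I would construct, given an input alphabet $\ialphabet$, a PDA $\pda$ over the alphabet $\ialphabet^2$ that accepts exactly the set $\{v\otimes w : |v|=|w|=n\geq 2,\ w_i = v_{n+1-i}\text{ for all }i\in[n]\}$. The PDA uses the palindrome-style strategy on a two-track input: in the \emph{push phase} it reads pairs $(v_i,w_i)$ and pushes the second coordinate $w_i$ onto the stack; then it nondeterministically switches to a \emph{pop phase}, in which it reads $(v_j,w_j)$, pops the top-of-stack symbol $c$, and checks $c = v_j$. To handle odd $n$, I allow the switch from push-phase to pop-phase to optionally consume one middle letter $(a,b)$ while enforcing $a=b$ (the unique fixed point of $\sigma_n$). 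Acceptance occurs when the stack has been emptied back to $\bot$. Correctness is a routine verification: the condition pushed and popped exactly expresses $w_i = v_{n+1-i}$.

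The key step is verifying height-unambiguity of $\pda$. Fix an input $v\otimes w$ of length $n$. Any accepting path must empty the stack, so the number of pushes must equal the number of pops; combined with the fact that every input letter is consumed in exactly one of the push, middle, or pop phases, this forces exactly $\lfloor n/2\rfloor$ pushes followed by (a middle step iff $n$ is odd) followed by exactly $\lfloor n/2\rfloor$ pops. Consequently the stack-height sequence of every accepting run is determined solely by $n$: it is $1,2,\ldots,\lfloor n/2\rfloor+1,\lfloor n/2\rfloor,\ldots,1$ for even $n$, and $1,2,\ldots,\lceil n/2\rceil,\lceil n/2\rceil,\lceil n/2\rceil-1,\ldots,1$ for odd $n$ (with the repeated value arising from the middle-letter transition that leaves the stack unchanged). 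Non-accepting runs may of course have other shapes, but the definition of height-unambiguity only constrains accepting runs.

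The main obstacle, as I see it, is not the construction itself but the careful bookkeeping required to ensure that the nondeterministic choice of \emph{when} to switch from push to pop does not produce two accepting runs of differing stack-height shape on the same input. Because the parity of $n$ dictates whether a middle no-stack-change step is used and because the total number of pushes must equal the total number of pops for acceptance, the timing of the switch is forced, which rules this out. Putting the two constructions together, $\dihgroup$ is generated by the huCF parameterised permutations $\bar\pi$ and $\bar\sigma$, both produced effectively from $\ialphabet$, which is precisely the claim.
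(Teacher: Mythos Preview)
Your overall strategy matches the paper's: reduce to reflections (rotations being regular, hence trivially huCF), and recognise reflections with a palindrome-style PDA whose push/pop counts are forced to be $\lfloor n/2\rfloor$ each, making the stack-height sequence depend only on $n$.

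There is, however, a genuine gap in your PDA construction. You push only the second coordinate $w_i$ during the push phase and, during the pop phase, you check only that the popped symbol equals the first coordinate $v_j$. This verifies $w_{n+1-j} = v_j$ for $j > \lfloor n/2\rfloor$, i.e.\ $w_i = v_{n+1-i}$ for $i \le \lfloor n/2\rfloor$, but it never constrains the \emph{other} half: the letters $v_i$ read in the push phase and the letters $w_j$ read in the pop phase are simply discarded. Concretely, for $n=2$ your PDA accepts $(v_1v_2)\otimes(w_1w_2)$ whenever $w_1 = v_2$, regardless of whether $w_2 = v_1$; so it accepts, say, $ab\otimes b c$ for every $c$.

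The fix is the one the paper uses: push the whole pair $(v_i,w_i)\in\ialphabet^2$ as a single stack symbol. In the pop phase, upon reading $(v_j,w_j)$ and popping $(a,b)$, check that $(v_j,w_j) = (b,a)$, i.e.\ the current pair equals the popped pair with its two tracks swapped. This simultaneously enforces $v_j = w_{n+1-j}$ and $w_j = v_{n+1-j}$, giving the full reflection relation. Your height-unambiguity argument is unaffected by this change, since the stack-height sequence is the same.
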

\begin{proof}
    To show this, it suffices to show that the parameterised reflection
    permutation $\bar\sigma = \{\sigma_n\}_{n \geq 2}$, where 
    $\sigma_n := (1,n)(2,n-1)\cdots (\lfloor n/2\rfloor,\lceil n/2\rceil)$,
    is huCF. To this end,
    given an input alphabet $\ialphabet$, we construct a PDA
    $\pda = (\ialphabet^2,\salphabet,\controls,\transrel,q_0,\finals)$
    that recognises $f_{\bar\sigma}: \ialphabet^* \to \ialphabet^*$
    such that $f_{\bar\sigma}(\vecV) = \sigma_n\vecV$ whenever
    $\vecV \in \ialphabet^n$. The PDA $\pda$ works just like the PDA
    recognising the language of palindromes. We shall first give the intuition.
    Given a word $w$ of the form $v_1 \otimes v_2 \in (\ialphabet^2)^*$, we 
    write $w^{-1}$ to denote the word $v_2 \otimes v_1$. On an input word
    $w_1w_2w_3 \in (\ialphabet^2)^*$, where $|w_1| = |w_3|$ and
    $|w_2| \in \{0,1\}$, the PDA will save $w_1$ in the stack and compares
    it with $w_3$ ensuring that $w_3$ is the reverse of $w_1^{-1}$. It will also
    make sure that $w_2 = (a,a)$ for some $a \in \ialphabet$ in the case
    when $|w_2|  = 1$. The formal definition of $\pda$ is given in the 
    \shortlong{full version}{appendix}.
    \qed
\end{proof}


\begin{theorem}
There exists a polynomial-time algorithm which, given an automatic transition
system $\struct = \transys$ presented as a DFA and a functional h.u.
context-free relation
$R \subseteq S \times S$ presented as an NFA, decides whether $R$ is a
symmetry pattern for $\struct$.
\label{th:verification-huCF}
\end{theorem}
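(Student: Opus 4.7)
The plan is to mimic the proof of Theorem~\ref{th:verification-poly}, but to replace the two copies of the automaton for $R$ by two copies of the h.u.\ PDA $\pda_R$ recognising $R$, and then to exploit height-unambiguity to synchronise their stacks into a single PDA.

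First, I would aim for a PDA over $\ialphabet^4$ recognising the set $D$ of words $v_1 \otimes v_2 \otimes w_1 \otimes w_2$ satisfying (1)~$v_1 \to w_1$, (2)~$(v_1,v_2) \in R$, (2')~$(w_1,w_2) \in R$, and (3)~$v_2 \not\to w_2$. Because $R$ is functional, $R$ is a symmetry pattern for $\struct$ iff $D = \emptyset$. Starting from the DFA for $\to$, I would complement it in linear time to get a DFA for $v_2 \not\to w_2$ on tracks~2 and~4, reuse the original DFA for $\to$ on tracks~1 and~3, and take their synchronous product, obtaining a polynomial-size DFA $\Aut$ that recognises the conjunction of (1) and (3).

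Second, I would build a PDA $\AutB$ that runs one copy of $\pda_R$ on tracks $(1,2)$ and a second copy on tracks $(3,4)$ with \emph{synchronised} stacks. Concretely, $\AutB$ has state space $\controls_R \times \controls_R$ and stack alphabet $\salphabet_R \times \salphabet_R$; on reading a letter of $\ialphabet^4$, it fires one transition of each copy simultaneously, popping a pair $(o_1,o_2)$ and pushing a word in $(\salphabet_R \times \salphabet_R)^{\leq 2}$ formed by zipping the two push-words, subject to the constraint that both copies exhibit the same net stack-height change in $\{-1,0,+1\}$. The size of $\AutB$ is polynomial in $|\pda_R|$. Taking the synchronous product of $\AutB$ with $\Aut$ then yields a polynomial-size PDA for $D$, whose emptiness can be decided in cubic time by the standard PDA-emptiness algorithm.

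The crux — and the main obstacle — is to show that the synchronisation constraint in $\AutB$ rules out no accepting pair of runs of $\pda_R$. Here is where h.u.\ is indispensable: since $R$ is length-preserving and all four tracks are consumed jointly, the components $v_1,v_2,w_1,w_2$ share the same length $n$, so both copies of $\pda_R$ see inputs of length $n$. By the h.u.\ property, any accepting run of $\pda_R$ on an input of length $n$ has a stack-height sequence uniquely determined by $n$; hence any two such runs have identical sequences of net stack-height changes, step by step. Consequently $L(\AutB)$ equals the intersection of the two instances of $R$ on the respective tracks, and the overall construction recognises $D$ exactly. Without height-unambiguity, the product of two pushdown transducers would not be context-free in general, so this is precisely the point at which the added structure of h.u.\ PDAs is essential.
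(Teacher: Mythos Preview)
Your proposal is correct and follows essentially the same route as the paper. The paper isolates the key step as a separate lemma (Lemma~\ref{lm:4ary}), namely that the 4-ary relation $(R \times R)$ restricted to equal-length tuples is again huCF, proved by exactly your stack-synchronisation construction with stack alphabet $\salphabet_R \times \salphabet_R$; the remaining products with $\to$ and its complement are then just the standard regular-context-free product, and emptiness of the resulting PDA is tested in polynomial time.
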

To prove this theorem, let us revisit the automata construction from the proof 
of Theorem \ref{th:verification-poly}. The problematic part of the construction
is that we need to show that,
given an huCF relation $R$, the 4-ary relation 
\begin{equation}
    \mathcal{R} := (R \times R) \cap \{ (w_1,w_2,w_3,w_4) \in (\ialphabet^*)^4: |w_1| = |w_2| 
    = |w_3| = |w_4| \}
    \tag{$*$}
\end{equation}
is also huCF. The 
rest of the construction requires only taking product with regular relations
(i.e. $\to$ or its complement), which works for unrestricted pushdown automata
since context-free languages are closed under taking product with regular 
languages via the usual product automata construction for regular languages.

\begin{lemma}
    Given an huCF relation $R$, we can construct in polynomial-time an 
    h.u. PDA recognising the 4-ary relation $\mathcal{R}$.
    \label{lm:4ary}
\end{lemma}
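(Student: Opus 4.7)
The plan is a product construction that simulates two copies of the h.u. PDA for $R$ on a single shared stack. Let $\pda = (\ialphabet^2, \salphabet, \controls, \transrel, q_0, \finals)$ be an h.u. PDA recognising $R$. The key observation is this: because $R$ is length-preserving and $\pda$ is h.u., any two accepting computation paths of $\pda$ on inputs of the same length traverse identical stack-height sequences; consequently, when reading a 4-tuple $(w_1, w_2, w_3, w_4) \in \mathcal{R}$, the two parallel runs of $\pda$ on $(w_1, w_2)$ and $(w_3, w_4)$ grow and shrink their stacks in lockstep on accepting paths, and the two stacks can be zipped into a single stack over $\salphabet \times \salphabet$.

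Concretely, I would build a PDA $\pda'$ with states $\controls \times \controls$, stack alphabet $\salphabet \times \salphabet$ (bottom symbol $(\bot, \bot)$), initial state $(q_0, q_0)$, final states $\finals \times \finals$, and transitions obtained by pairing a transition of $\pda$ reading $(a_1, a_2)$ with a transition reading $(a_3, a_4)$, \emph{subject to the constraint that the two push/pop strings $v_1, v_2$ have the same length}; the resulting stack write is their pointwise pairing (empty, a single pair, or two pairs, respectively, according to the common length $|v_1| = |v_2| \in \{0, 1, 2\}$). This construction is polynomial in $|\pda|$. To verify $L(\pda') = \mathcal{R}$, note that any accepting run of $\pda'$ projects coordinate-wise to accepting runs of the two copies of $\pda$; conversely, given $(w_1, w_2, w_3, w_4) \in \mathcal{R}$, the h.u. property applied to accepting runs of $\pda$ on inputs of common length $|w_1| = |w_3|$ forces the two individual stack-height sequences to agree at every step, so the two component transitions at each step have matching height-change and can be combined via a transition in $\transrel'$. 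Height-unambiguity of $\pda'$ is then immediate: the stack-height sequence along any accepting run of $\pda'$ coincides with that of $\pda$ on inputs of the same length, which depends only on the input length.

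The main obstacle is precisely this synchronisation step: pushdown transducers in general cannot be run in parallel on a single stack because their stacks may evolve independently, and indeed this is the very reason the paper's framework does not admit arbitrary pushdown transducers. The construction works here because height-unambiguity canonicalises the stack-height sequence along accepting runs of $\pda$, which is exactly what makes the ``$|v_1| = |v_2|$'' restriction in $\transrel'$ lossless with respect to accepting behaviour and what ensures that $\pda'$ itself inherits the h.u. property.
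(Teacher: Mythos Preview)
Your proposal is correct and follows essentially the same approach as the paper: a product construction that runs two copies of $\pda$ in lockstep on a single stack over $\salphabet \times \salphabet$, justified by the height-unambiguity of $\pda$ forcing the two accepting stack-height sequences to coincide. If anything, you spell out more than the paper does---the explicit $|v_1| = |v_2|$ constraint on paired transitions and the verification that $\pda'$ inherits the h.u.\ property---both of which the paper leaves implicit.
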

\begin{proof}
    Given a h.u. PDA 
    $\pda = (\ialphabet^2,\salphabet,\controls,\transrel,q_0,\finals)$
    recognising $R$, we will construct a PDA
    $\pda' = (\ialphabet^4,\salphabet',\controls',\transrel',q_0',\finals')$
    recognising $\mathcal{R}$. Intuitively, given an input $(v,w) \in
    \mathcal{R}$, the PDA $\pda'$ is required to run two copies of $\pda$ at 
    the same time, one on the input $v$ (to check that $v \in R$) and the other 
    on input $w$ (to check that $w \in R$). Since $\pda$ is height-unambiguous
    and $|v| = |w|$, we can assume that the stack-height sequences of 
    accepting runs of $\pda$ on $v$ and $w$ coincide. That is, in an accepting
    run $\pi_1$ of $\pda$ on $v$ and an accepting run of $\pi_2$ of $\pda$
    on $w$, when a symbol is pushed onto (resp. popped from) the 
    stack at a certain position in $\pi_1$, then a symbol is also pushed onto 
    (resp. popped from) the stack in the same position in $\pi_2$. The converse
    is also true. These two stacks can, therefore, be simultaneously simulated 
    using only one stack of $\pda'$ with $\salphabet' = \salphabet \times
    \salphabet$. For this reason, the rest of the details is a standard product 
    automata construction for finite-state automata. Therefore, the
    automaton $\pda'$ is of size quadratic in the size of $\pda$.
    The detailed definition of $\pda'$ is given in the 
    \shortlong{full version}{appendix}.
    \qed
\end{proof}

\OMIT{
\begin{theorem}
    Given an automatic transition system $\struct = \transys$ (with $S \subseteq
    \ialphabet^*$) and a parameterised symmetry group $\mathcal{G}$ 
    presented by height-unambiguous context-free parameterised permutations 
    $\bar\pi^1,\ldots,\bar\pi^k$,
    we can check that $\struct$ is $\mathcal{G}$-invariant in polynomial-time
    assuming that $\struct$ is presented as DFA.
    \label{th:invariantcfl}
\end{theorem}
}

\OMIT{
The following proposition shows that the class of huCF languages is syntactic,
i.e., there exists an algorithm for enumerating them. To show this, since
the class of context-free languages is syntactic (since we can simply
enumerate PDAs), it suffices to give an algorithm for checking whether a given
PDA is height-unambiguous.
\begin{proposition}
    Given a pushdown automaton $\mcl{P}$, it is decidable to check
    whether $\mcl{P}$ is h.u. \anthony{Say the complexity}
\end{proposition}
\begin{proof}

\end{proof}
}

We shall finally pinpoint a limitation of huCF symmetry patterns,
and discuss how we can address the problem in practice. 
It can be proved by a simple reduction from Post Correspondence
Problem that it is undecidable to check whether a given PDA is
height-unambiguous. In practice, however, this is not a major obstacle since it 
is possible
to \emph{manually} (or \emph{semi-automatically}) add a selection of huCF 
symmetry patterns to our library $\mathcal{L}$ of regular symmetry patterns
from Section \ref{sec:framework}. Observe that this effort is \emph{independent}
of any parameterised system that one needs to check for symmetry. 
Checking whether any huCF symmetry pattern in $\mathcal{C}$ is a symmetry 
pattern for a given automatic transition system $\struct$ can then be done 
automatically and efficiently (cf. Theorem \ref{th:verification-huCF}).
For example, Theorem \ref{th:dihedralhuCF} and Theorem 
\ref{th:verification-huCF} imply that we can automatically check whether an 
automatic transition system is invariant under the parameterised dihedral 
groups:
\begin{theorem}
    Given an automatic transition system $\struct = \transys$ (with $S \subseteq
    \ialphabet^*$) presented as DFA, checking whether $\struct$ is 
    $\dihgroup$-invariant can be done in polynomial-time.
\end{theorem}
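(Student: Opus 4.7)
The plan is to reduce the claim to a finite combination of the tools already developed. By definition, the parameterised dihedral group $\dihgroup = \{\dihgroup_n\}_{n \geq 2}$ is generated by just two parameterised permutations: the rotation $\bar\pi = \{(1,2,\ldots,n)\}_{n \geq 2}$ and the reflection $\bar\sigma = \{(1,n)(2,n-1)\cdots (\lfloor n/2\rfloor,\lceil n/2\rceil)\}_{n \geq 2}$. As noted after the definition of $\mathcal{G}$-invariance, $\struct$ is $\dihgroup$-invariant iff the bijections $f_{\bar\pi}$ and $f_{\bar\sigma}$ on $\ialphabet^*$ are each symmetry patterns for~$\struct$. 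So it suffices to verify this for the two generators individually.

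First I would construct the transducer/PDA for each generator. For $\bar\pi$, Theorem~\ref{th:paramreg} gives a polynomial-time construction of a (length-preserving) regular transducer for $f_{\bar\pi}$; in particular this is a trivial case of an h.u.\ pushdown transducer (with an empty stack-height sequence). For $\bar\sigma$, Theorem~\ref{th:dihedralhuCF} supplies, in polynomial time, an h.u.\ PDA recognising $f_{\bar\sigma}$. Both relations are functional (as they denote bijections on $\ialphabet^*$), so the hypotheses of Theorem~\ref{th:verification-huCF} are met.

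Next I would run the symmetry verification algorithm of Theorem~\ref{th:verification-huCF} twice: once on $\struct$ with $R = f_{\bar\pi}$, and once with $R = f_{\bar\sigma}$. Each invocation runs in time polynomial in the size of $\struct$ (presented as a DFA) and the size of the corresponding PDA. If both calls return ``yes,'' then both generators are symmetry patterns of $\struct$, so each instance $\struct_n$ is invariant under the group generated by $\{\pi_n,\sigma_n\} = \dihgroup_n$, and we conclude that $\struct$ is $\dihgroup$-invariant; otherwise some generator fails, and $\struct$ is not $\dihgroup$-invariant. Since we make only two calls to a polynomial-time procedure on polynomial-size inputs, the overall running time is polynomial.

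The only conceptual subtlety is to check that Theorem~\ref{th:verification-huCF} really applies to the reflection case, since the reflection PDA constructed in the proof of Theorem~\ref{th:dihedralhuCF} is genuinely context-free (not regular) and nondeterministic. This is handled by Lemma~\ref{lm:4ary}, which is the key ingredient for preserving the h.u.\ property under the 4-ary product $(R \times R)$ needed inside the verification construction; no further effort is required here. Thus the theorem follows by assembling Theorems~\ref{th:paramreg}, \ref{th:dihedralhuCF}, and~\ref{th:verification-huCF}.
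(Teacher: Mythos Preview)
Your proposal is correct and matches the paper's own argument, which treats the theorem as an immediate corollary of Theorems~\ref{th:dihedralhuCF} and~\ref{th:verification-huCF} (the paper gives no further proof beyond the sentence preceding the statement). Your write-up is simply a more explicit unpacking of that corollary: split into the two generators, note that the rotation transducer is regular (hence trivially h.u.) via Theorem~\ref{th:paramreg} while the reflection is h.u.\ context-free via Theorem~\ref{th:dihedralhuCF}, and invoke Theorem~\ref{th:verification-huCF} on each.
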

Among others,
this allows us to automatically confirm that Israeli-Jalfon self-stabilising
protocol is $\dihgroup$-invariant.

\section{Automatic Synthesis of Regular Symmetry Patterns}
\label{sec:onesym}

Some regular symmetry patterns for a given automatic system might not be
obvious, e.g., Gries's coffee can example. Even in the case of process
symmetries, the user might choose different representations for the same
protocol. For example, the allocator process in Example~\ref{ex:resource} 
could be represented by the last (instead of the first) letter in the word, 
which would mean that $\{(1,2,\ldots,n-1)\}_{n \geq 3}$ and
$\{(1,2)(3)\cdots (n)\}_{n \geq 3}$ are symmetry patterns for the 
system (instead of $\{(2,3,\ldots,n)\}_{n \geq 2}$ and
$\{(2,3)(4)\cdots (n)\}_{n \geq 3}$). Although we can put reasonable variations
of common symmetry patterns in our library~$\mathcal{L}$, we would benefit
from a systematic way of synthesising regular symmetry patterns for a given
automatic transition system~$\struct$. In this section, we will describe our 
automatic technique for achieving this. We focus on the case of symmetry
patterns that are \emph{total functions} (i.e.\ homomorphisms), but the
approach can be generalised to other patterns.
\OMIT{The technique is a CEGAR-loop that
involves a SAT-solver for providing a candidate regular symmetry pattern
and an automata-based method for verifying the correctness of the guess,
or returns a counterexample, which will be further incorporated into the
SAT-solver to make a better guess.}

\OMIT{
We 
focus on the task of synthesising a \emph{single} symmetry for a
system; the approach can be generalised to the generation of multiple
elements of the symmetry group, but this is beyond the scope of this
paper. \anthony{The last sentence sounds annoying}
}

Every transducer~$\mcl{A} = (\ialphabet_\padding \times
\ialphabet_\padding,Q,\delta,q_0,F)$ over $\ialphabet_\padding^*$ represents a
regular binary relation~$R$ over $\ialphabet^*$. We have shown in
Section~\ref{sec:verification} that we can automatically check whether $R$ represents
a symmetry pattern, perhaps satisfying further constraints like functionality
or bijectivity as desired by the user. Furthermore, we can also automatically 
check that it is a symmetry pattern for a given automatic transition
system~$\struct$.
Our overall approach for computing such transducers makes use of two main 
components, which are performed iteratively within a refinement loop:
\begin{description}
\item[\textsc{Synthesise}] A candidate transducer~$\Aut$ with $n$
  states is computed with the help of a SAT-solver, enforcing a
  relaxed set of conditions encoded as a Boolean constraint~$\psi$
  (Section~\ref{sec:guessing}). 
\item[\textsc{Verify}] As described in Section~\ref{sec:verification},
  it is checked whether the binary relation~$R$ represented by $\Aut$
  is a symmetry pattern for $\struct$ (satisfying further constraints
  like completeness, as desired by the user).
  If this check is negative, $\psi$ is strengthened to eliminate
  counterexamples, and \textsc{Synthesise} is invoked
  (Section~\ref{sec:counterexamples}).
\end{description}

This refinement loop is enclosed by an outer loop that increments the
parameter~$n$ (initially set to some small number $n_0$) when
\textsc{Synthesise} determines that no transducers satisfying $\psi$
exist anymore. The next sections describe the \textsc{Synthesise}
step, and the generation of counterexamples in case \textsc{Verify}
fails, in more detail.
\OMIT{
 Initially, the formula~$\psi$ approximates
the required conditions that the guessed transducer $\Aut$ 
(e.g. a symmetry pattern for $\struct$)
by capturing aspects that can be
enforced by a Boolean formula of polynomial size.
We shall describe how \textsc{Synthesise} works in more detail. The symmetry 
verification algorithm can be immediately used in \textsc{Verify}. We shall 
describe how the counterexample generation works in more detail.
}

\subsection{\textsc{Synthesise}: Computation of a Candidate Transducer $\Aut$}
\label{sec:guessing}

Our general encoding of transducers $\mcl{A} = (\ialphabet_\padding \times
\ialphabet_\padding,Q,\delta,q_0,F)$ uses a representation as a deterministic
automaton (DFA), which is suitable for our refinement loop since
counterexamples (in particular, words that should not be accepted) can
be eliminated using succinct additional constraints.  We assume that
the states of the transducer $\mcl{A}$ to be computed are $\controls =
\{1,\ldots,n\}$, and that $q_0 = 1$ is the initial state.  We use the
following variables to encode transducers with $n$ states:
\begin{itemize}
\item $x_{t}$ (of type Boolean), for each tuple $t = (q,a,b,q') \in \controls
  \times \ialphabet_\padding \times \ialphabet_\padding \times \controls$;
\item $z_q$ (of type Boolean), for each $q \in \controls$.
\end{itemize}
The assignment~$x_t = 1$ is interpreted as the existence of the
transition~$t$ in $\Aut$. Likewise, we use $z_q = 1$ to represent that
$q$ is an accepting state in the automaton; since we use DFA, it is in
general necessary to have more than one accepting state.

The set of considered transducers in step~\textsc{Synthesise} is
restricted by imposing a number of conditions, selected depending on
the kind of symmetry to be synthesised: for general symmetry
homomorphisms, conditions~\textbf{(C1)}--\textbf{(C8)} are used, for
\emph{complete} symmetry patterns \textbf{(C1)}--\textbf{(C10)}, and
for \emph{process symmetries} \textbf{(C1)}--\textbf{(C11)}.
\begin{description}
\item[(C1)] The transducer~$\Aut$ is deterministic.
\item[(C2)] For every transition~$q \tran{\biword{a}{b}} q'$ in $\Aut$
  it is the case that $a \not= \padding$.\footnote{Note that all
    occurrences of ~$\padding$ are in the end of words.}
\item[(C3)] Every state of the transducer is reachable from the
  initial state.
\item[(C4)] From every state of the transducer an accepting state can
  be reached.
\item[(C5)] The initial state~$q_0$ is accepting.
\item[(C6)] The language accepted by the transducer is infinite.
\item[(C7)] There are no two transitions $q \tran{\biword{a}{b}} q'$
  and $q \tran{\biword{a}{b'}} q'$ with $b \neq b'$.
\item[(C8)] If an accepting state~$q$ has self-transitions~$q
  \tran{\biword{a}{a}} q$ for every letter~$a \in \ialphabet_\padding$, then $q$
  has no outgoing edges.
  \bigskip
\item[(C9)] For every transition~$q \tran{\biword{a}{b}} q'$ in $\Aut$
  it is the case that $b \not= \padding$.
\item[(C10)] There are no two transitions $q \tran{\biword{a}{b}} q'$
  and $q \tran{\biword{a'}{b}} q'$ with $a \neq a'$.
\end{description}
Condition~\textbf{(C2)} implies that computed transducers are
length-decreasing, while \textbf{(C3)} and \textbf{(C4)} rule out
transducers with redundant states.  \textbf{(C5)} and \textbf{(C6)}
follow from the simplifying assumption that only homomorphic
symmetries patterns are computed, since a transducer representing a
total function~$\ialphabet^* \to \ialphabet^*$ has to accept the empty
word and words of unbounded length. Note that \textbf{(C5)} and
\textbf{(C6)} are necessary, but not sufficient conditions for total
functions, so further checks are needed in \textsc{Verify}.
\textbf{(C7)} and \textbf{(C8)} are necessary (but again not sufficient)
conditions for transducers representing total functions, given the
additional properties \textbf{(C3)} and \textbf{(C4)}; it can be shown
that a transducer violating \textbf{(C7)} or \textbf{(C8)} cannot be a
total function.
Condition~\textbf{(C9)} implies that padding~$\padding$ does not occur
in any accepted word, and is a sufficient condition for
length-preservation; as a result, the symbol~$\padding$ can be
eliminated altogether from the transducer construction.

Finally, for \emph{process symmetries} the assumption can be
made that the transducer preserves not only word length, but also the
number of occurrences of each symbol:
\begin{description}
\item[(C11)] The relation~$R$ represented by the transducer only
  relates words with the same Parikh vector, i.e., $R(v, w)$
  implies $\Parikh{v} = \Parikh{w}$.
\end{description}

The encoding of the conditions~\textbf{(C1)}--\textbf{(C11)}
as Boolean constraints is mostly straightforward.
Further
Boolean constraints can be useful in special cases, in particular for
Example~\ref{ex:gries} the restriction can be made that only
\emph{image-finite} transducers are computed. 
We can also
constrain the search in the \textsc{Synthesise} stage to those
transducers that accept manually defined words $W = \{v_1 \otimes w_1,
\ldots, v_k \otimes w_k\}$, using a similar encoding as the one for
counterexamples in Sect.~\ref{sec:counterexamples}. This technique can be
used, among others, to systematically search for
symmetry patterns that generalise some known finite symmetry.

\OMIT{
\subsection{\textsc{Verify}: Checking if $\Aut$ is a
  parameterised symmetry}
\label{sec:verify}

Suppose now that a transducer $\Aut$ representing a relation $R
\subseteq \Sigma^* \times \Sigma^*$ has been synthesised. In the
second stage, it is verified that $R$ indeed satisfies the conditions
of a symmetry; as has been observed in Theorem~\ref{prop:folklore},
this can be done in polynomial time since $\Aut$ is deterministic. The
symmetry check will have one of the following outcomes:
\begin{enumerate}
\item $R$ is indeed a (partial/process/data) symmetry.
\item Some word is missing in the domain~$R \circ \Sigma^*$, i.e.,
  $R \circ \Sigma^* \not= \Sigma^*$ and $R$ is not total.
\item Some word is missing in the range~$\Sigma^* \circ R$, i.e.,
  $\Sigma^* \circ R \not= \Sigma^*$ and  $R$ is not surjective.
\item Words~$v_1, w_1, v_n, w_2 \in \Sigma^*$ are detected such that
  contradictory statements~$R(v_1, w_1)$ and $R(v_2, w_2)$ hold,
  constituting a violation of functionality, injectivity, or the
  homomorphism property.
\end{enumerate}

In cases 2--4, the computed words (or quadruplets of words) are
counterexamples that are fed back to the \textsc{Synthesise} stage;
details for this are given in Sect.~\ref{sec:counterexamples}.

\medskip
In order to prove the Theorem \ref{th:check}, 
we will use techniques from 
automatic structures \cite{Blum99,BG04}, i.e., expressing them as first-order 
formulas over the vocabulary $\sigma$ consisting of
all binary relations recognised by transducers and all regular languages 
(represented as NFA). To make this more
precise, we will use the following folklore result (e.g.,
\cite{anthony-thesis,BGR11}):
\begin{proposition}
  Given a quantifier-free first-order logic formula $\varphi$ over the
  vocabulary $\sigma$ of the form
  \begin{equation}
    \label{eq:automataFor}
    C_1 \wedge \cdots \wedge C_n,
  \end{equation}
  where each $C_i$ is a clause over $\sigma$ (i.e. a disjunction of literals
  over propositions of the form $R(x,y)$ and $L(x)$, where $R$ is a transducer
  and $L$ is a regular language), we can construct an automaton equivalent
  to $\varphi$
  in time exponential in $|\varphi|$. If $\varphi$ has no negative literal,
  then the construction is polynomial in $|\varphi|$ and exponential in
  $n$.
  \label{prop:folklore}
\end{proposition}
Notice that when $n$ is fixed (e.g. $n \leq 4$), then the construction is
polynomial when $\varphi$ has no negative literal. The proof of the above
proposition is standard (e.g. see \cite{anthony-thesis}): conjunctions are handled
by a product automata construction, while negations are handled by complementing
automata (hence, exponential for NFA).

\medskip
The required checks for symmetries can be encoded in the
schema~$\eqref{eq:automataFor}$ of the proposition, which gives an
effective procedure for checking the validity of the following
formulae. For the first two properties (totality and surjectivity), it
is enough to eliminate the existential quantifier using projection.
\begin{center}
  \begin{tabular}{l@{\qquad}@{\qquad}l}
    \textbf{Totality:}
    &
    $\exists w. ~ R(v, w)$
    \\[1ex]
    \textbf{Surjectivity:}
    &
    $\exists v.~ R(v, w)$
    \\[1ex]
    \textbf{Functionality:}
    &
    $\big(R(v,w_1) \wedge R(v,w_2)\big) \to w_1 = w_2$
    \\[1ex]
    \textbf{Injectivity:}
    &
    $\big(R(v_1,w) \wedge R(v_2,w)\big) \to v_1 = v_2$
    \\[1ex]
    \textbf{Homomorphism:}
    &
    $\big( v \to_a w \wedge R(v,v') \wedge R(w,w') \big) \to v' \to_a
    w'$
  \end{tabular}
\end{center}

It can further be observed that not all of the conditions are
independent; in particular, if a relation~$R$ satisfies totality,
surjectivity, and functionality, it is necessarily injective (since
$\Sigma$ is finite, and only length-preserving transducers are
considered).

\begin{theorem}[Completeness]
  \philipp{put theorem somewhere else?}  Suppose $\struct = \transys$
  has a regular (partial/process/data) symmetry. Then the components
  \textsc{Synthesise} and \textsc{Verify}, run within a global loop
  increasing the bound~$n$ on the number of states of considered
  transducers, will eventually derive a transducer representing a
  (partial/process/data) symmetry.
\end{theorem}

\begin{proof}
  For any given bound~$n$, the number of existing transducers is
  finite; this implies that the overall refinement loop will
  eventually consider all transducers.  \qed
\end{proof}
}

\subsection{Counterexample Generation}
\label{sec:counterexamples}

Once a transducer $\Aut$ representing a candidate relation $R
\subseteq \Sigma^* \times \Sigma^*$ has been computed,
Theorem~\ref{th:verification-exp} can be used to implement the
\textsc{Verify} step of the algorithm. When using the construction
from the proof of Theorem~\ref{th:verification-exp}, one of three
possible kinds of counterexample can be detected, corresponding to
three different formulae to be added to the constraint~$\psi$ used in
the \textsc{Synthesise} stage:
\begin{enumerate}
\item \makebox[0.67\linewidth][l]{A word~$v$ has to be included in the
    domain~$R^{-1}(\Sigma^*_\padding)$:} $\exists w.~R(v, w)$
\item \makebox[0.67\linewidth][l]{A word~$w$ has to be included in the
    range~$R(\Sigma^*_\padding)$:} $\exists v.~R(v, w)$
\item \makebox[0.67\linewidth][l]{One of two contradictory pairs has
    to be eliminated:} $\neg R(v_1, w_1) \vee \neg R(v_2, w_2)$
\end{enumerate}
Case~1 indicates relations~$R$ that are not total; case~2 relations
that are not surjective; and case~3 relations that are not functions,
not injective, or not simulations.\footnote{Note that this is for the
  special case of homomorphisms. Simulation counterexamples are more
  complicated than case~3 when considering simulations relations that
  are not total functions.}  Each of the formulae can be directly
translated to a Boolean constraint over the vocabulary introduced in
Sect.~\ref{sec:guessing}. We illustrate how the first kind of
counterexample is handled, assuming $v = a_1\cdots a_m \in
\Sigma^*_\padding$ is the word in question; the two other cases are
similar. We introduce Boolean variables~$e_{i,q}$ for each $i \in
\{0,\ldots,m\}$ and state~$q \in \controls$, which will be used to
identify an accepting path in the transducer with input letters
corresponding to the word~$v$. We add constraints that ensure that
exactly one $e_{i,q}$ is set for each state~$q \in \controls$, and
that the path starts at the initial state~$q_0 = 1$ and ends in an
accepting state:
\begin{equation*}
  \Big\{ \bigvee_{q \in \controls} e_{i,q} \Big\}_{i \in \{0,\ldots,m\}},
  \quad
  \Big\{ \neg e_{i,q} \vee \neg e_{i,q'} \Big\}_{
    \substack{i \in \{0,\ldots,m\}\\
    q \not= q' \in \controls}},\quad
  e_{0, 1},\quad
  \big\{ e_{m, q} \to z_q \big \}_{q \in \controls}~.
\end{equation*}
For each $i \in \{1,\ldots,m\}$ a transition on the path, with input
letter~$a_i$ has to be enabled:
\begin{equation*}
    \Big\{
    e_{i-1,q} \wedge e_{i, q'} \to
    \bigvee_{b \in \ialphabet} x_{(q, a_i, b, q')}
    \Big\}_{\substack{i \in \{1,\ldots,m\}\\ q, q' \in \controls}}
    ~.
\end{equation*}

\OMIT{

\subsubsection{Totality and surjection}. If $\pi_i(R)$ ($i=1,2$) denotes the projection
of $R$ onto the $i$th component, to show that $R$ is a total (resp. surjective)
function, it is sufficient and necessary to check that $\pi_1(R) = \Sigma^*$
(resp. $\pi_2(R) = \Sigma^*$). Both of these are universality checking, so we can
use tools from \url{http://languageinclusion.org}. The counterexample that we
get is a word $v$ such that $v \notin \pi_i(R)$ for some $i=1,2$. In the case
when $i=1$, we feed the following constraint back to the first step
\[
    \exists w: (v,w) \in R.
\]
In the case when $i=2$, we feed the following constraint back to the first step
\[
    \exists w: (w,v) \in R.
\]

\subsubsection{Automorphism condition} This is the condition
\textbf{(S2)}. Note that each transition relation $\to_a$ is assumed to be
given as a transducer. To check this, we look at the negation of each condition:
\[
    \exists v,w,v',w': v \to_a w \wedge R(v,v') \wedge R(w,w') \wedge v' \not\to_a
    w'.
\]

\khanh{
Assume the transducer for the transition system is $(Q, Q_{init}, \delta_Q, Q_f)$, and the transducer generated from SAT is $(R, R_{init}, \delta_R, R_f)$. We build the transducer $(S, S_{init}, \delta_S, S_f)$ over $\Sigma^4$ as below:
\begin{itemize}
	\item $S = Q \times R \times R \times Q$
	\item $S_{init} = Q_{init} \times R_{init} \times \R_{init} \times Q_{init}$
	\item $\delta_S = \{((q_1, r_1, r_2, q_2) (a, b, c, d) (q_1', r_1', r_2', q_2')) | q_1, q_2 \in Q \land r_1, r_2 \in R \land (q_1, (a/b), q_1') \in \delta_Q \land (r_1, (a/c), r_1') \in \delta_R \land (r_2, (b/d), r_2') \in \delta_R \land (q_2, (c/d), q_2') \in \delta_Q\}$
	\item $S_f = Q_f \times R_f \times R_f \times (Q \setminus Q_f)$
\end{itemize}

Here, we suppose that the transducer $(Q, Q_{init}, \delta_Q, Q_f)$ is deterministic and fully specified, i.e., given any state $q \in Q$ and any letter $a, b \in \Sigma$, there exists exactly one state $q' \in Q$ such that $(q, (a/b), q') \in \delta_Q$. Then, automorphism condition can be checked by searching for  an accepting sequence in the transducer $(S, S_{init}, \delta_S, S_f)$.
}

Again, using a standard product automata construction, if this condition is
satisfied, then we will find a witnessing counterexample $\{v,w,v',w'\}$ of the
original condition. In this case, we feed the following constraint back to the
first step
\[
    (v,v') \notin R \vee (w,w') \notin R.
\]

}

\OMIT{
\subsubsection{Checking the Parikh Condition with a SAT Solver}

\begin{lemma}
    For our guessed transducer $R$, there exists a pair $(x,y) \in R$ such that
    $\Parikh{x} \neq \Parikh{y}$ iff at least one of the following conditions is
    satisfied:
    \begin{itemize}
        \item There exists a simple accepting path $\sigma$ such that
            $\Parikh{\pi_1(\Label{\sigma})} \neq \Parikh{\pi_2(\Label{\sigma})}$.
        \item There exists a simple cycle $\sigma$ such that
            $\Parikh{\pi_1(\Label{\sigma})} \neq \Parikh{\pi_2(\Label{\sigma})}$.
    \end{itemize}
\end{lemma}
If $R$ has $n$ states, simple paths and simple cycles can have at most length
$n - 1$ and $n$, respectively. For this reason, we can check the above lemma in the
same way as the Alternative Version for (C4).

Here is the detailed version for the first item (the second item is similar).
We use the boolean encoding of $R$ given from the guessing stage, i.e.,
a boolean assignment of the variables $x_t$ and $z_q$ (we don't need $y_q$ and
$y_q'$). Let
\[
    \delta = \{ (q,a,b,q') \in Q \times \Sigma \times \Sigma \times Q :
    x_{(q,a,b,q'}) = 1 \}.
\]
We introduce the following new variables:
\begin{itemize}
    \item $e_i, e_{i,t}$ for each $i \in \{1,\ldots,n-1\}$ and
        transition $t$ in $\delta$. (Note that enumerating all possible combinations
        of $(q,a,b,q')$ will be extraneous and will be too slow)
    \item $g_{\#\sigma = i,j}^1$ for each $\sigma \in \Sigma$, $0 \leq i < j
        \leq n$
    \item $g_{\#\sigma = i,j}^2$ for each $\sigma \in \Sigma$, $0 \leq i < j
        \leq n$.
\end{itemize}
Here, $e_i$ simply says that the length of the path is at least $i$:
\[
    \bigwedge_{i=2}^{n-1} e_i \to e_{i-1}.
\]
The variable $e_{i,t}$ denotes that the $i$th transition in the path is $t$.
We now axiomatise the variables $e_{i,t}$ and its connection with $e_i$ as
a conjunction of
\[
    \bigwedge_{i=1}^{n-1} \bigwedge_{t\in\delta} \left( e_{i,t} \to x_t \right)
\]
and
\[
    \bigwedge_{i=1}^{n-1} e_i \to \bigvee_{t \in \delta} \left( e_{i,t} \wedge
            \bigwedge_{t' \in \delta\setminus\{t\}} \neg e_{i,t'} \right)
\]
and
\[
    \bigwedge_{i=1}^{n-1} \left( \neg e_i \to \bigwedge_{t \in \delta} \neg e_{i,t}
            \right).
\]
We need to initialise $e_{0,t}$:
\[
    \bigvee_{(1,a,b,q) \in \delta} e_{1,(1,a,b,q)}.
\]
We also need to axiomatise the final transitions by the following conjunction of
\[
    \bigwedge_{i=1}^{n-2} \left( \left( e_i \wedge \neg e_{i+1}\right) \longrightarrow
    \bigvee_{t=(q,a,b,q') \in \delta, z_{q'}=1} e_{i,t} \right)
\]
and
\[
    e_{n-1} \longrightarrow
        \bigvee_{t=(q,a,b,q') \in \delta, z_{q'}=1} e_{n-1,t}.
\]
We now encode the connection of the transitions, which can be done as in (C4):
\[
    \bigwedge_{t=(q_1,a,b,q_2)\in\delta}
    \bigwedge_{i=1}^{n-2}
    \left( \left( e_{i,t} \wedge e_{i+1} \right)
    \longrightarrow \bigvee_{t'=(q_2,a',b',q_3)\in\delta} e_{i+1,t'}\right).
\]
The variables of the form $g_{\#\sigma = i,j}^k$ do ``counting'', i.e.,
when the $j$th transition has been seen, the number of $\sigma$ in the projection
of the path to the $k$th track equals $\#\sigma$. We axiomatise these variables
as follows:
\begin{itemize}
    \item Taking only one value:
        \[
            \bigwedge_{k=1}^2\bigwedge_{\sigma \in \Sigma}\bigwedge_{j=1}^{n}\bigwedge_{i=0}^n
            \left( g_{\#\sigma=i,j}^k \longleftrightarrow
                \bigwedge_{i' \in \{0,\ldots,n\}\setminus \{i\}}
            \neg g_{\#\sigma=i',j}^k\right)
        \]
    \item Initial value:
        \[
            \bigwedge_{k=1}^2 \bigwedge_{\sigma\in\Sigma} g_{\#\sigma=0,1}^k
        \]
    \item Step-wise connection:
        \[
            \bigwedge_{j=1}^{n-1}\bigwedge_{0\leq i_1,i_2 < j}
            \bigwedge_{t=(q,\sigma_1,\sigma_2,q') \in \delta}
            [
            \left( g_{\#\sigma_1=i_1,j}^1 \wedge g_{\#\sigma_2=i_2,j}^2
                \wedge e_{j,t}\right)
            \longrightarrow
            \left( g_{\#\sigma_1=i_1+1,j+1}^1 \wedge g_{\#\sigma_2=i_2+1,j+1}^2
            \right)
            ].
        \]
\end{itemize}
Finally, we need to say the difference in the letter counts between 1st and
2nd track as a conjunction of
\[
    \bigwedge_{j=1}^{n-2} \left( (e_j \wedge \neg e_{j+1}) \longrightarrow
    \bigvee_{\sigma\in\Sigma} \bigwedge_{0 \leq i_1 \neq i_2 \leq j}
    \left( g_{\#\sigma=i_1,j+1}^1 \wedge g_{\#\sigma=i_2,j+1}^2 \right)
    \right)
\]
and
\[
    e_{n-1} \longrightarrow
    \bigvee_{\sigma\in\Sigma} \bigwedge_{0 \leq i_1 \neq i_2 < n}
    \left( g_{\#\sigma=i_1,n}^1 \wedge g_{\#\sigma=i_2,n}^2 \right).
\]
}

\OMIT{
\subsection{Computation of a Candidate Transducer $\Aut$ for Process
  Symmetries}
\label{sec:parikh}

In the special case of a \emph{process symmetry} to be derived, the
additional properties stated in Def.~\ref{def:process} have to be
established for a transducer. Those properties can partly be included
in the \textsc{Synthesise} step:
\begin{description}
\item[(C10)] Data letters only occur in transitions with the same
  input and output, i.e., for every transition~$q \tran{\biword{a}{b}}
  q'$ it is the case that $a = b$ or $a, b \in \palphabet$.
\item[(C11)] The relation~$R$ represented by the transducer only
  relates words with the same Parikh vector, i.e., $R(v, w)$
  implies $\Parikh{v} = \Parikh{w}$.
\end{description}
\philipp{It should be enough to introduce something similar to (C10)
also for data symmetries, no further checks needed?}

\subsubsection{\textsc{Verify}}

Once a transducer satisfying \textbf{(C10)} and \textbf{(C11)} has
been derived, in general a further check is required in the
\textsc{Verify} stage (in addition to the ones from
Sect.~\ref{sec:verify}), to ensure that the last condition of
Def.~\ref{def:process} holds. We assume that $\Sigma =
\{0,\ldots,k\}$. In combination with \textbf{(C10)} and
\textbf{(C11)}, a necessary and sufficient constraint for the
condition in Def.~\ref{def:process} is that the relation~$R$
represented by the transducer satisfies, for all $n,m,n,m' \in \N$:
\begin{eqnarray}
  \notag
    R(0^n 1 0^m,0^{n'} 10^{m'}) & \longrightarrow & \neg\exists v,v',w,w' \in
                                        \Sigma^*, a, b \in \Sigma: \\
                                        \label{eq:permutation}
            & & \qquad a \neq b, |v| = n, |v'| = n', |w| = m, |w'| = m', \\
  \notag
            & & \qquad R(vaw,v'bw').
\end{eqnarray}
As in Sect.~\ref{sec:verify}, this can be expressed as a first-order
formula over automata as follows. For each $i \in \Sigma$, let $B_i$
be the (one-state) transducer that nondeterministically maps 0 to any
letter in $\Sigma$, and maps 1 to
$i$. Condition~\eqref{eq:permutation} then amounts to checking the
validity of the following formula:
\[
    \Big( v,w \in 0^*10^* \wedge R(v,w) \wedge
    \bigvee_{i,j \in \Sigma, i\neq j}[ B_i(v,v') \wedge B_j(w,w')]\Big)
    \longrightarrow \neg R(v',w').
\]
}

\OMIT{
To check this property, it suffices to check its negation:
\[
    \exists x,y,x',y': x,y \in 0^*10^* \wedge R(x,y) \wedge
    \bigvee_{i,j \in \Sigma, i\neq j}[ B_i(x,x') \wedge B_j(y,y')] \wedge
    R(x',y').
\]
This formula can be solved by a standard automata construction. That is, first you
build 4-track NFA (i.e. an NFA over the alphabet $\Sigma^4$) for the quantifier-free
part by applying the standard product constructions of NFAs, i.e., each conjuct
gives rise to a 4-track NFA and you apply product constructions to all these
NFAs. After that, we simply check nonemptiness of the resulting NFA. This
algorithm runs in polynomial-time.

We show to do the above automata construction in more detail.
Let $\Aut_R = (\Sigma^2,\controls,\transrel,q_0,\finals)$ be the transducer for
$R$. Let $\Aut_R^0 = (\{(0,0)\},\controls,\transrel_{|_{(0,0)}},q_0,\finals)$ be
the restriction of the transducer $\Aut_R$ to the label $(0,0)$, i.e.,
$(q,(0,0),q') \in \transrel_{|_{(0,0)}}$ iff $(q,(0,0),q') \in \transrel$.
Moreover, we ensure that for each state in $\Aut_R^0$
there exists an accepting path in $\Aut_R$ on some pairs of words $(v,w)$ with $v,w
\in \{0,1\}^*$ (this is
just to weed out some useless states). So, let $\controls^0$ be the subset
$\controls$ obtained after doing this weeding out.
Let $\Aut := \Aut_R^0 \times \Aut_R$ be the NFA whose:
\begin{itemize}
    \item alphabet is $\Sigma^4$.
    \item control states are $\controls^0 \times \controls$.
    \item transition relation is $(q_1,q_2) \tran{(a,b,c,d)} (q_3,q_4)$ if
        $(q_1,(a,b),q_3) \in \transrel_{|_{(0,0)}}$ and
        $(q_2,(c,d),q_4) \in \transrel$.
    \item Final states are $\finals \times \finals$.
\end{itemize}
Checking satisfaction of the above formula can be divided into three mutually
exclusive cases:
\begin{enumerate}
    \item The occurrence of 1 in $v$ precedes the occurrence of 1 in $w$.
    \item 1 occurs in the same positions in both $v$ and $w$.
    \item The occurrence of 1 in $v$ succeeds the occurrence of 1 in $w$.
\end{enumerate}
To deal with these, let $X$, $Y$, and $Z$ be the sets of states $s$ in $\controls$
such that, respectively, $(s,(1,0),s') \in \transrel$, $(s,(1,1),s')$, and
$(s,(0,1),s')$ for some $s' \in \controls$. We will show how to deal with the
first case. The other cases are similar. To this end, we try each reachable state
$(q_1,q_2)$ in $\Aut$ from $(q_0,q_0)$ with $q_1 \in X$, each $i \in \Sigma$, and
each state
$(q_3,q_4)$ in $\Aut$ such that $(q_1,(1,0),q_3) \in \transrel$
and $(q_2,(i,i'),q_4) \in \transrel$ for some $i' \in \Sigma$. We then try to find a
reachable state $(q_1',q_2')$ from $(q_3,q_4)$ in $\Aut$ such that
$q_1' \in Z$ (i.e. $(q_1',(0,1),q_3') \in \transrel$ and $(q_2',(j,j'),q_4')\in
\transrel$ for some $j' \in \Sigma\setminus\{i\}$, $j \in \Sigma$, and
$(q_3',q_4) \in \controls^0 \times \controls$. If we are
successful, then we only need to make sure that $(q_3',q_4')$ can reach an
accepting state in $\Aut$. This algorithm runs in time $O(\|\Aut_R\|^2)$.

\khanh{
Case 3: To this end, we try each reachable state
$(q_1,q_2)$ in $\Aut$ from $(q_0,q_0)$ with $q_1 \in Z$, each $i \in \Sigma$, and
each state
$(q_3,q_4)$ in $\Aut$ such that $(q_1,(0, 1),q_3) \in \transrel$
and $(q_2,(i,i'),q_4) \in \transrel$ for some $i' \in \Sigma$. We then try to find a
reachable state $(q_1',q_2')$ from $(q_3,q_4)$ in $\Aut$ such that
$q_1' \in X$ (i.e. $(q_1',(1, 0),q_3') \in \transrel$ and $(q_2',(j,j'),q_4')\in
\transrel$ for some $j' \in \Sigma\setminus\{i\}$, $j \in \Sigma$, and
$(q_3',q_4') \in \controls^0 \times \controls$. If we are
successful, then we only need to make sure that $(q_3',q_4')$ can reach an
accepting state in $\Aut$

Case 2: To this end, we try each reachable state
$(q_1,q_2)$ in $\Aut$ from $(q_0,q_0)$ with $q_1 \in Y$, each $i \in \Sigma$, and
each state
$(q_3,q_4)$ in $\Aut$ such that $(q_1,(1, 1),q_3) \in \transrel$
and $(q_2,(i,i'),q_4) \in \transrel$ for some $i' \in \Sigma$. Then we only need to make sure that $(q_3,q_4)$ can reach an
accepting state in $\Aut$
}

\begin{remark}
    To speed up this computation, we can do the following preprocessing once:
    compute the set $G$ of all states $(q,q') \in\controls^0\times\controls$
    that can reach $F \times F$ in $\Aut$. We can reuse this set $G$ several
    times in our calculation.
\end{remark}

Note that if the above property is satisfied, we get a satisfying assignment $\nu:
\{x,y,x',y'\} \to \Sigma^*$. So, the constraint that we feed back into the first
step is:
\[
    \neg R(\nu(x),\nu(y)) \vee \neg R(\nu(x'),\nu(y')).
\]

\subsubsection{Summary} It can be proved that if all the properties specified here are
satisfied, then we have a parameterised symmetry.
}

\OMIT{
\subsection{Guiding Synthesis using Finite Symmetry Instances}
\label{sec:generaliseFinite}

As an important heuristic to guide the generation of symmetries, it is
possible to constrain the search in the \textsc{Synthesise} stage to
those transducers that accept manually defined words $W = \{v_1
\otimes w_1, \ldots, v_k \otimes w_k\}$, using a similar encoding as
the one for counterexamples in Sect.~\ref{sec:counterexamples}. This
method can be used, among others, to systematically search for
parameterised symmetries that generalise some known finite
symmetry: if $g : \Sigma^m \to \Sigma^m$ is a length-preserving
automorphism for words of length~$m$, constraints can be added to
\textsc{Synthesise} to make sure that only transducers that accept the
words of the set~$W = \{ w \otimes g(w) \mid w \in \Sigma^m \}$ are
computed. Every symmetry~$R$ found in this way will coincide with $g$
on words of length~$m$.

In our experience, such additional constraints can also be useful to
drastically reduce the number of required refinement iterations, and
thus the time needed to compute symmetries.
}

\OMIT{
\textbf{(C6)} can be done by encoding the existence/non-existence of paths
    with certain input labels in the
    transducer.
    We'll start with an example. From Example
    \ref{ex:token}, we might be given the symmetries $(1,2)$ and $(1,2,3)$
    for the instances with 2 and 3 processes respectively. For 2 processes, we
    translate them into several words:
    \begin{itemize}
        \item $v_1 = (\bot\top,\top\bot) = \binom{\bot}{\top}\binom{\top}{\bot}$,
        \item $v_2 = (\top\bot,\bot\top) = \binom{\top}{\bot}\binom{\bot}{\top}$,
    \end{itemize}
    For 3 processes, we have
    \begin{itemize}
        \item $v_1 = (\top\bot\bot,\bot\top\bot)$,
        \item $v_2 = (\bot\top\bot,\bot\bot\top)$,
        \item $v_3 = (\bot\bot\top,\top\bot\bot)$.
    \end{itemize}
    We will need to assert in the formula that these words should be accepted by
    the automaton.
    We cover how this is done in Counterexample Generation
    (later on this section). OTOH, all the
    other possibilities shouldn't be there, e.g., $(\bot\top,\bot\top)$,
    $(\top\bot,\top\bot)$, $(\top\bot\bot,\bot\bot\top)$, ...
    We will need to assert that these words should \emph{NOT} be accepted by
    the automaton. Again, we cover how to do this in Counterexample Generation.
    In general, the idea is that we will only need to look at pairs $(v,w)$
    that such that $w$ is the effect of applying the given finite-state symmetry
    on $v$ such that $v,w \in 0^*10^*$ assuming that $\Sigma = \{0,1,\ldots,k\}$.

The formula in the second item above says that each state in $\Aut$ has a path to a
final state.
\anthony{Here, you can easily encode the integer variable $y_q$ by a number of
boolean variables since $y_q \in [0,n-1]$.} We send
$\varphi \wedge \psi$ to the SAT/SMT solver for a satisfying assignment. An
assignment will give us a transducer that is a candidate for a parameterised
symmetry.
}

\OMIT{
\subsection{Counterexample generation}
\label{sec:counterexamples}
Our counterexample constraint is of a very specific form, i.e., a boolean
combination of an atomic formula of the form $(v,w) \in R$ for two words $v,w$
of the same length. There is also something of the form $\exists w: (v,w) \in R$;
we will say about this below. In the meantime, let's say
\begin{eqnarray*}
    v & = & a_1\cdots a_m \\
    w & = & b_1\cdots b_m
\end{eqnarray*}
It suffices to show how to express this as a boolean
formula for the atomic case and its negation. This can be done in a similar way
as for condition (C4). Let us define the following \emph{new} variables
\[
    \{ e_{i,q} : i \in \{1,\ldots,m\}, q \in \controls \}.
\]
\OMIT{
In either case, we will add the following constraint: the conjunction of
\[
    \bigwedge_{i=1}^m \bigwedge_{q,q'\in\controls} \left( e_{i,(q,q')}
            \longrightarrow x_{(q,a_i,b_i,q')}
        \right)
\]
}
In either case, we will add the following constraint: the conjunction of
\[
    \bigwedge_{q \in \controls} \left( x_{(1,a_1,b_1,q)} \longleftrightarrow
        e_{1,q} \right)
\]
and
\[
    \bigwedge_{i=1}^{m-1} \bigwedge_{q,q' \in \controls}
        ( e_{i,q} \wedge x_{(q,a_{i+1},b_{i+1},q')} ) \longrightarrow
            e_{i+1,q'}
\]
and
\[
    \bigwedge_{i=2}^{m} \bigwedge_{q \in \controls}
    e_{i,q} \longrightarrow \bigvee_{q'\in\controls}
        (x_{(q',a_i,b_i,q)} \wedge e_{i-1,q'}) .
\]

\khanh{
Create helper variables $f_{i, q, q'}$ where $2 \leq i \leq m$ and $q, q' \in \controls$ such that $f_{i, q, q'} \longleftrightarrow (x_{(q', a_i, b_i, q)} \land e_{i-1, q'})$.
}

The last conjunct and the ``iff'' in the first conjunct are to ensure that
if $e_{i,q}$ is true, then it has to be supported by a predecessor node.
\OMIT{
\[
    \bigwedge_{i=1}^{m-1} \bigwedge_{q,q'\in\controls}
    \left( e_{i,(q,q')} \longrightarrow
        \bigvee_{q''\in\controls}e_{i+1,(q',q'')}
    \right)
\]
and
\[
    \bigwedge_{i=2}^{m} \bigwedge_{q,q'\in\controls}
    \left( e_{i,(q,q')} \longrightarrow
        \bigvee_{q''\in\controls}e_{i-1,(q'',q)}
    \right).
\]
We can also add the following formula as a conjunct, but this is \emph{optional}
and might potentially help the search \anthony{What do you think, Philipp?}:
\[
    \bigwedge_{i=1}^m \bigwedge_{q_1,q_2\in\controls}
    \bigwedge_{(q_3,q_4)\in(\controls\times\controls)\setminus \{(q_1,q_2)\}}
    \neg e_{i,(q_1,q_2)} \vee \neg e_{i,(q_3,q_4)}.
\]
}
In the case of $(v,w) \in R$, we output the following conjunct:
\[
    \bigvee_{q\in\controls} \left( e_{m,q} \wedge z_q \right).
\]
On the other hand, if we are dealing with $(v,w) \notin R$, we output the final
conjunct, which is a disjunction of
\[
    \bigvee_{q\in\controls} \left( e_{m,q} \wedge \neg z_q \right)
\]
and
\[
    \bigwedge_{q \in\controls} \neg e_{m,q}.
\]
\OMIT{
\[
    \bigwedge_{q,q'\in\controls} \left( e_{m,(q,q')} \longrightarrow \neg z_{q'}
    \right).
\]
}

\philipp{
    This was Philipp's comments from before on how to convert a general
    boolean formula into CNF by introducing fresh variables. It's from the
    previous version:

Add a Boolean variable~$\mathit{accept}_{v, w}$, and the following
constraints:
\begin{gather*}
  \bigwedge_{q,q'\in\controls}
  \neg e_{m,(q,q')} \vee \neg z_{q'} \vee \mathit{accept}_{v, w}
  \\
  \bigwedge_{q,q'\in\controls}
  \neg e_{m,(q,q')} \vee \neg \mathit{accept}_{v, w} \vee z_{q'}
\end{gather*}

Then the variables~$\mathit{accept}_{v, w}$ can be used to state
whether a word is supposed to be accepted or rejected: by adding the
clause $\mathit{accept}_{v, w}$ for acceptance, or the clause
$\neg \mathit{accept}_{v, w}$ for rejection.

We can also express more complex relationship between acceptance of
multiple words, for instance $\mathit{accept}_{v, w} \vee
\mathit{accept}_{v', w'}$ if either $(v, w)$ or $(v', w')$ (or both)
are supposed to be accepted.

}

\khanh{
Adapt Philipp's comments to new implementation.
\begin{gather*}
  \bigwedge_{q\in\controls}
  \neg e_{m,q} \vee \neg z_{q} \vee \mathit{accept}_{v, w}
  \\
  \bigwedge_{q\in\controls}
  \neg e_{m,q} \vee z_{q} \vee \neg \mathit{accept}_{v, w}
\end{gather*}
}

\philipp{
This needs an additional clause now:
\begin{equation*}
  \big( \bigvee_{q \in \controls} e_{m, q} \big) \vee \neg \mathit{accept}_{v, w}
\end{equation*}
}

\begin{remark}
    Note that the counterexample constraints \emph{need not be discarded} even
    after we increment the number $n$ of states.
\end{remark}

We now quickly remark how to deal with constraint of the form $\exists w: (v,w)
\in R$ (the case of $\exists w: (w,v) \in R$ can be handled in the same way).
Let $v = a_1\cdots a_m$ as before.
For this, our variables will have to take into account the second track of the
transducer:
\[
    \{ e_{i,q,b} : i \in \{1,\ldots,m\}, q \in \controls, b \in \Sigma \}.
\]
Each conjunct in the generated boolean formula above can be easily adapted.
We first provide an adaptation of the first three conjuncts, i.e.,
\[
    \bigwedge_{q \in \controls,b_1\in\Sigma} \left( x_{(1,a_1,b_1,q)}
        \longleftrightarrow e_{1,q,b_1} \right)
\]
and
\[
    \bigwedge_{i=1}^{m-1} \bigwedge_{q,q' \in \controls,b,b'\in\Sigma}
        ( e_{i,q,b} \wedge x_{(q,a_{i+1},b',q')} ) \longrightarrow
            e_{i+1,q'}
\]

\khanh{
Should be
\[
    \bigwedge_{i=1}^{m-1} \bigwedge_{q,q' \in \controls,b,b'\in\Sigma}
        ( e_{i,q,b} \wedge x_{(q,a_{i+1},b',q')} ) \longrightarrow
            e_{i+1,q',\textbf{b'}}
\]
}

and
\[
    \bigwedge_{i=2}^{m} \bigwedge_{q \in \controls,b\in\Sigma}
    \left( e_{i,q,b} \longrightarrow \bigvee_{q'\in\controls,b'\Sigma}
        (x_{(q',a_i,b',q)} \wedge e_{i-1,q'}) \right).
\]
Finally, we add the following conjunct:
\[
    \bigvee_{q\in\controls,b\in\Sigma} \left( e_{m,q,b} \wedge z_q \right).
\]

\khanh{
Adapt Philipp's comments to new implementation.
\begin{gather*}
  \bigwedge_{q\in\controls,b\in\Sigma}
  \neg e_{m,q,b} \vee \neg z_{q} \vee \mathit{accept}_{v}
  \\
  \bigwedge_{q\in\controls,b\in\Sigma}
  \neg e_{m,q,b} \vee z_{q} \vee \neg \mathit{accept}_{v}
  \\
  \mathit{accept}_{v}
\end{gather*}
}

\OMIT{
\[
    \bigwedge_{i=1}^m \bigwedge_{q,q'\in\controls} \left( e_{i,(q,b,q')}
        \longrightarrow x_{(q,a_i,b,q')}
        \right)
\]

\khanh{
More precisely for the above formula

$ \bigwedge_{i=1}^m \bigwedge_{q,q'\in\controls} \left[ \bigvee_{b\in\Sigma} \left( e_{i,(q,b,q')}
        \longrightarrow x_{(q,a_i,b,q')}
        \right) \right]$
}
and
\anthonychanged{
\[
    \bigvee_{q\in \controls,b\in\Sigma} e_{1,(1,b,q)}.
\]
}
}

\OMIT{
\khanh{
Other formulae according to Anthony's encoding.

\[
    \bigwedge_{i=1}^{m-1} \bigwedge_{q,q'\in\controls,b \in \Sigma}
    \left( e_{i,(q,b,q')} \longrightarrow
        \bigvee_{q''\in\controls, b' \in \Sigma}e_{i+1,(q',b',q'')}
    \right)
\]
and
\[
    \bigwedge_{i=2}^{m} \bigwedge_{q,q'\in\controls, b \in \Sigma}
    \left( e_{i,(q,b,q')} \longrightarrow
        \bigvee_{q''\in\controls, b' \in \Sigma}e_{i-1,(q'',b',q)}
    \right).
\]

\[
    \bigwedge_{q,q'\in\controls} \left( \bigvee_{b \in \Sigma} (e_{m,(q,b,q')}) \longrightarrow z_{q'} \right).
\]
Please double check the last formula.
}
}

\philipp{
  Here is a simple existential encoding. Importantly, the encoding can
  only be used for positive counterexamples (i.e., assert a constraint
  $\exists w: (v,w) \in R$).

  As before, let us define the following new variables
  \[
  \{ e_{i,q} : i \in \{1,\ldots,m\}, q \in \controls \}.
  \]

  We generally assert that exactly one $e_{i,q}$ is set for each $i$:
  \begin{gather*}
    \bigwedge_i \bigvee_{q \in \controls} e_{i,q}
    \\
    \bigwedge_i \bigwedge_{q \not= q' \in \controls} \neg e_{i,q} \vee \neg e_{i,q'}
  \end{gather*}

  Each level $i$ has to be supported by some enabled transition:
  \begin{gather*}
    \bigwedge_{q \in \controls} \big(
    e_{1,q} \to
    \bigvee_{b \in \ialphabet} x_{(1, a_1, b, q)}
    \big)
    \\
    \bigwedge_{i = 2, \ldots, m} \bigwedge_{q, q' \in \controls} \big(
    e_{i-1,q} \wedge e_{i, q'} \to
    \bigvee_{b \in \ialphabet} x_{(q, a_i, b, q')}
    \big)
  \end{gather*}

  The final state has to be accepting:
  \begin{equation*}
    \bigwedge_{q \in \controls}  e_{m, q} \to z_q
  \end{equation*}
}
}


\section{Implementation and Evaluation}\label{sec:expr}
\anthonychanged{
We have implemented a prototype tool based on the aforementioned approach
for verifying and synthesising regular symmetry patterns.} 
The programming language is Java and we use
SAT4J~\cite{DBLP:journals/jsat/BerreP10} as the SAT solver. The source code and
the benchmarking examples
can be found at \url{https://bitbucket.org/truongkhanh/parasymmetry}. The input of our tool includes a model (i.e.~a textual representation of transducers), 
and optionally a set of finite instance symmetries (to speed up synthesis
of regular symmetry patterns), which can be generated using existing tools 
like~\cite{ASE13}.



\begin{table}[t]
\centering
\caption{Experimental Results on Verifying and Generating Symmetry Patterns }\label{exp1}
\begin{tabular}{@{\quad}l@{\quad}|@{\quad}*{3}{l@{\quad}}}
    \hline
    \textbf{Symmetry Systems (\#letters)} &
    \textbf{\# Transducer states} & 
    \textbf{Verif.\ time} & \textbf{Synth.\ time}\\\hline
	Herman Protocol (2) & 5 & 0.0s & 4s\\
	Israeli-Jalfon Protocol (2) & 5 & 0.0s & 5s\\
	Gries's Coffee Can (4) & 8 & 0.1s & 3m19s\\
	Resource Allocator (3) & 11 & 0.0s & 4m56s\\
	Dining Philosopher (4) & 17 & 0.4s & 26m\\\hline
\end{tabular}
\end{table}

We apply our tool to 5 models: the Herman self-stabilising protocol~\cite{Her90}, Israeli-Jalfon self-stabilising protocol~\cite{IJ90}, the Gries' coffee can example~\cite{Gries}, Resource Allocator, and Dining Philosopher. 
For the coffee can example,
the tool generates the functional symmetry pattern described in 
Section~\ref{sec:framework},
whereas the tool generates rotational process symmetries 
for the other models 
(see the \shortlong{full version}{appendix} for state diagrams).
Finite instance symmetries were added as constraints in the last
three examples.

Table~\ref{exp1} presents the experimental results: the number of
states of the synthesised symmetry transducer, the time needed to verify
that the transducer indeed represents a symmetry pattern (using the
method from Section~\ref{sec:verification}), and the total time needed
to compute the transducer (using the procedure from
Section~\ref{sec:onesym}). The data are obtained using a MacBook Pro
(Retina, 13-inch, Mid 2014) with 3 GHz Intel Core i7 processor and 16
GB 1600 MHz DDR3 memory.
In almost all cases, it takes less than 5 minutes (primarily SAT-solving) to 
find the regular symmetry patterns for all these models.
As expected, the verification step is quite
fast ($< 1$ second).


\section{Future work}
\label{sec:conc}
\OMIT{
We have described an expressive symbolic language based on finite-state 
letter-to-letter word transducers for capturing symmetry patterns for 
parameterised systems that enjoys nice algorithmic properties including
automatic symmetry verification and symmetry synthesis. The framework is
flexible in that constraints can be easily added/removed in the
verification/synthesis step as required by the user. 
}
Describe the expressivity and nice algorithmic properties that regular symmetry
patterns enjoy, we have pinpointed a limitation of regular symmetry patterns in 
expressing certain process symmetry patterns (i.e. reflections) and showed how 
to circumvent it by extending the framework to include symmetry patterns that 
can be recognised
by height-unambiguous pushdown automata. One possible future research direction
is to generalise our symmetry synthesis algorithm to this more general class of
symmetry patterns. Among others, this would require coming up with a syntactic 
restriction of this ``semantic'' class of pushdown automata. 
\OMIT{
We have described how to generate one parameterised symmetry from
the system. In general, the symmetry group of a system might have
multiple generators, e.g., the full symmetry group $S_n$ consisting of all 
permutations over $\{1,\ldots,n\}$ can be generated by the cycle permutations 
$(1,2)$ and $(1,2,\ldots,n)$. A naive way to extend our approach so as
to generate more than one parameterised symmetry is by adding an appropriate 
constraint when guessing the subsequent symmetry (e.g. if we have generated
$(1,\ldots,n)$, we relativise all the conditions in Section \ref{sec:onesym}
to the complement of the first symmetry).
We leave the detail of the implementation and experimental evaluation for future 
work.

%

Symmetries of a system are often represented by their set of group generators,
which is often exponentially more succinct.
For this reason, the task of exploiting the symmetries is as important as detecting 
them.  This takes us to the realm of the \emph{orbit problem} (e.g. see 
\cite{CJEF96,CEJS98,LZ14}), which is as hard as the graph isomorphism problem. 
For future work, we will investigate possibility of using acceleration
techniques from regular model checking on parameterised symmetries
to speed up the orbit problem.
}

\paragraph{Acknowledgment:}
Lin is supported by Yale-NUS Grants, R\"ummer by the Swedish Research Council.
We thank Marty Weissman for a fruitful
discussion.

\bibliographystyle{abbrv}
\bibliography{references}

\shortlong{}{
\appendix

\section{Transducer Representation for Gries's Coffee Can Example}
    We shall first build basic transitions. Firstly, for each
    $i, j \in \N$, we define $\DEC_{i,\geq j}$ (resp. $\INC_{i, \geq j}$) to be
    a transducer over $\{1,\bot\}$ which checks that the current value (i.e.
    the number of 1s) is at least $j$ and decrements (resp. increments) it by
    $i$. For example, the transducer for $\DEC_{1,\geq 2}$ can be defined as 
    follows:
    \begin{center}
    \begin{tikzpicture}[%
    >=stealth,
    shorten >=1pt,
    node distance=2cm,
    on grid,
    auto,
    state/.append style={minimum size=2em},
    thick
  ]
    \node[state] (A)              {};
    \node[state] (B) [right of=A] {};
    \node[state,accepting] (C) [right of=B] {};

    \path[->] (A) +(-1,0) edge (A)
              (A)         edge              node {$1/1$} (B)
              (B)         edge [loop above] node {$1/1$} ()
              (B)         edge [bend left]  node {$1/\bot$} (C)
              (C)         edge [loop above] node {$\bot/\bot$} ();
  \end{tikzpicture}
  \end{center}
  For $\INC_{i,\geq j}$, we will make the assumption that if the buffer
  has reached its maximum capacity (i.e. there is not enough padding symbol 
  $\bot$ in the suffix), then the value stays as is. For example, the transducer
  for $\INC_{1,\geq 0}$ is:
    \begin{center}
    \begin{tikzpicture}[%
    >=stealth,
    shorten >=1pt,
    node distance=2cm,
    on grid,
    auto,
    state/.append style={minimum size=2em},
    thick
  ]
    \node[state,accepting] (B) {};
    \node[state,accepting] (C) [right of=B] {};

    \path[->] 
              (B) +(-1,0) edge (B)
              (B)         edge [loop above] node {$1/1$} ()
              (B)         edge [bend left]  node {$\bot/1$} (C)
              (C)         edge [loop above] node {$\bot/\bot$} ();
  \end{tikzpicture}
  \end{center}
  For each $z \in \{\x,\y\}$, we let $\DEC_{i,\geq j}[z]$ (resp.
  $\INC_{i,\geq j}[z]$) denote the transducer obtained from $\DEC_{i,\geq j}$
  (resp. $\INC_{i,\geq j}$) by replacing every occurrence of $1$ and $\bot$ by
  $1_z$ and $\bot_z$.

  The transducer for the example can then be obtained by connecting
  these basic components. For example, the transducer for case (b) is simply
  the transducer $\INC_{1,\geq 0} \circ \DEC_{2,\geq 2}$, where $\circ$ is
  simply the language concatenation operator (i.e. connecting the final state of
  $\INC_{1,\geq 0}$ with the initial state of $\DEC_{2,\geq 2}$). 

\section{Proof of Proposition \ref{prop:proj}}
    The proof of this proposition is a simple automata construction that
    relies on the fact that regular relations are closed under projections. 
    For simplicity, we shall illustrate this only in the 
    case when 
    $S = \ialphabet^*$ (for a general regular set $S$, we simply need to
    do a simple product construction). Suppose that the transducer for 
    $\to$ is $\Aut$ and 
    that the transducer for $R$ is
    $\Aut' = (\ialphabet_\padding^2,\controls',\transrel',q_0',\finals')$. 
    Let $\struct_1 = \langle R(S); \to_1 \rangle$ be the image of $\struct$
    under $R$. Let $\lambda: (\ialphabet \cup \{\epsilon\}) \to
    \ialphabet_\padding$ be a function such that $\lambda(a) = a$, for each 
    $a \in \ialphabet$, and $\lambda(\epsilon) = \padding$.
    The set $R(S)$ is regular and can be constructed by a simple
    projection operation of $\transrel'$ onto the second component, i.e.,
    define the new transition relation $\transrel_1' \subseteq \controls'
    \times (\ialphabet \cup \{\epsilon\}) \times \controls'$ such that,
    for each $a \in \ialphabet \cup \{\epsilon\}$, $(p,a,q) \in \transrel_1'$ 
    iff $(p,(b,\lambda(a)),q) \in \transrel'$ for some $b \in
    \ialphabet$. The 
    transducer for $\to_1$ can then be obtained by restricting $\to$
    to $R(S)$ by means of product automata construction.  The construction of 
    automata presentation of $\struct_1$ in fact runs in polynomial-time. 

\section{Finishing proof of Theorem \ref{th:paramreg}}
    We first show that $\mathcal{F}$ is effectively regular. Given an alphabet
    $\ialphabet$, the function $f_{\mathcal{F}}$ maps the word 
    $a_1\ldots a_n \in \ialphabet^n$ to $a_2a_1a_3a_4\ldots a_n$, i.e., the 
    first two letters
    are swapped. The transducer for $f_{\mathcal{F}}$ is
    $(\ialphabet^2,\controls,\transrel,q_0,\finals)$, where $\controls =
    \{q_0,p\} \cup \{q_v : v \in \ialphabet^2\}$, $\finals = \{ p \}$, and 
    $\transrel$ consists of the following transitions: (1) $(q_0,v,q_v)$ for 
    each $v \in \ialphabet^2$, (2) $(q_{(a,b)},(b,a),p)$ for each 
    $(a,b) \in \ialphabet^2$, and (3) $(p,(a,a),p)$ for each $a \in 
    \ialphabet$. This construction is effective (in fact, can be performed
    in $O(|\ialphabet|^2)$.

    We now show that $\mathcal{F}'$ is effectively regular. Given an alphabet
    $\ialphabet$, the function $f_{\mathcal{F}'}$ maps the word 
    $a_1\ldots a_n$ to $a_n a_1 a_2 \ldots a_{n-1}$. The transducer for 
    $f_{\mathcal{F}'}$ is $(\ialphabet^2,\controls,\transrel,q_0,\finals)$,
    where $\controls = \{q_0\} \cup \{ q_v : v \in \ialphabet^2 \}$,
    $\finals = \{q_F\}$, and $\transrel$ consists of the following
    transitions: (1) $(q_0,v,q_v) \in \transrel$ for each $v\in \ialphabet^2$,
    and (2) $(q_{(a,b)},(a',a),q_{(a',b)}) \in \transrel$ for each $a' \in 
    \Sigma$, and (3) $(q_{(a,b)},(b,a),q_F)$ for each $(a,b) \in \ialphabet^2$.

\section{Proof of Proposition \ref{prop:dihedralreg}}
    The proof is by contradiction. Suppose that $\dihgroup = \{D_n\}_{n \geq 2}$
    are generated by finitely many regular parameterised permutations 
    $\bar\pi^1, \ldots,\bar\pi^k$, say, with
    $\bar\pi^j = \{\pi^j_n\}_{n \geq 2}$. Let $\ialphabet = \{a,b\}$. Then,
    for each $j\in[k]$, there is a transducer $\Aut_j$ recognising 
    $f_{\pi^j}: \ialphabet^* \to \ialphabet^*$ such that 
    $f_{\pi^j}(\vecV) = \pi^j_n\vecV$ whenever $\vecV \in \ialphabet^n$.
    Let $m$ the maximum number of states among all the transducers $\Aut_j$.
    Let $N := (m+1)^3$, which as we shall see will make our pumping argument 
    works.

    Let $r = (1,2,\ldots,n)$ and $s = (1,n)(2,n-1)\cdots (\lfloor
    n/2\rfloor,\lceil n/2\rceil)$. Each instance $D_n$ in $\dihgroup = 
    \{D_n\}_{n \geq 2}$ is generated by $r$ and $s$. The group 
    consists of the elements
    \[
        r^0, r^1, \ldots, r^{n-1}, s, sr^1, \ldots, sr^{n-1}.
    \]
    Any generating set $X$ for $D_n$ must contains at least one of
    $s, sr^1, \ldots, sr^{n-1}$. Let $M = 4N$. Then, for some $j \in [k]$,
    the permutation $\pi^j_M$ is $sr^l$ for some $l \in [0,M-1]$. There
    are two cases:
    \begin{itemize}
        \item $l$ is even, i.e., $l = 2h$ for some integer $h$. Then, 
            \[ 
                sr^l = (h,h+1)(h-1 \pmod{M},h+2)\cdots (h-\frac{M}{2}
                \pmod{M},h+\frac{M}{2}+1).
            \]
            Visually, this symmetry
            is simply a reflection symmetry on an $M$-gon with vertices 
            labeled $1,\ldots,M$ along the line connecting the 
            middle point between $h$ and $h+1$ and the center of the polygon.
        \item $l$ is odd, i.e., $l = 2h+1$ for some integer $h$. Then, 
            \[
                sr^l = (h,h+2)(h-1 \pmod{M},h+3)\cdots (h-\frac{M}{2}-1
                \pmod{M},h+\frac{M}{2}+1).
            \]
            Visually,
            this symmetry is simply a reflection symmetry on an $M$-gon with 
            vertices labeled $1,\ldots,M$ along the line connecting 
            $h+1$ and the center of the polygon.
    \end{itemize}
    In the following, we will consider only the first case since the second 
    case can be dealt with using the same argument. In this case, observe that
    $sr^l$ maps $h+N+1$ to $h-N \pmod{M}$ and vice versa. We assume that 
    $h+N+1 < h-N \pmod{M}$. The other case can be dealt with using the same
    argument. In this case, let $t_1 = h+N+1$ and $t_2 = h-N \pmod{M}$.
    Consider the word $w = a_1\cdots a_M \in \ialphabet^*$ such that 
    \[
        a_i = \left\{ \begin{array}{cc}
                (b,a) & \text{ if $i = t_1$, } \\
                (a,b) & \text{ if $i = t_2$, } \\
                (a,a) & \text{ otherwise. }
                      \end{array}
                \right.
    \]
    Therefore, $w = w_1(b,a)w_2(a,b)w_3$, for some $w_1, w_2, w_3 \in (a,a)^*$.
    Notice also that by definition $|w_2| = N$ and $|w_1| + |w_3| = 3N-2$.
    There are two cases: $|w_1| < |w_3|$ or $|w_1| \geq |w_3|$. We shall
    consider only $|w_1| < |w_3|$; the other case can be treated in the same
    way. Consider an accepting run $\sigma$ of $\Aut_j$ on $w$. The run can
    be divided into five segments: $\sigma = \sigma_1 \odot \gamma_1 \odot 
    \sigma_2 
    \odot \gamma_2 \odot \sigma_3$, where $\sigma_i$ is a run segment of 
    $\Aut_j$ on $w_i$ (when $i = 1$, from the initial state),
    $\gamma_1$ is a transition on input $(b,a)$, and 
    $\gamma_2$ is a transition on input $(a,b)$.

    In the following, we shall show that, for some $w_2', w_3' \in (a,a)^*$ with
    $w_i' \neq w_i$, the word $w_1(b,a)w_2'(a,b)w_3'$ is also accepted
    by $\Aut_j$, contradicting that $f_{\bar \pi^j}$ is a bijective function.
    This is done by a pumping argument. Since $|w_2| > m$, then a state
    $q$ of the transducer is visited twice in $\sigma_2$ (e.g. after consuming
    $m' \leq m$ number of $a$s). Let us now analyse the path $\sigma_3$,
    which is of length $> N$. Let us further subdivide $\sigma_3$ into
    at least $m^2$ contiguous segments of length $m+1$. Since one state
    of the transducer is visited twice in each segment (after consuming
    at most $m$ number of $a$s) and that there are at least $m^2$ disjoint 
    segments, we can infer that at least $m'$ of these segments have cycles
    of the same length $L$. We can construct $w_2'$, $w_3'$ and their
    accepting runs $\sigma_2'$ and $\sigma_3'$ by simply removing $m'$ cycles 
    of length $L$ from $\sigma_3$ and repeating the cycle of length 
    $m'$ in $\sigma_2$ precisely $L$ times.

\section{Boolean encoding of the conditions in Section~\ref{sec:guessing}}
\label{app:encoding}

\paragraph{\textbf{(C3)}, \textbf{(C4)}} We assume additional
variables~$y_q, y_q'$ (for each $q \in \controls$) ranging over the
interval $[0,n-1]$ to encode the distance of state from an accepting
or from the initial state. The following formulae (instantiated for
each $q \in \controls$) define the value of the variables, and imply
that every state is only finitely many transitions away from the
initial and some accepting state:
\begin{align*}
  & z_q \to y_q = 0, &&
  \neg z_q \to \bigvee_{a, b \in \ialphabet, q' \in \controls}
  \big(x_{(q, a, b, q')} \wedge y_q = y_{q'} + 1\big)
  \\
  & q = 1 \to y'_q = 0,
  && q \not= 1 \to \bigvee_{a, b \in \ialphabet, q' \in \controls}
  \big(x_{(q', a, b, q)} \wedge y'_q = y'_{q'} + 1\big)
\end{align*}
In the third constraint, recall that $q_0 = 1$ is the initial state.
The integer variables can further be encoded as a vector of Boolean
variables and unary or binary representation.

\paragraph{\textbf{(C6)}}

Since we search for transducers in which every state is reachable from
the initial state, and in which an accepting state is reachable from
every state, \textbf{(C6)} reduces to the existence of some cycle in
the automaton. Because the transducer is furthermore supposed to
represent a total function, the input letters of the transitions on
the cycle can even be fixed to an arbitrary letter~$a_0 \in
\ialphabet$. To encode the condition, we assume additional Boolean
variables $c_{(q, b, q')}$ for each tuple $(q,b,q') \in \controls
\times \ialphabet \times \controls$; the assignment~$c_{(q, b, q')} =
1$ will represent the fact that the transition~$t = (q, a_0, b, q')$
occurs on some infinite path in the automaton (i.e., on a path on
which at least one state occurs infinitely often). This is expressed
by the following set of Boolean formulae, stating that at least one
$c_{(q, b, q')}$ is set, that each $c_{(q, b, q')}$ implies that the
corresponding transition is present in the transducer, and that each
$c_{(q, b, q')}$ has a successor transition:
\begin{gather*}
  \bigvee_{(q, b, q')} c_{(q, b, q')},
  \qquad
  \Big\{ c_{(q, b, q')} \to x_{(q, a_0, b, q')}
  \Big\}_{q, q' \in \controls, b \in \ialphabet}~,
  \\
  \Big\{ c_{(q, b, q')} \to \bigvee_{q'' \in \controls, b' \in \ialphabet} c_{(q', b', q'')}
  \Big\}_{q, q' \in \controls, b \in \ialphabet}
\end{gather*}

\paragraph{\textbf{(C11)}}
The main idea when encoding \textbf{(C11)} as a set of Boolean
formulae is to consider the \emph{Parikh difference} between input and
output letters on paths in the transducer~$\Aut$; importantly, we can
observe that this difference, for paths starting in the initial state,
is independent of the path itself and only depends on the last state
of the path:
\begin{lemma}
  Suppose~$\Aut$ is a deterministic transducer satisfying
  \textbf{(C4)} and \textbf{(C11)}, and let $\rho_1$ and $\rho_2$ be
  paths in $\Aut$ that start in the initial state~$q_0$ and end in the
  same (not necessarily accepting) state~$q$. Let $v_1 \otimes w_1$
  and $v_2 \otimes w_2$ be the words generated by $\rho_1$ and
  $\rho_2$, respectively. Then $\Parikh{w_1} - \Parikh{v_1}
  = \Parikh{w_2} - \Parikh{v_2}$.
\end{lemma}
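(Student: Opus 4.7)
The plan is to extend $\rho_1$ and $\rho_2$ with a \emph{common} suffix path to an accepting state, apply the Parikh invariance from \textbf{(C11)} to the resulting accepting paths, and then cancel out the contribution of the common suffix. This is a standard ``complete-then-subtract'' argument that exploits the fact that the Parikh-difference along a path is additive along path concatenation.

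More concretely, I would proceed as follows. By condition \textbf{(C4)}, from the state $q$ there exists a path $\sigma$ in $\Aut$ leading to some accepting state~$q_f$; let $v \otimes w$ be the word generated along $\sigma$. Consider now the two paths $\rho_1 \odot \sigma$ and $\rho_2 \odot \sigma$ in $\Aut$. Both start at $q_0$ and end at the same accepting state~$q_f$, hence both are accepting runs, and they generate respectively the words $(v_1 v) \otimes (w_1 w)$ and $(v_2 v) \otimes (w_2 w)$. Since these are in the relation represented by $\Aut$, condition \textbf{(C11)} gives
\begin{equation*}
    \Parikh{v_1 v} = \Parikh{w_1 w}
    \quad\text{and}\quad
    \Parikh{v_2 v} = \Parikh{w_2 w}.
\end{equation*}

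Using additivity of the Parikh map under concatenation, $\Parikh{xy} = \Parikh{x} + \Parikh{y}$, each of the two equalities rewrites as $\Parikh{v_i} + \Parikh{v} = \Parikh{w_i} + \Parikh{w}$, i.e.
\begin{equation*}
    \Parikh{w_i} - \Parikh{v_i} \;=\; \Parikh{v} - \Parikh{w}
    \qquad (i=1,2).
\end{equation*}
The right-hand side depends only on the common suffix path $\sigma$, not on $i$, so the left-hand sides for $i=1$ and $i=2$ coincide, giving exactly $\Parikh{w_1} - \Parikh{v_1} = \Parikh{w_2} - \Parikh{v_2}$, as required.

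There is essentially no hard step here: determinism of $\Aut$ is not really used in the argument (it is already built into the statement via the single path from $q_0$ to $q$), and the only nontrivial ingredient is invoking \textbf{(C4)} to guarantee the existence of a completing suffix~$\sigma$. The subtle point worth checking is simply that \textbf{(C11)} applies to the concatenated words, which follows because $\rho_i \odot \sigma$ is a genuine accepting run of $\Aut$ and so the pair of projected input/output words lies in the relation $R$ represented by $\Aut$.
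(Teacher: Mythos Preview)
Your proof is correct and follows essentially the same approach as the paper's own proof: extend both paths by a common suffix to an accepting state via \textbf{(C4)}, apply \textbf{(C11)} to the two resulting accepted words, and cancel the common contribution using additivity of the Parikh map. Your observation that determinism is not actually needed in the argument is also accurate.
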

\begin{proof}
  Thanks to \textbf{(C4)}, there is a path~$\rho$ from $q$ to some
  accepting state; i.e., $\rho_1 \cdot \rho$ and $\rho_2 \cdot \rho$
  are two accepting paths of $\Aut$. Let $v_1\cdot v \otimes w_1 \cdot
  w$ and $v_2\cdot v \otimes w_2 \cdot w$ be the corresponding
  accepted words. Due to \textbf{(C11)}, we have $\Parikh{v_i\cdot v}
  = \Parikh{w_i \cdot w}$ for $i \in \{1, 2\}$, and thus
  $\Parikh{v_i\cdot v} - \Parikh{w_i \cdot w} = \Parikh{v_i}
  + \Parikh{v} - \Parikh{w_i} - \Parikh{w} = 0$, which implies the
  conjecture.  \qed
\end{proof}

This means that every state~$q$ in a transducer~$\Aut$ satisfying
\textbf{(C11)} has a fixed vector~$d_q$ of Parikh differences, which
can be determined by looking at an arbitrary path from the initial
state to $q$ (and then computed as $d_q = \Parikh{w} - \Parikh{v}$,
for the word~$v \otimes w$ accepted by the path). The initial and
every accepting state has Parikh difference~$d_q = 0$.

This leads to a simple encoding of \textbf{(C11)}: we introduce
variables~$d_q^a$ for every state~$q \in \controls$ and every
character~$a \in \ialphabet$, ranging over the set~$\{-n+1, \ldots,
n-1\}$ of integers. Variable~$d_q^a$ will represent the Parikh
difference for letter~$a$ in state~$q \in \controls$, which is
uniquely determined by a set of constraints about the Parikh difference for
initial and final states, as well as the effect of transitions:
\begin{gather*}
  \big\{ d_{q_0}^a = 0 \big\}_{a \in \ialphabet}~,\quad
  \big\{ z_q \to d_q^a = 0\big\}_{a \in \ialphabet,\; q \in \controls}~,
  \\
  \Big\{ x_{(q, a, b, q')} \to \big(
  d_{q'}^a = d_q^a + 1 \wedge d_{q'}^b = d_q^b - 1
  \big) \Big\}_{\substack{q, q' \in \controls \\ a \not= b \in \ialphabet}}~,
  \\
  \Big\{ x_{(q, a, a, q')} \to   d_{q'}^a = d_q^a
  \Big\}_{\substack{q, q' \in \controls\\ a \in \ialphabet}}~,\quad
  \Big\{ x_{(q, a, b, q')} \to d_{q'}^c = d_q^c
  \Big\}_{\substack{q, q' \in \controls\\ a, b \in \ialphabet \\ c \in \ialphabet\setminus\{a, b\}}}~.
\end{gather*}

\section{Descriptions of protocols}
\label{sec:protocols}

\subsection{Herman's protocol}
    Here is an example based on a nondeterministic version of Herman's
    self-stabilising protocol (only for odd numbers of processes). Again, it is
    very similar to before (i.e. in a ring) and passes tokens around. The variation
    here is that a process can pass a token
    to its right neighbour or keep it, but not to the left neighbour. Two tokens
    are merged as soon as they are held by the same process (i.e. same as before).
    A formalisation of
    this protocol as a transducer goes as before, i.e., it is a union of:
\begin{itemize}
\item $I^* \binom{\bot}{\top}\binom{\top}{\bot} I^*$
\item $I^* \binom{\top}{\top}\binom{\top}{\bot} I^*$
\item $\binom{\top}{\bot}I^*\binom{\bot}{\top}$
\item $\binom{\top}{\bot}I^*\binom{\top}{\top}$
\end{itemize}
    Notice that here you only have the rotation symmetry, but not the reflection
    symmetry.

    \OMIT{
\subsection{A variant of the producer-consumer protocol with buffers}
    This example is motivated by producer-consumer problem that \emph{forgets}
    to take into account the integer buffer. We have an
    integer buffer with $k$ bits, $m$ producers, and $n$ consumers. A producer
    has two states: an idle state $I_P$ and a produce state $P$. Similarly,
    a consumer has two states: an idle state $I_C$ and a consume state $C$.
    At any given step, a producer or a consumer is chosen. Then, if it is in
    idle state, it goes to state $P$ and increment the buffer by
    1 (if it overflows, it simply goes back to 0). If it is in the state $P$,
    it will go back to state $I_P$ (if chosen). The case of consumer is defined
    in a similar way, but if it's chosen it can go to state $C$ and decrement the
    buffer by 1 (again, the buffer goes to $1^k$ if initially it's $0^k$).

    We can implement this example as a transducer as follows. We represent
    the buffer starting with the least significant digit, i.e., in a reverse
    order. For example, the number 8 will be represented as 0001. 
}	

\section{Transducers Found From Our Program}
Figures~\ref{example1}-\ref{coffeecan} show the transducers generated from our 
program for all the systems used in our experiments.

\begin{figure}[t]
\centering
\includegraphics[scale = 0.5]{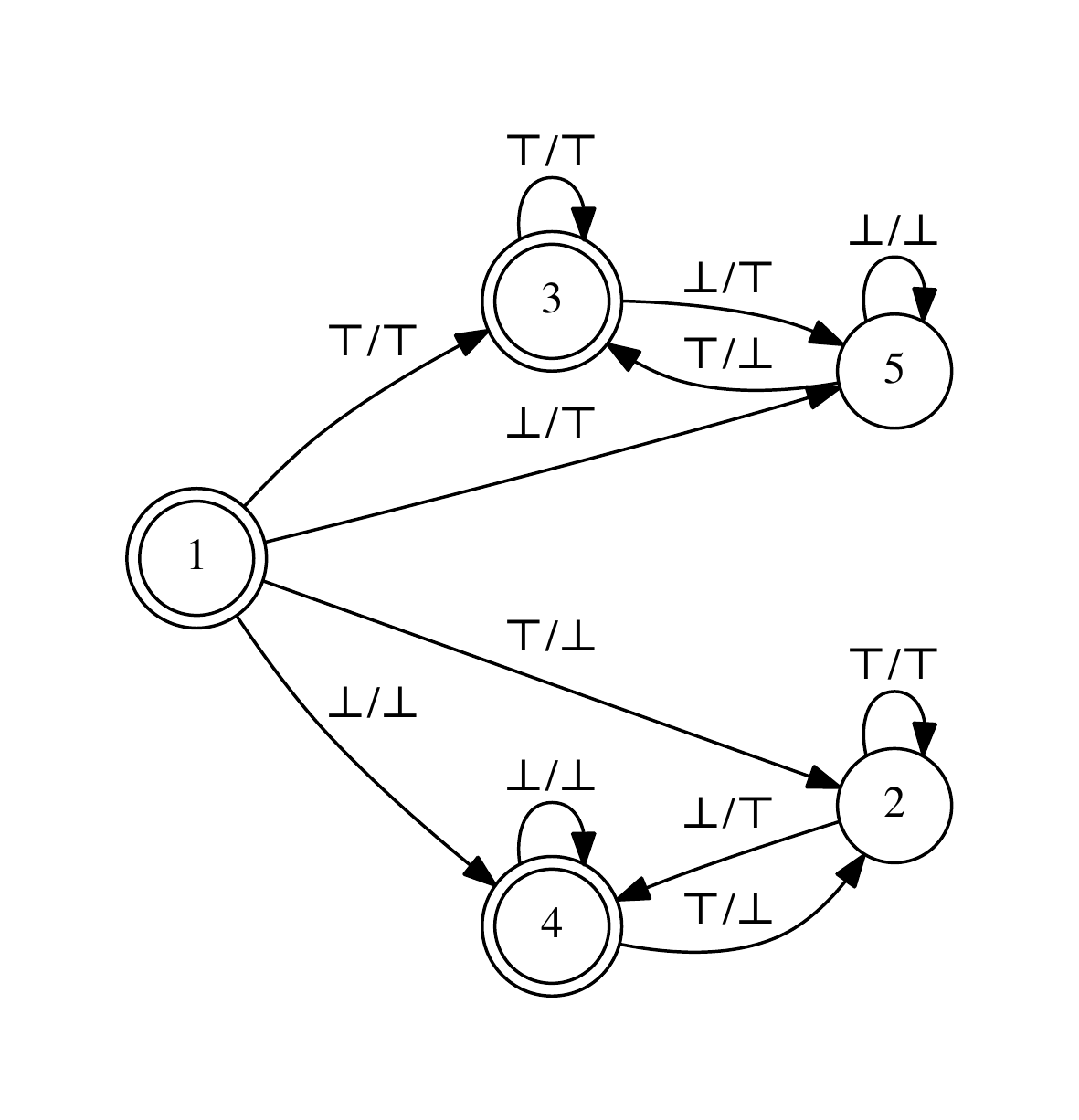}
\caption{Transducer of Herman Protocol Example}
\label{herman}
\end{figure}

\begin{figure}[t]
\centering
\includegraphics[scale = 0.5]{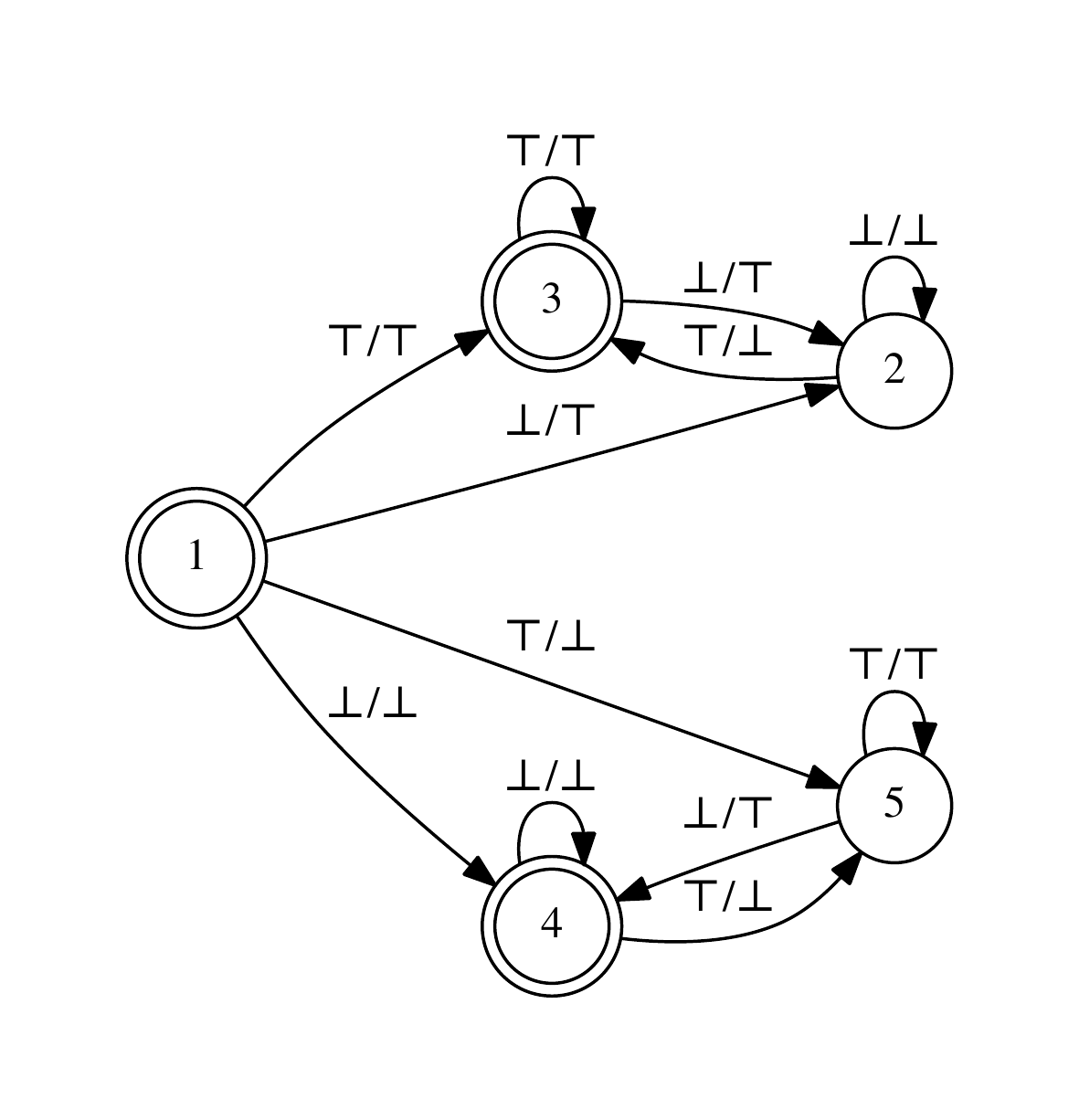}
\caption{Transducer of Israeli-Jalfon Protocol Example}
\label{example1}
\end{figure}

\begin{figure}[t]
\centering
\includegraphics[scale = 0.5]{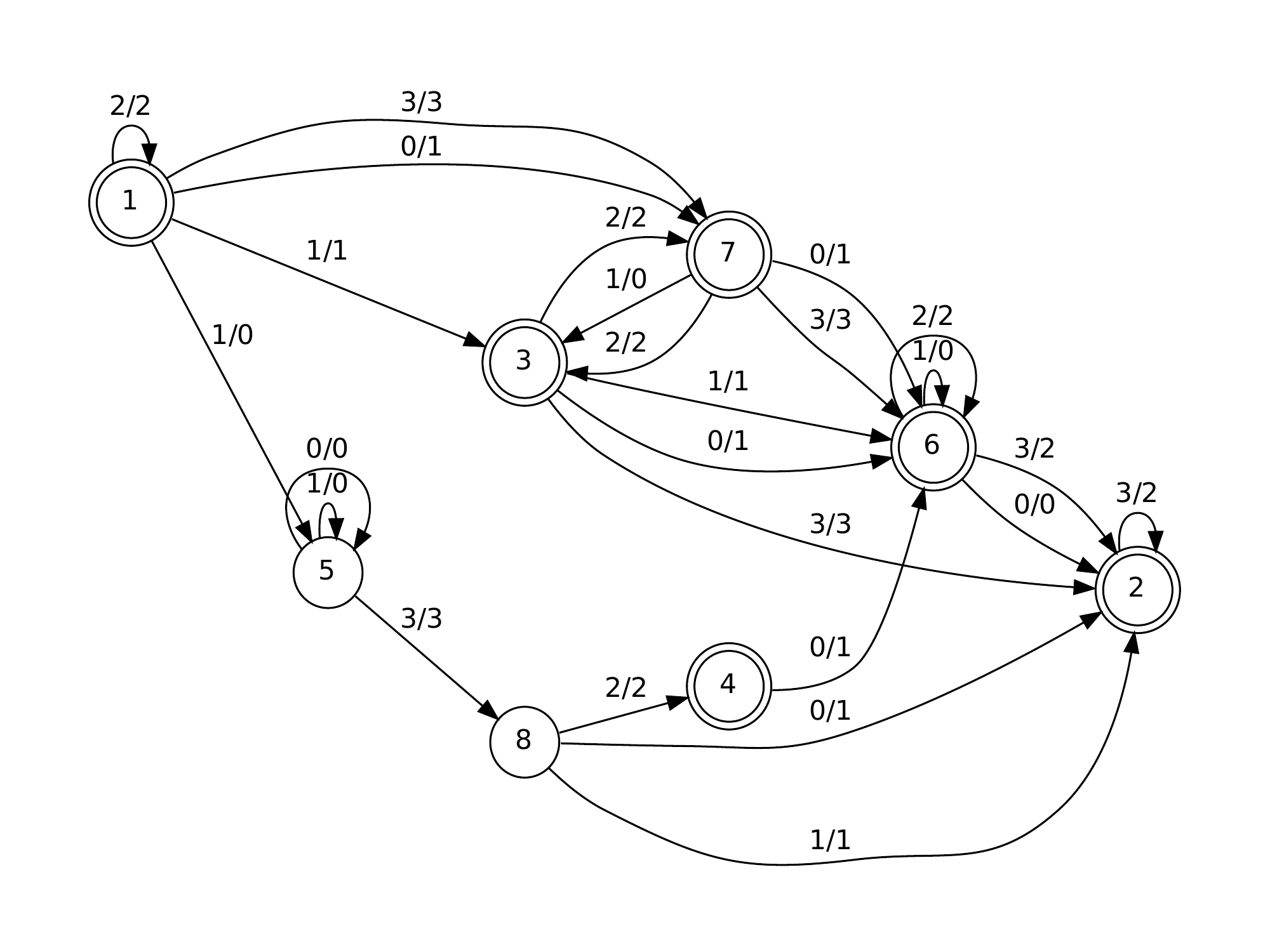}
\caption{Transducer of Gries’s Coffee Can Example}
\label{coffeecan}
\end{figure}

\begin{figure}[t]
\centering
\includegraphics[scale = 0.9]{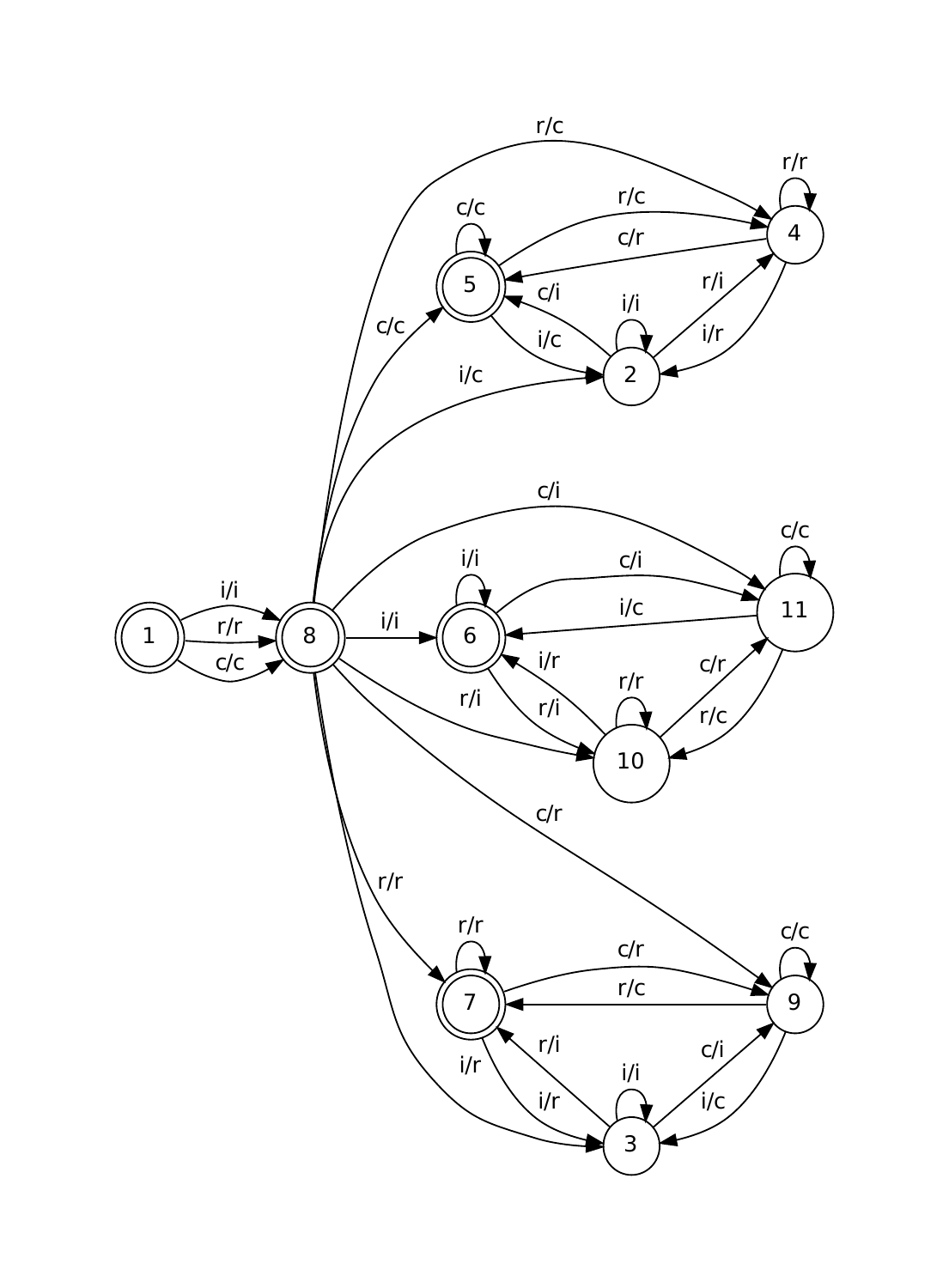}
\caption{Transducer of the Resource Allocator Example}
\label{consumerproducer}
\end{figure}

\begin{figure}[t]
\centering
\includegraphics[scale = 1.2]{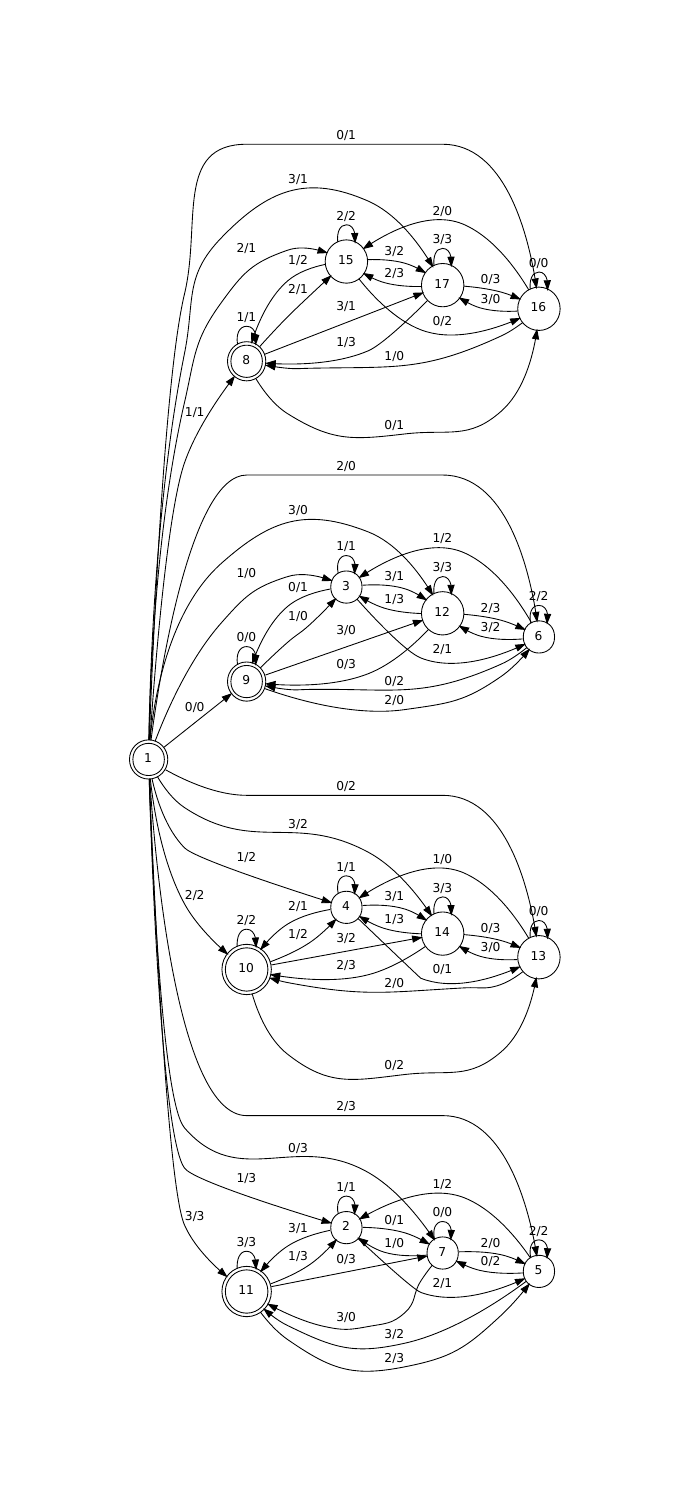}
\caption{Transducer of Dining Philosopher Example}
\label{DP}
\end{figure}

}

\end{document}